\newtheorem{definition}{Definition} 
\newtheorem{prop}[definition]{Proposition}
\newtheorem{lemma}[definition]{Lemma}
\newtheorem{thm}[definition]{Theorem}
\newtheorem{corollary}[definition]{Corollary}
\newtheorem*{rep@theorem}{\rep@title}
\newcommand{\newreptheorem}[2]{%
\newenvironment{rep#1}[1]{%
 \def\rep@title{#2 \ref{##1} (restatement)}%
 \begin{rep@theorem}}%
 {\end{rep@theorem}}}
\def\ba#1\ea{\begin{align}#1\end{align}}
\def\ban#1\ean{\begin{align*}#1\end{align*}}
\newcommand{\ot}{\otimes}
\newcommand{\be}{\begin{equation}}
\newcommand{\ee}{\end{equation}}
\def\cor{\text{Cor}}
\def\benum{\begin{enumerate}}
\def\eenum{\end{enumerate}}
\def\squareforqed{\hbox{\rlap{$\sqcap$}$\sqcup$}}
\def\qed{\ifmmode\squareforqed\else{\unskip\nobreak\hfil
\penalty50\hskip1em\null\nobreak\hfil\squareforqed
\parfillskip=0pt\finalhyphendemerits=0\endgraf}\fi}
\def\endenv{\ifmmode\;\else{\unskip\nobreak\hfil
\penalty50\hskip1em\null\nobreak\hfil\;
\parfillskip=0pt\finalhyphendemerits=0\endgraf}\fi}
\newcommand{\bra}[1]{\langle #1|}
\newcommand{\ket}[1]{|#1\rangle}
\newcommand{\braket}[2]{\langle #1|#2\rangle}
\newcommand{\tr}{\text{tr}}
\newcommand{\id}{\mathbb{I}}
\newcommand{\<}{\langle}
\renewcommand{\>}{\rangle}
\def\id{{\operatorname{id}}}
\DeclareMathOperator{\rank}{rank}
\def\be{\begin{equation}}
\def\ee{\end{equation}}
\def\ben{\begin{eqnarray}}
\def\een{\end{eqnarray}}
\def\ot{\otimes}
\def\bei{\begin{itemize}}
\def\eei{\end{itemize}}
\def\E{{\mathbb{E}}}
\def\ep{\epsilon}
\mathchardef\ordinarycolon\mathcode`\:
\def\vcentcolon{\mathrel{\mathop\ordinarycolon}}
\newcommand{\nc}{\newcommand}
 \nc{\proj}[1]{|#1\rangle\!\langle #1 |} 
\nc{\avg}[1]{\langle#1\rangle}
\nc{\conv}{\operatorname{conv}}
\nc{\smfrac}[2]{\mbox{$\frac{#1}{#2}$}} \nc{\Tr}{\operatorname{Tr}}
\nc{\ox}{\otimes} \nc{\dg}{\dagger} \nc{\dn}{\downarrow}
\nc{\lmax}{\lambda_{\text{max}}}
\nc{\lmin}{\lambda_{\text{min}}}
\nc{\csupp}{{\operatorname{csupp}}}
\nc{\qsupp}{{\operatorname{qsupp}}} \nc{\var}{\operatorname{var}}
\nc{\rar}{\rightarrow} \nc{\lrar}{\longrightarrow}
\nc{\poly}{\operatorname{poly}}
\nc{\polylog}{\operatorname{polylog}} \nc{\Lip}{\operatorname{Lip}}
\nc{\Om}{\Omega}
\nc{\wt}[1]{\widetilde{#1}}
\def\>{\rangle}
\def\<{\langle}
\nc{\glneq}{{\raisebox{0.6ex}{$>$}  \hspace*{-1.8ex} \raisebox{-0.6ex}{$<$}}}
\nc{\gleq}{{\raisebox{0.6ex}{$\geq$}\hspace*{-1.8ex} \raisebox{-0.6ex}{$\leq$}}}
\nc{\vholder}[1]{\rule{0pt}{#1}}
\nc{\wh}[1]{\widehat{#1}}
\nc{\h}[1]{\widehat{#1}}
\nc{\ob}[1]{#1}
\def\beq{\begin {equation}}
\def\eeq{\end {equation}}
\def\be{\begin{equation}}
\def\ee{\end{equation}}
\nc{\eq}[1]{(\ref{eq:#1})} 
\nc{\eqs}[2]{\eq{#1} and \eq{#2}}
\nc{\eqn}[1]{Eq.~(\ref{eqn:#1})}
\nc{\eqns}[2]{Eqs.~(\ref{eqn:#1}) and (\ref{eqn:#2})}
\nc{\region}{\cS\cW}
\begin{document}

\title{{\Large Exponential Decay of Correlations Implies Area Law }}

\author{Fernando G.S.L. Brand\~ao}
\email{fgslbrandao@gmail.com}
\affiliation{Institute for Theoretical Physics, ETH Z\"urich, 8093 Z\"urich, Switzerland}

\author{Micha\l{} Horodecki}
 \email{fizmh@ug.edu.pl}
\affiliation{Institute for Theoretical Physics and Astrophysics, University of Gda\'nsk, 80-952 Gda\'nsk, Poland}


\begin{abstract}

We prove that a finite correlation length, i.e. exponential decay of correlations, implies an area law for the entanglement entropy of quantum states defined on a line.
The entropy bound is exponential in the correlation length of the state, thus reproducing as a particular case Hastings proof of an area law for groundstates of 1D gapped Hamiltonians. 

As a consequence, we show that 1D quantum states with exponential decay of correlations have an efficient classical approximate description as a matrix product state of polynomial bond dimension, thus giving an equivalence between injective matrix product states and states with a finite correlation length. 


The result can be seen as a rigorous justification, in one dimension, of the intuition that states with exponential decay of correlations, usually associated with non-critical phases of matter, are simple to describe. It also has implications for quantum computing: It shows that unless a pure state quantum computation involves states with long-range correlations, decaying at most algebraically with the distance, it can be efficiently simulated classically. 

The proof relies on several previous tools from quantum information theory -- including entanglement distillation protocols achieving the hashing bound, properties of single-shot smooth entropies, and the quantum substate theorem -- and also on some newly developed ones. In particular we derive a new bound on correlations established by local random measurements, and we give a generalization to the max-entropy of a result of Hastings concerning the saturation of mutual information in multiparticle systems. The proof can also be interpreted as providing a limitation on the phenomenon of data hiding in quantum states.

\end{abstract}

\maketitle

\parskip .75ex


\section{Introduction}

Quantum states of many particles are fundamental to our understanding of many-body physics. Yet they are extremely daunting objects, requiring in the worst case an exponential number of parameters in the number of subsystems to be even approximately described. How then can multiparticle quantum states be useful for giving predictions to physical observables? The intuitive explanation, based on several decades of developments in condensed matter physics and more recently also on complementary input from quantum information theory, is that physically relevant quantum states, defined as the ones appearing in nature, are usually much simpler than generic quantum states. In this paper we prove a new theorem about quantum states that gives further justification to this intuition. 

One physical meaningful way of limiting the set of quantum states is to put restrictions on their \textit{correlations}. Given a bipartite quantum state $\rho_{XY}$, we can quantify the correlations between $X$ and $Y$ by 
\begin{equation}
\label{eq:cor}
\text{Cor}(X:Y) := \max_{\Vert M \Vert \leq 1, \Vert N \Vert \leq 1} |\tr((M \otimes N)(\rho_{XY} - \rho_{X} \otimes \rho_{Y}))|.
\end{equation}
where $\Vert M \Vert$ is the operator norm of $M$. Such a correlation function generalizes the more well-known two-point correlation function, widely studied in condensed matter physics, in which both $X$ and $Y$ are composed of a single site.

We say a quantum state $\rho_{1, ..., n}$ composed of $n$ qubits defined on a finite dimensional lattice has $(\xi, l_0)$-exponential decay of correlations if
for every $l \geq l_0$ and every two regions $X$ and $Y$ separated by more than $l$ sites (see Fig. \ref{fig00}),
\begin{equation}
\text{Cor}(X:Y) \leq 2^{- l/\xi}.
\end{equation} 
Here $\xi$ is the correlation length of the state and $l_0$ the minimum length for which correlations start decreasing. Such a form of exponential decay of correlations is sometimes also termed the exponential clustering property (see e.g. \cite{AHR62, Fre85, NS06}).

\begin{figure}  
\begin{center}  
\includegraphics[height=7cm,width=10cm,angle=0]{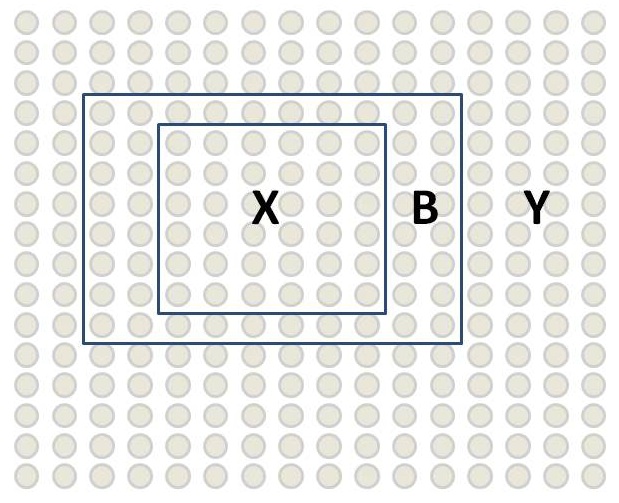}  
\caption{\small Example of a partition of the lattice into regions $X$, $B$ and $Y$. The separation between $X$ and $Y$ in the example consists of one site. \label{fig00}}  
\end{center}  
\end{figure}  

An important class of quantum states with exponential decay of correlations are groundstates of non-critical, gapped, local quantum Hamiltonians. In a seminal work, Hastings proved that in any fixed dimension the groundstate of a gapped local Hamiltonian exhibits exponential decay of correlations, with $l_0$ of  unit order and the correlation length $\xi$ of order of the inverse spectral gap of the model \cite{Has04} (see also \cite{Has04b, Has04c, HT06, NS06} for subsequent developments and \cite{AHR62, Fre85} for a previous analogous result in the context of relativistic quantum systems). Recently a simpler proof of exponential decay of correlations, of a combinatorial flavour, was given in Ref. \cite{AALV10} for the class of frustration-free models with a unique groundstate. 

Another physically motivated way of limiting the set of quantum states is to put restrictions on their \textit{entanglement} \cite{HHHH08}. Given a quantum state $\ket{\psi}$ we say it satisfies an area law for entanglement entropy if for every region $X$, $S(\rho_X) \leq O(\partial X )$, where $\rho_X$ is the reduced density matrix of $\ket{\psi}$ in the region $X$, $S(X) := - \tr(\rho_X \log \rho_X)$ the von Neumann entropy, and $\partial X$ the perimeter of the region $X$ \cite{ECP10}. The term entanglement entropy is due to the interpretation of $S(X)$ as the amount of entanglement between $X$ and $\overline{X}$ (the complementary region) in the quantum pure state $\ket{\psi}_{X\overline{X}}$ \cite{BBPS96}. Generic states typically have an extensive behaviour of entanglement entropy. Thus states satisfying an area law have a strong restriction on their amount of entanglement.

What kind of quantum states are expected to obey an area law? Starting with the seminal work of Bekenstein in the context of black hole entropy \cite{Bek73}, and more recently in the context of quantum spin systems \cite{VLRK03, CC04} and quantum harmonic systems \cite{AEPW02, PEDC05, Wol06} (see \cite{ECP10} for more references), an increasing body of evidence appeared suggesting that states corresponding to the ground or to low-lying energy eigenstates of local models satisfy an area law \footnote{With a logarithmic correction in the case of critical systems (see e.g. Ref. \cite{ECP10}).}$^{,}$\footnote{For thermal states of local Hamiltonians the situation is simpler: Jaynes' principle of maximum entropy implies a general area law for mutual information valid in arbitrary dimensions, with the mutual information of a region with the complementary region being proportional to the perimeter of the region and the inverse temperature \cite{WFHC08}.}. It turns out however that this is not always the case: One can construct local Hamiltonians, even in one dimension \cite{DGIK09} and with translational symmetry \cite{Ira10, GI09}, whose groundstate entanglement entropy follows a volume law. Yet, these are critical models, with the spectral gap shrinking to zero with the number of sites. Gapped models, on the other hand, are expected to satisfy an area law, although a general proof in arbitrary dimensions is still lacking.

In a ground-breaking work, Hastings proved that this is indeed the case for one-dimensional systems, i.e. 1D gapped Hamiltonians with a unique groundstate always obey an area law \cite{Has07}. Hastings' proof is based on locality estimates in quantum spin Hamiltonians known as Lieb-Robinson bounds \cite{LR72} and gives an exponential dependency between the entanglement entropy and the spectral gap of the model. Recently, Arad, Kitaev, Landau and Vazirani found a combinatorial proof of the area law for groundstates of 1D gapped models, with a bound on entanglement entropy only polynomially large in the spectral gap \cite{AKLV12, ALV11}, matching (up to polynomial factors) explicit examples \cite{Ira10, GH10}. 

Both proofs of the area law explore extensively the fact that the quantum state is a groundstate of a local Hamiltonian with a constant spectral gap, which gives much more structure than merely the fact that the state has exponential decay of correlations. However, as pointed out already in Ref. \cite{VC06}, exponential decay of correlations by itself already suggests the entanglement of the state should satisfy an area law. Indeed, consider a quantum state $\ket{\psi}_{XBY}$ as in Fig. \ref{fig00}, with $B$ the boundary region between $X$ and $Y$. If $\ket{\psi}$ has exponential decay of correlations and the separation between $X$ and $Y$ is of order of the correlation length of the state, $X$ will have almost no correlations with $Y$, and one would expect the entanglement of $X$ with $BY$ to be only due to correlations with the region $B$, thus obeying an area law.

Perhaps surprisingly, and as also pointed out in Ref. \cite{VC06}, the argument presented above is flawed. This is because there are quantum states for which $X$ and $Y$ have almost no correlations, yet $X$ has very large entropy. This is not only a pathological case among quantum states, but it is actually the general rule: The overwhelming majority of quantum states will have such peculiar type of correlations, whenever the regions $X$ and $Y$ have approximately equal sizes \cite{HLW06}. Quantum states with this property are termed \textit{quantum data hiding} states, due to their use in hiding correlations (and information) from local measurements \cite{DLT02}. Developing a classification of entanglement in such states is one of the outstanding challenges in our understanding of quantum correlations \footnote{As a concrete example, in the problem of determining if a quantum bipartite state $\rho_{XY}$ is entangled or not, quantum data hiding states appear to be the hardest instances from a computational point of view \cite{BCY10, BCY10b}.}. 

We therefore see that it is not possible to obtain an area law from exponential decay of correlations by considering only one fixed partition of the system (into $XBY$ regions). However this leaves open the possibility that an area law could be established by exploring exponential decay of correlations simultaneously in several \textit{different} partitions of the system.

\section{Results}

\noindent \textbf{Area Law for Pure States:} Quantum data hiding states, and the related quantum expander states \cite{Has07b, Has07c}, have been largely recognized as an obstruction for obtaining an area law for entanglement entropy from exponential decay of correlations (see e.g. \cite{VC06, Has07, Has07b, Has07c, WFHC08, Mas09, AALV10, ALV11, Osb11}). In this paper we show that such an implication, at least for states defined on a 1-dimensional lattice, is in fact \textit{false}. Our main result is the following (see Fig. \ref{fig000}):

\def\entropybound{

Let $\ket{\psi}_{1, ..., n}$ be a state defined on a ring with $(\xi, l_0)$-exponential decay of correlations and $n \geq  C l_0 /\xi$. Then for any connected region $X \subset [n]$ and every $l\geq 8 \xi$,
\begin{equation} \label{entropybound}
H^{2^{- \frac{l}{8\xi}}} _{\max}(X) \leq c' l_{0}\exp\left( c \log(\xi)\xi \right) + l,
\end{equation}
with $C, c, c' > 0$ universal constants.}

\begin{thm} \label{maintheorempure}
\entropybound
\end{thm}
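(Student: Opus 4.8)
The plan is to phrase the whole argument in the single-shot smooth-entropy picture, where the quantum state merging protocol converts the target quantity $H^{\epsilon}_{\max}(X)$ into the cost of transferring the $X$-part of $\ket{\psi}$ to a reference holding its purification. Fix the connected region $X$ and let $B$ be a boundary strip of width $\Theta(l)$ adjacent to $X$ on the ring, with $Y$ the remaining sites, so that $X$ and $Y$ are separated by more than $l$ sites and hence $\cor(X:Y) \leq 2^{-l/\xi}$. The goal is to show that the entanglement of $X$ with the rest is essentially carried by $B$: I would use merging together with the duality between smooth min- and max-entropies to reduce the claim to the statement that $X$ can be merged into the adjacent boundary $B$ at a cost of order $|B| = O(l)$, which holds provided $X$ is nearly decoupled from the far region $Y$ in the smooth sense that $I^{\epsilon}_{\max}(X:Y)$ is small. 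The $+\,l$ in the target bound is then exactly the price of the boundary register, and the residual constant is the cost of the (imperfect) decoupling.

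Second, the crux is to obtain such an approximate decoupling — a bound on a smooth mutual-information-type quantity — from the mere hypothesis that the bounded-observable correlation $\cor(X:Y)$ is exponentially small. This is precisely where the data-hiding obstruction lives: a small value of $\cor(X:Y)$, which only constrains a single pair of bounded observables, in no way forces $\rho_{XY}$ to be close to a product state. To beat it I would invoke the new bound on correlations created by local random measurements: measuring $X$ in a Haar-random basis and recording the outcome establishes a classical-quantum correlation with $Y$ whose typical magnitude can be lower bounded by an entropic function of $\rho_{XY}$, while on the other hand it is upper bounded in terms of $\cor(X:Y)$. Feeding the resulting averaged correlation bound into the quantum substate theorem upgrades it to a genuine trace-distance statement, i.e. to a bound on the smooth max mutual information across the cut for a good fraction of outcomes.

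Third, a single cut is not enough, because the random-measurement bound is too weak to drive the conditional correlation below the threshold the substate theorem requires. Here I would use the generalization to the max-entropy of Hastings' saturation of mutual information: considering the nested family of boundary strips at distances $l, 2l, 3l, \dots$ and noting that the corresponding (max) mutual informations form a bounded monotone sequence, there must exist a strip $B$ across which the conditional max mutual information of $X$ with the exterior is small. It is this saturated cut that makes the decoupling effective, and combining it with the two ingredients above yields the bound on $I^{\epsilon}_{\max}(X:Y)$ needed to close the merging argument.

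The main obstacle is the combination in the second and third steps — converting exponentially small bounded-observable correlations into a true decoupling in the face of data hiding. The constant $c' l_{0}\exp(c\log(\xi)\,\xi)$ traces directly to the quantitative cost of this conversion: the saturation argument must be run over $\exp(\Theta(\xi))$ nested regions (equivalently, a measurement register of that many outcomes) before the conditional correlation is forced below the level at which the substate theorem delivers a usable decoupling, and the factor $l_0$ enters because the decay of correlations only switches on beyond separation $l_0$. I expect the most delicate bookkeeping to be the matching of the smoothing parameter $\epsilon = 2^{-l/8\xi}$ to the raw correlation bound $2^{-l/\xi}$ through the loss incurred by the substate theorem, which is what fixes the factor of $8$ in the exponent.
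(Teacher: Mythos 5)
The fatal gap is in your second step. You propose to convert the hypothesis $\cor(X:Y)\leq 2^{-l/\xi}$ into a bound on $I^{\epsilon}_{\max}(X:Y)$ --- i.e.\ genuine decoupling of $X$ from the far region $Y$ --- via random measurements plus the quantum substate theorem, and then to finish by merging $X$ into the boundary strip $B$ at cost $O(|B|)$. No implication of the form ``small $\cor(X:Y)$ implies small $I^{\epsilon}_{\max}(X:Y)$'' can hold: that is precisely the data-hiding obstruction the theorem has to circumvent, not a lemma one can prove. Generic states \cite{HLW06} have $\cor(X:Y)$ exponentially small across a cut with a thin buffer $B$ while $S(X)\gg |B|$; for a pure state on $XBY$ the Araki--Lieb inequality gives $I(X:Y)\geq 2S(X)-2|B|$, so such states combine exponentially small $\cor(X:Y)$ with extensive (max-)mutual information across the very same cut. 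Accordingly, the paper never derives decoupling across the $X$:$Y$ cut. Its use of merging (Lemma \ref{bigentropiesimplycorrelations}) is contrapositive: \emph{if} certain entropies were unbalanced, a random measurement on one side would create correlations exceeding $2^{-l/\xi}$, a contradiction; and the mutual-information quantity that must be small as an \emph{input} to this step is $I^{\delta}_{\max}(A:B)$ with $B$ the \emph{small central region} $Y_{C,l}\tilde Y_{C,l}$, supplied by the saturation lemma --- never $I_{\max}(X:Y)$. The outputs are entropy inequalities, $H^{\delta/4}_{\max}(Y_{C,l})\lesssim l/\xi$ and $H^{\nu}_{\max}(C)<3H^{\delta}_{\max}(B)$, from which $H_{\max}(X)$ is bounded by purity and subadditivity; no decoupling statement across the cut is ever established, and none is needed.

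Your third step also misdescribes the saturation mechanism, and this matters quantitatively. Lemma \ref{saturationmaxmutualinfo} (and its von Neumann prototype, Lemma \ref{saturationmutualvonNeumann}) bounds $I_{\max}(X_{C,l}:X_{L,l/2}X_{R,l/2})$, the mutual information of a \emph{bounded} central region with its immediate surroundings inside a window of size $l_0\exp(O(\log(1/\epsilon)/\epsilon))$, found by an entropy-recursion contradiction argument; it is not a statement about the mutual information of the macroscopic region $X$ with its exterior. Your ``bounded monotone sequence of nested strips'' has no a priori bound other than $2H_{\max}(X)$ (circular, since that is what you are trying to bound) or $O(|X|)$ (which would force $\Omega(|X|)$ strips and destroy the area law). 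Moreover, monotonicity and chain-rule manipulations are exactly what fail for smooth max-entropies; this is why the paper's proof of Lemma \ref{saturationmaxmutualinfo} invokes the substate theorem --- its actual role being to pass from $I_{\max}$ to a hybrid von Neumann/max quantity --- together with the equipartition property and a further use of exponential decay, rather than the role you assign it. Finally, your outline has no analogue of part 2 of Lemma \ref{bigentropiesimplycorrelations}, the genuinely new inequality covering the regime $-H^{\delta}_{\max}(A|C)\leq 0$, where no EPR pairs can be distilled yet correlations must still appear whenever $H^{\nu}_{\max}(C)\geq 3H^{\delta}_{\max}(B)$; without it there is no way to pass from ``merging fails'' to the bound on the max-entropy of the complement that the final subadditivity step requires.
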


\begin{figure}  
\begin{center}  
\includegraphics[height=5cm,width=5cm,angle=0]{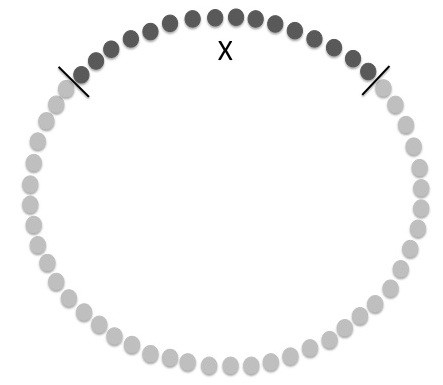}  
\caption{\small Arrangement of qubits on a ring considered in Theorem \ref{maintheorempure}. \label{fig000}}  
\end{center}  
\end{figure}  

An announcement of the proof can be found in \cite{BH13}. 

We leave it as an open question whether a similar statement holds true in dimensions larger than one.

The quantity $H_{\max}^{\varepsilon}(X)$ is the $\varepsilon$-smooth max-entropy of $\rho_X := \tr_{\backslash X}(\ket{\psi}\bra{\psi})$, where $\tr_{\backslash X}(\ket{\psi}\bra{\psi})$ denotes the partial trace of $\ket{\psi}\bra{\psi}$ with respect to all sites expect the ones in region $X$. The $\varepsilon$-smooth max-entropy is defined as \cite{TCR10, Ren05}
\begin{equation} 
H_{\max}^{\varepsilon}(X) =  H_{\max}^{\varepsilon}(\rho_X) := \min_{ \tilde \rho_X \in {\cal B}_{\varepsilon}(\rho_X)} \log \left( \text{rank}(\tilde \rho_X)  \right),
\end{equation}
with $B_{\varepsilon}(\rho_X) := \{ \tilde \rho_X : D(\rho_X, \tilde \rho_X) \leq \varepsilon \}$ the $\varepsilon$-ball of quantum states around $\rho_X$ and $D(\sigma_1, \sigma_2)$ the purified distance between states $\sigma_1$ and $\sigma_2$ \footnote{See Appendix \ref{purifieddistance} for the definition of the purified distance.}. In words, $H_{\max}^{\varepsilon}(X)$ gives the logarithm of the approximate support of $\rho_X$, i.e. the number of qubits needed to store an $\varepsilon$-approximation of the state $\rho_X$.

We now present two simple corollaries of Theorem \ref{maintheorempure}. The first is an area law for the von Neumann entropy:
\begin{corollary} \label{vonNeumannarealaw}
Let $\ket{\psi}_{1, ..., n}$ be a state defined on a ring with $(\xi, l_0)$-exponential decay of correlations and $n \geq  C l_0 /\xi$. Then for any connected region $X \subset [n]$,
\begin{equation} \label{vnbound}
S(X) \leq c'l_{0}\exp\left( c \log(\xi)\xi \right),
\end{equation}
with $C, c, c' > 0$ universal constants.
\end{corollary}
Note that by combining Corollary \ref{vonNeumannarealaw} with the fact that groundstates of gapped local Hamiltonians have exponential decay of correlations \cite{Has04}, we recover Hastings' area law for groundstates of gapped models, with the same exponential dependence of the entropy on the inverse spectral gap of the model \cite{Has07}.

The second corollary gives an approximation of a state $\ket{\psi}_{1, ..., n}$ in terms of a matrix product state of small bond dimension. A matrix product representation of the state $\ket{\psi}_{1,...,n}$ is given by 
\begin{equation}
\ket{\psi}_{1,...,n} = \sum_{i_1=1}^d ... \sum_{i_n=1}^d \tr(A^{[1]}_{i_1}...A^{[n]}_{i_n}) \ket{i_1, ..., i_n},
\end{equation}
with $D \times D$ matrices $A^{[j]}$, $j \in [n]$. The parameter $D$ is termed the bond dimension and measures the complexity of the matrix product representation. When $D = \poly(n)$ the quantum state $\ket{\psi}_{1,...,n}$ admits an efficient classical description in terms of its matrix product representation, with only polynomially many parameters and in which expectation values of local observables can be calculated efficiently. We call such states themselves matrix product states (MPS) \cite{FNW92, OR95, Vid03, PGVWC07}. 

It is known that every so-called injective matrix product state (which is the generic case) has a finite correlation length. Combining Theorem \ref{maintheorempure} and the results of Ref. \cite{Vid03} (see also \cite{VC06}) we find that the converse statement also holds true:

\begin{corollary} \label{MPS}
Let $\ket{\psi}_{1, ..., n}$ be a state defined on a ring with $(\xi, l_0)$-exponential decay of correlations. For every $\delta > 0$, there is a matrix product state $\ket{\phi_k}$ of bond dimension $k = \poly(n^{\xi \log(\xi)}, \delta)$ such that $|\braket{\psi}{\phi_k}| \geq 1 - \delta$.
\end{corollary}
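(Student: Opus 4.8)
The plan is to turn the per-region entropy bound of Theorem~\ref{maintheorempure} into a \emph{simultaneous} Schmidt-rank truncation across every cut of the chain, and then to assemble the truncated tensors into a single matrix product state using the error-accumulation estimate of Vidal~\cite{Vid03}. First I would open the ring into a line and consider, for each $i \in \{1,\dots,n-1\}$, the bipartite cut that splits the sites into the connected arc $X_i = \{1,\dots,i\}$ and its complement. Since each $X_i$ is connected, Theorem~\ref{maintheorempure} applies and gives, for the choice $\varepsilon := 2^{-l/(8\xi)}$, a uniform bound $H_{\max}^{\varepsilon}(X_i) \le R$ with $R := c' l_0 \exp(c\log(\xi)\xi) + l$, independent of $i$ and of the size of $X_i$. (If $n < C l_0/\xi$ the hypothesis of the theorem fails, but then $n$ is bounded by a constant and the state has an exact MPS of constant bond dimension, so this case is trivial.)

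The next step is to read the smooth-max-entropy bound as a statement about discarded Schmidt weight. By definition $H_{\max}^{\varepsilon}(X_i) \le R$ means there is a state of rank at most $2^{R}$ within purified distance $\varepsilon$ of $\rho_{X_i}$. Because the optimal rank-constrained approximation of a density operator in purified distance is obtained by keeping its largest eigenvalues, and because for that truncation the purified distance equals the square root of the discarded weight, the weight $w_i$ carried by the eigenvalues of $\rho_{X_i}$ beyond the top $2^{R}$ satisfies $w_i \le \varepsilon^2$. Truncating the Schmidt decomposition of $\ket{\psi}$ at cut $i$ to its top $2^{R}$ terms therefore produces a vector $\ket{\psi_i}$ of Schmidt rank at most $2^{R}$ with $\| \ket{\psi} - \ket{\psi_i}\| = \sqrt{w_i} \le \varepsilon$, which is exactly the local $\ell_2$ truncation error entering Vidal's estimate.

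It then remains to fix the parameters and combine the cuts. Taking $\varepsilon = \delta/n$, i.e. $l = 8\xi \log(n/\delta)$, keeps every local truncation error below $\delta/n$ while making the bond dimension at each cut $2^{R} = 2^{c'l_0\exp(c\log(\xi)\xi)}\,(n/\delta)^{8\xi}$, which, absorbing the $\xi,l_0$-dependent prefactor, is $\poly(n^{\xi\log(\xi)},\delta)$. By the sequential-projection bound of Ref.~\cite{Vid03} (see also \cite{VC06}), truncating all bonds at once yields a matrix product state $\ket{\phi_k}$ of this bond dimension $k$ whose deviation from $\ket{\psi}$ is at most the sum of the local errors, $\sum_{i=1}^{n-1}\varepsilon \le n\cdot(\delta/n) = \delta$; after renormalization this gives $|\braket{\psi}{\phi_k}| \ge 1 - \delta$.

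The main obstacle I expect is not any individual step but the bookkeeping that glues them together. One must verify that the purified-distance ball defining $H_{\max}^{\varepsilon}$ genuinely corresponds to the $\ell_2$ Schmidt-weight truncation error of the MPS estimate, and, more delicately, that the errors incurred at the $n-1$ distinct cuts \emph{add} rather than compound multiplicatively. The latter is precisely the content of Vidal's argument, and the former follows from the eigenvalue characterization of the optimal rank-constrained approximation; the only remaining care is the passage from the ring to an open chain (an open MPS being a special case of the trace form) and the harmless final renormalization.
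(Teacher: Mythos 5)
Your proposal is correct and takes essentially the same route as the paper, which establishes this corollary simply by combining Theorem~\ref{maintheorempure} with the truncation and error-accumulation results of Refs.~\cite{Vid03, VC06}: you apply the entropy bound with $\varepsilon = \delta/n$ at every cut, convert it into a Schmidt-tail-weight bound (your observation that the optimal rank-constrained approximation in purified distance is the eigenvalue truncation, giving $w_i \le \varepsilon^2$, is correct and even sharper than the paper's Lemma~\ref{lem:H_vs_P}), and invoke Vidal's sequential-truncation lemma for the additive error accounting. The additional bookkeeping you supply --- the trivial case $n < C l_0/\xi$, the ring-to-line passage, and the final renormalization --- correctly fills in what the paper leaves implicit.
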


Thus we see that one-dimensional pure quantum states with exponential decay of correlations have a very simple structure, admitting an efficient classical parametrization. In fact, such a representation is of more than theoretical interest, as matrix product states are the variational ansatz for the most successful known method for simulating one-dimensional quantum systems, the Density Matrix Renormalization Group (DMRG) \cite{Whi92, OR95}.

\vspace{0.2 cm}

\noindent \textbf{Area Law for Mixed States:} So far we have focused on the case of pure states satisfying exponential decay of correlations. How about mixed states? It is clear that Theorem \ref{maintheorempure} cannot be true in this case. As an example, consider the maximally mixed state: It has no correlations, yet the local entropies are all maximum. Although we do not present a full answer to the mixed state case in this paper, we can prove the following extension of Theorem \ref{maintheorempure}:

\def\entropyboundmixed{

Let $\rho_{1, ... , n}$ be a state defined on ring with $(\xi, l_0)$-exponential decay of correlations and $n \geq  C l_0 /\xi$. Then for any connected region $X \subset [n]$ and every $l\geq 8 \xi$,
\begin{equation} \label{entropybound}
H^{2^{- \frac{l}{8\xi}}} _{\max}(X) \leq c' l_{0}\exp\left( c \log(\xi)\xi \right)(1+H_{\max}(\rho)) + l,
\end{equation}
with $C, c, c' > 0$ universal constants.}
\begin{thm} \label{maintheoremmixed}
\entropyboundmixed
\end{thm}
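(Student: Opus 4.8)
The plan is to reduce the mixed-state statement to the pure-state Theorem \ref{maintheorempure} by purification, isolating exactly where the global entropy of $\rho$ must enter. First I would introduce a reference system $R$ and a purification $\ket{\psi}_{1,\dots,n,R}$ of $\rho_{1,\dots,n}$. For any connected region $X$ on the ring the reduced state $\rho_X$ is unchanged by purification, so $H^{\varepsilon}_{\max}(X)$ is the same quantity we wish to bound, and moreover the correlation function $\cor(X:Y)$ between any two ring regions $X,Y$ is inherited verbatim from $\rho$. Thus the $(\xi,l_0)$-exponential decay of correlations continues to hold for all pairs of regions that do not touch $R$, and the entire apparatus used in the pure-state proof -- the random-measurement correlation bound, quantum state merging, smooth single-shot entropies and the quantum substate theorem -- is available for ring regions.

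The essential difficulty, and the reason the bound cannot simply coincide with Theorem \ref{maintheorempure}, is that $R$ is correlated with the whole ring: it is precisely the maximally mixed state, with $R$ maximally entangled with the system, that forces the local entropies to be maximal even in the complete absence of ring correlations. Hence $R$ cannot be treated as a ``far'' region and discarded; it must be carried through every step. I would therefore re-run the pure-state argument with $R$ retained as a spectator adjoined to the boundary throughout, replacing each pure-state entropy and merging estimate for a region $Z$ by its counterpart conditioned on (or taken jointly with) $R$.

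The natural place for the factor $H_{\max}(\rho)$ to enter is the saturation-of-mutual-information step, i.e. the max-entropy generalization of Hastings' result advertised in the introduction. In the pure-state proof one examines a sequence of candidate boundary regions at increasing scales and uses that the accumulated (smooth) mutual information across this sequence is controlled, which lets one locate a boundary region of width $O(\xi)$ decoupling $X$ from the far region; the total budget available for this accumulation is governed by the global entropy of the state, which is zero for a pure state and becomes $H_{\max}(\rho)$ once $R$ is present. Tracking this budget through the nested application of the merging and substate estimates should turn the pure-state area-law constant $c' l_0 \exp(c\log(\xi)\,\xi)$ into $c' l_0 \exp(c\log(\xi)\,\xi)\,H_{\max}(\rho)$, while the additive $+\,l$ term, being a purely geometric smoothing cost tied to the separation length $l$ and the target smoothing parameter $2^{-l/8\xi}$, is left untouched.

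The main obstacle I anticipate is the clean bookkeeping of this entropy budget: proving the max-entropy version of the saturation lemma with $H_{\max}(\rho)$ as the total resource, and verifying that the dependence on $H_{\max}(\rho)$ remains multiplicative (linear) rather than exponential after the substate theorem and the iterated merging steps are composed. A secondary technical point is to confirm that the random-measurement correlation bound, applied in the pure-state case to ring regions only, still yields usable decoupling estimates when those regions are augmented by the globally correlated reference $R$; one expects this to go through, since the decay hypothesis is invoked only between ring regions, but it must be checked that adjoining $R$ does not degrade the smoothing parameters beyond the stated $2^{-l/8\xi}$.
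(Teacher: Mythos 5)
Your proposal follows essentially the same route as the paper: purify $\rho$, re-run the pure-state argument with the purifying system adjoined as a spectator to the side-information (boundary) system in each of the two state-merging applications, so that $H_{\max}(\rho)$ enters additively in the max-mutual-information estimates and multiplicatively in the final constant, while the additive $+\,l$ smoothing term is unchanged. The only bookkeeping difference is that the paper does not modify Lemma~\ref{saturationmaxmutualinfo} at all (it is applied directly to the mixed ring state, whose statement already allows mixed states); the entropy of $\rho$ enters instead via subadditivity when the purifying system is adjoined, i.e. $I_{\max}(Y_{C}:Y_{L}Y_{R}E) \leq I_{\max}(Y_{C}:Y_{L}Y_{R}) + H_{\max}(E)$, and these additive terms are then absorbed by inflating the scale $\overline{l}_0$ by a factor of $H_{\max}(\rho)$, which is what produces the multiplicative dependence you anticipated.
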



\vspace{0.2 cm}

\noindent \textbf{Long-Range Correlations in Quantum Computation:} A perennial question in quantum information science is to understand what is responsible for the apparent superiority of quantum computation over classical computation. A fruitful approach in this direction is to find conditions under which quantum circuits have an efficient classical simulation (see e.g. \cite{Got98, JL02, Val02, dVT02, Vid03, MS08, AL08, Nest11}). In this way one can at least say what properties a quantum circuit \textit{should} have if it is supposed to give a superpolynomial speed-up over classical computing. 

In \cite{Vid03} Vidal gave an interesting result in this context: Unless a quantum computation in the circuit model involves states that violate an area law, with the entropy of a certain subregion being bigger than the logarithm of the number of boundary qubits, it can be simulated classically in polynomial time. A direct corollary of this result and Theorem \ref{maintheorempure} is the following:

\begin{corollary}
Consider a family of quantum circuits $V = V_{k}...V_2V_1$ acting on $n$ qubits arranged in a ring and composed of two qubit gates $V_k$. Let $\ket{\psi_t} := V_{t}...V_2V_1 \ket{0^n}$ be the state after the $t$-th gate has been applied. Then if there are constants $\xi, l_0$ independent of $n$ such that, for all $n$ and $t \in [n]$, $\ket{\psi_t}$ has $(\xi, l_0)$-exponential decay of correlations, one can classically simulate the evolution of the quantum circuit in $\poly(n, k)$ time.
\end{corollary}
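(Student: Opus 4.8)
The plan is to combine Theorem \ref{maintheorempure} with Vidal's classical simulation result from Ref. \cite{Vid03} in essentially the most direct way possible. The key conceptual point is that Vidal's algorithm simulates a quantum circuit efficiently provided that, at every intermediate time $t$, the state $\ket{\psi_t}$ admits an accurate matrix product state description of small (polynomial) bond dimension; more precisely, provided that the Schmidt rank across every cut of the line stays polynomially bounded after truncating to an arbitrarily small error. So the entire task reduces to verifying that the hypothesis of Theorem \ref{maintheorempure} holds at every time step, and then feeding the resulting entropy bound into Vidal's cost estimate.

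First I would observe that the hypothesis of the corollary is precisely tailored for this: we are \emph{given} that each intermediate state $\ket{\psi_t}$ has $(\xi, l_0)$-exponential decay of correlations with $\xi, l_0$ constants independent of $n$. Hence, for each fixed $t \in [k]$, Theorem \ref{maintheorempure} applies to $\ket{\psi_t}$ (noting $n \geq C l_0/\xi$ holds for large $n$ since $l_0/\xi$ is a constant), yielding for every connected region $X$ and every $l \geq 8\xi$ a bound on $H_{\max}^{2^{-l/8\xi}}(X)$ of the form $c' l_0 \exp(c\log(\xi)\xi) + l$. Since $\xi$ and $l_0$ are constants, the term $c' l_0 \exp(c\log(\xi)\xi)$ is an $n$-independent constant, so the approximate entropy across any single cut is $O(1) + l$. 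The second step is to translate this smooth-max-entropy bound into a statement about truncated Schmidt rank (equivalently, bond dimension): a bound on $H_{\max}^{\varepsilon}(X)$ says exactly that $\rho_X$ is $\varepsilon$-close to a state of rank at most $2^{H_{\max}^{\varepsilon}(X)}$, which is the content needed for MPS truncation. This is morally the same passage already carried out in Corollary \ref{MPS}, where Theorem \ref{maintheorempure} together with Ref. \cite{Vid03} gives an MPS of bond dimension $\poly(n^{\xi\log\xi}, \delta)$; here $\xi\log\xi$ is a constant, so the bond dimension is genuinely $\poly(n)$.

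Third, I would invoke Vidal's simulation theorem: if at every step the state is $\delta$-approximable (say $\delta$ inverse-polynomially small in $n$ and $k$) by an MPS of bond dimension $D = \poly(n)$, then the full circuit $V = V_k \cdots V_1$, built from two-qubit gates, can be simulated by sequentially updating the MPS representation. Each two-qubit gate update and each subsequent truncation costs $\poly(D) = \poly(n)$ time, and there are $k$ gates, giving total running time $\poly(n, k)$. One has to be slightly careful that the per-step truncation errors accumulate; since there are at most $k$ steps and each can be taken to contribute error $\delta/k$ while keeping $D$ polynomial (because $H_{\max}^{\varepsilon}$ depends only logarithmically on $1/\varepsilon$ through the $l$ term, so demanding a $1/\poly$ error merely enlarges $l$ by an additive $O(\log(nk))$ and hence $D$ by a $\poly(n,k)$ factor), the total accumulated error stays controlled and the bond dimension stays polynomial. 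Assembling these gives the claimed $\poly(n,k)$ classical simulation.

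The main obstacle I anticipate is not any single deep step but rather the bookkeeping of error accumulation across the $k$ gates together with the fact that the corollary's hypothesis is an assumption about \emph{each} intermediate state $\ket{\psi_t}$ rather than just the final state. One must check that the constant in the entropy bound really is uniform in $t$ (it is, since $\xi, l_0$ are fixed and Theorem \ref{maintheorempure} is applied identically to each $\ket{\psi_t}$), and that truncating the MPS at each step to maintain polynomial bond dimension does not push the simulated state outside the regime where the exponential-decay hypothesis was assumed — but since the hypothesis is granted for the true states $\ket{\psi_t}$ and the simulation merely tracks a good approximation to them, this causes no circularity. The genuinely substantive work has already been done in Theorem \ref{maintheorempure}; what remains is the standard, if somewhat delicate, combination with Vidal's algorithm.
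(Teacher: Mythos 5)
Your proposal is correct and follows exactly the route the paper intends: the paper presents this statement as a direct consequence of Theorem \ref{maintheorempure} combined with Vidal's simulation result \cite{Vid03}, which is precisely your argument (apply the area law to each intermediate state $\ket{\psi_t}$, convert the smooth max-entropy bound into a polynomial bond-dimension MPS approximation, and run Vidal's sequential MPS update). The only difference is that you spell out the error-accumulation bookkeeping across the $k$ gates, which the paper leaves implicit.
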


The corollary says that one must have at least algebraically decaying correlations in a quantum circuit if it is supposed to solve a classically hard problem. Interestingly such long range correlations are usually associated to critical phases of matter. From a quantum information perspective, the result gives a limitation to the possibility of hiding information in 1D quantum circuits. 

\vspace{0.2 cm}

\noindent \textbf{Random States and Quantum Expanders:} It is interesting to analyse how Theorem \ref{maintheorempure} fits together with the entanglement properties of random states and quantum expander states, since these have been thought of as giving an obstruction to a statement of a similar flavour \cite{VC06, Has07, Has07b, Has07c, WFHC08, Mas09, AALV10, ALV11, Osb11}. 
\begin{figure}  
\begin{center}  
\includegraphics[height=4.5cm,width=6cm,angle=0]{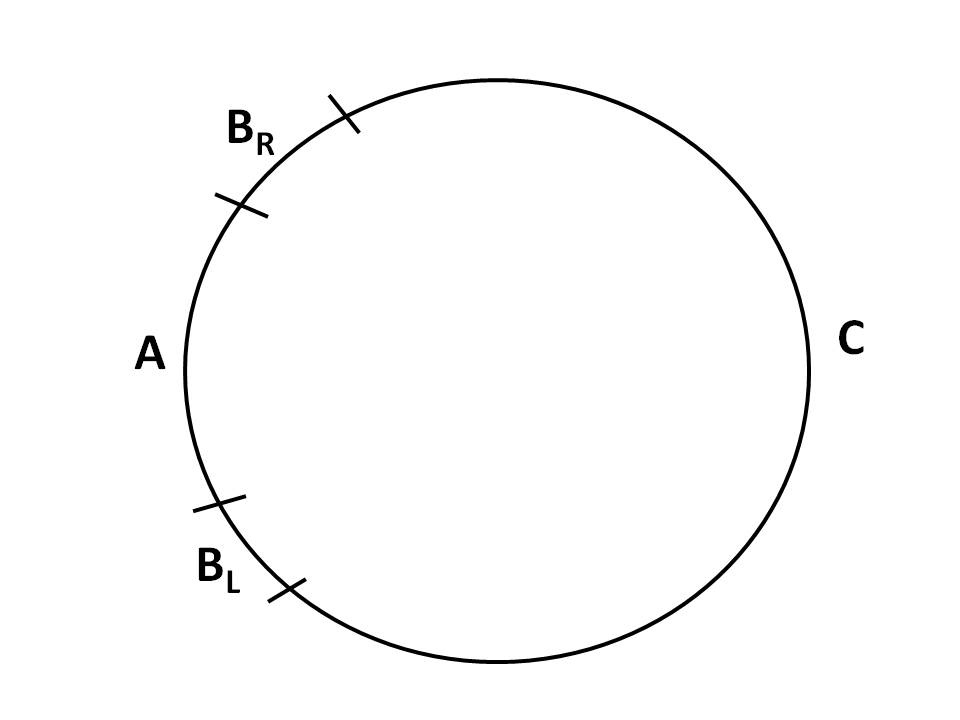}  
\caption{\small Partition of the one-dimensinal lattice into regions $A$, $B_L$, $B_R$, and $C$. \label{fig0}}  
\end{center}  
\end{figure}

\textit{Random States:} The situation for random quantum states is fairly simple: Even though they have exponentially small two-point correlations, they do \textit{not} exhibit general exponential decay of correlations, and thus nothing prevents the extensive behaviour of entanglement entropy found in such states. Indeed we show in Appendix  \ref{decouplingHaar} that for a quantum state $\ket{\psi}_{1,...,n}$ drawn from the Haar measure, with overwhelming probability, for every region $X$ with $|X| \leq n/4$, 
\begin{equation} \label{decouplingrandom}
D(\rho_X, \tau_X) \leq 2^{-\Omega(n)},
\end{equation}
where $\tau_X$ is the maximally mixed state on $X$. Now let us consider the sites $\{1, ..., n \}$ arranged in a ring and divide the region $X$ into two subregions $A$ and $B$ of equal size $n/8$, and call $C$ the region composed by the remaining $3n/4$ sites. The region $B$ is further divided into two subregions $B_L$ and $B_R$ of equal sizes, one to the left and the other to the right of $A$ (see Fig. \ref{fig0}). In quantum information terminology, Eq. (\ref{decouplingrandom}) says, that $A$ is approximately \textit{decoupled} from $B$, since their joint state is close to a product state \cite{ADHW09}. From Uhlmann's theorem (Lemma \ref{purifieddistancepurification} in Appendix \ref{purifieddistance}) it thus follows that there is an isometry $V : C \rightarrow C_1C_2$ that can be applied to $C$ such that
\begin{equation} \label{uhlmandecoupling0}
D( (\id_{AB} \otimes V) \ket{\psi}_{ABC}\bra{\psi}(\id_{AB} \otimes V)^{\cal y}, \ket{\Phi}_{AC_1}\bra{\Phi} \otimes \ket{\Phi}_{BC_2}\bra{\Phi}) \leq 2^{-\Omega(n)},
\end{equation}
with $\ket{\Phi}_{AC_1} = \text{dim(A)}^{-1/2} \sum_{k=1}^{\text{dim}(A)} \ket{k, k}$ a maximally entangled state between $A$ and $C_1$. Thus it is clear that the regions $A$ and $C$, separated by $n/16$ sites, are correlated, since $A$ and $C$ share a maximally entangled state (and are separated by a distance given by the size of $B_L$ and $B_R$). Indeed, choosing $M = \sum_{i=1}^{\text{dim}(A)/2} \ket{k}\bra{k}$ and $N = V \left( M \otimes \id_{C_2} \right) V^{\cal y}$, we find from Eq. (\ref{uhlmandecoupling0}) that 
\begin{equation}
\text{Cor}(A:C) \geq \tr \left((M_A \otimes N_C)(\rho_{AC} - \rho_A \otimes \rho_C)\right) = 1/4 - 2^{-\Omega(n)}. 
\end{equation}

\textit{Expander States:} Quantum expander states were introduced in Refs. \cite{Has07b, Has07c} in order to find a more explicit example of states in which exponential decay of correlations does not directly imply an area law. Let $\Lambda: {\cal D}(\mathbb{C}^D) \rightarrow {\cal D}(\mathbb{C}^D)$, with ${\cal D}(\mathbb{C}^D)$ the set of density matrices acting on $\mathbb{C}^D$, be a quantum channel with Kraus decomposition $\Lambda(\rho) = \sum_{i=1}^{d} A_i\rho A_{i}^{\cal y}$
for $\rho\in {\cal D}(\mathbb{C}^D)$. We say it is an $(\eta, d)$-quantum expander if its maximum eigenvalue (seen $\Lambda$ as a linear map) is one and its second largest eigenvalue is smaller than $\eta < 1$ \cite{Has07b, Has07c, BATS10} (we mean here eigenvalues of channel treated as a linear operator).  Given a quantum expander, its associated quantum expander state is defined as the translational-invariant matrix product state whose correlation matrices are given by the $\{ A_i \}_{i=1}^d$:
\begin{equation}
\ket{\psi}_{1, ..., n} := \sum_{i_1=1}^d ... \sum_{i_n=1}^d \tr(A_{i_1} ... A_{i_n}) \ket{i_1, ..., i_n}.
\end{equation}

Following the original approach of Fannes, Nachtergaele and Werner \cite{FNW92} (see also \cite{WFHC08}) we show in Appendix \ref{DecayMPS} that for regions $A$ and $C$ separated by $l$ sites:
\begin{equation} \label{smallcorrexpan}
\text{Cor}(A:C) \leq D \eta^{l}, 
\end{equation}
Thus the state $\ket{\psi}_{1,...,n}$ has at least $(O(\log^{-1}(1/\eta)), O(\log(D)\log^{-1}(1/\eta)))$-exponential decay of correlations.

For some choices of the matrices $\{ A_i \}_{i=1}^d$, one can show that the previous bound on the correlation length is tight. For that we use the approach of Ref. \cite{Has07b}, where it was shown that choosing $A_i = U_i/\sqrt{D}$, with $U_i$ drawn independently from the Haar measure, one obtains with high probability a $(d^{-1/2}, d)$-quantum expander. By a similar approach we show in Appendix \ref{CorrelationsExpander} that, with high probability, the corresponding quantum expander state is such that there \textit{are} strong correlations between regions separated by less than $c \log(D)/\log(d)$ sites, for a constant $c$. 

Therefore we have the following situation: There are correlations between regions separated by less than $c\log(D)/ \log(d)$ sites, for a constant c, while there is exponential decay of correlations
￼
for regions separated by more than $C \log(D)/ \log(d)$ sites, for another constant $C > c$. This is consistent with Theorem 1 and in fact shows that the linear dependence of $l_0$ in the entropy bound is optimal (since the entropy of quantum expander states saturates at approximately $\log(D)$). 

We note that one can also derive a bound on correlations in a MPS independent of the bond dimension, but only for regions which are sufficiently far away from the boundary of the chain. See \cite{Has14} for a derivation of the bound and a discussion of its relation with the results of this paper.

\section{Proof of Theorem \ref{maintheorempure}}

Before we turn to the proof in earnest, we first provide a sketch of it and an outline of the main techniques used, in order to give a more global view of its structure. In a nutshell, we explore the idea that if an area law is violated, then a \textit{random measurement} in one region, together with another carefully chosen measurement in another region far apart, give rise to stronger correlations than what would be allowed by the finite correlation length of the state. 

\vspace{0.2 cm}

\noindent \textbf{Proof Sketch:}  The key idea of the proof comes from the analysis provided before for random states. There we could find a partition of the system into three regions $ABC$ such that $A$ was approximately decoupled from $B$, and the state on $A$ was close to the maximally mixed state. Then by Uhlmann's theorem we could show that $A$ was strongly correlated with $C$. Therefore the fact that the entropy on $AB$ was close to maximum implied a lower bound on the correlations between $A$ and $C$. In the general case $AB$ might not have entropy close to maximum, and so $A$ might not be decoupled from $B$. However one can still try to follow the same reasoning by applying a measurement on $A$ that \textit{decouples} it from $B$.

The above problem -- of decoupling $A$ from $B$ while making the state on $A$ maximally mixed -- has been studied before and is known as entaglement distillation protocol \cite{DW03, HOW05, HOW07}. Given the state $\ket{\psi}_{ABC}^{\otimes n} \in ({\cal H}_A \otimes {\cal H}_B \otimes {\cal H}_C)^{\otimes n}$, with $n$ sufficiently large, consider a random measurement on $\left( {\cal H}_A \right)^{\otimes n}$, consisting of $N \approx 2^{n I(A:B)}$ Haar distributed projectors $\{ P_k \}_{k=1}^N$ of equal dimension summing up to the identity. Here $I(A:B) := H(A) + H(B) - H(AB)$ is the mutual information of $A$ and $B$. Then it was shown in Refs. \cite{DW03, HOW05, HOW07} that with high probability the post-selected state $\ket{\phi}_{A'B^nC^n} := (P_k \otimes \id_{B^nC^n})\ket{\psi}_{ABC}^{\otimes n}/\Vert  (P_k \otimes \id_{B^nC^n})\ket{\psi}_{ABC}^{\otimes n} \Vert$ is such that, if $H(B) \leq H(C)$,
\be
D \left( \rho_{A'B^n}, \tau_{A'} \otimes \rho_{B^n} \right) \approx 0 \hspace{0.2 cm} \text{and}  \hspace{0.2 cm}  |A'|\approx 2^{-n H(A|C)}, 
\ee
with $\tau_{A'}$ the maximally mixed state on $A'$ and $\rho_{A'B^n}$ the $A'B^n$ reduced density matrix of $\ket{\phi}_{A'B^nC^n}$. We say 
\be
- H(A|C) := H(C) - H(AC) = H(C) - H(B) 
\ee
is the entanglement distillation rate of the protocol, as it gives the number of EPR pairs shared by $A$ and $C$ after the random measurement on $A$. Here $H(A|C)$ is the conditional entropy of $A$ given the side information $C$. Using that the state $\ket{\psi}_{ABC}$ is pure the entanglement rate could alternatively be written as $H(A|B) = H(AB) - H(B)$ \footnote{A variant of this latter expression will be the one used later in the proof once we consider the single-shot entanglement distillation protocol.}. This bound on the entanglement distillation rate is known as hashing bound. 

Thus considering many copies of the state and making an appropriate measurement on $A$ we end up again in the situation where $A'$ is close to maximally mixed and decoupled from $B$, implying that if $H(A|C) < 0$, $A'$ is maximally entangled with (part of) $C$. The argument thus suggests that in order not to have long-range correlations between $A$ and $C$ one must have $H(C) \leq H(B)$, which gives an area law for region $C$ if $B$ has constant size.

There are two challenges for making this idea work. The first concerns the fact that the entanglement distillation protocol of \cite{DW03, HOW05, HOW07} is devised only in the limit of infinitely many copies of the state, but in our problem we have only a single copy of it. The second is the fact that we only get a particular outcome $k$ with probability $\approx 2^{- n I(A:B)}$, which could be catastrophic if we want to show that the correlations between $A$ and $C$ are large. Let us ignore the first challenge in this sketch and focus on the second. Thus we are going to assume that the entanglement distillation protocol works in the same way for a single copy of the state as it does asymptotically; later we will see how we can make this precise by using tools from single-shot quantum information theory \cite{Ren05, Tom12}. 

Considering the simplifying assumption that the distillation protocol works for a single copy of the state, the upshot of the result of \cite{DW03, HOW05, HOW07} is that $H(C) \geq H(B)$ implies $\text{Cor}(A:C) \geq 2^{-I(A:B)}$. Indeed, one could first make a random measurement on $A$, obtaining one of the possible outcomes with probability $2^{-I(A:B)}$ and distilling a maximally entangled state between $A$ and $C$, and then measure the correlations in the maximally entangled state.  We can also write the previous relation as
\begin{equation} \label{corfromentropybymerging}
\text{Cor}(A:C) \leq 2^{-I(A:B)}  \hspace{0.2 cm} \text{implies} \hspace{0.2 cm} H(C) \leq H(B).
\end{equation}

From exponential decay of correlations we have that $\text{Cor}(A:C) \leq 2^{-l / \xi}$, with $l = \text{size}(B)$. Thus if $I(A:B) \leq l /\xi$, we must have $H(C) \leq H(B)$, which constitutes an area law for region $C$ if $B$ has fixed size. Unfortunately we do not have any guarantee that $I(A:B) \leq l /\xi$. 

The second idea of the proof is to show that at a distance at most $l_0 \exp(O(\xi))$ sites from $C$, we can find a region $B$ of size smaller than $l_0 \exp(O(\xi))$ such that $H(B) \leq l/(2\xi)$, which implies $I(A:B) \leq 2H(B) \leq l /\xi$. Then by the argument above we can get a bound of  $l_0 \exp(O(\xi))$ on the entropy of a region which differs from $C$ by less than $l_0 \exp(O(\xi))$ sites, and so by subadditivity of entropy and the Araki-Lieb inequality \cite{AL70} we have a bound of $l_0 \exp(O(\xi))$ on the entropy of $C$ as well.

In order to prove the existence of a region $B$ with $H(B) \leq l/(2\xi)$ we apply a variant of a result due to Hastings \cite{Has07}, used in his proof of an area law for groundstates of 1D gapped models, concerning the saturation of mutual information in a multiparticle state. It appears as Lemma \ref{saturationmutualvonNeumann} in Appendix \ref{entropies} and states: For all $\varepsilon > 0$ and a particular site $s$ there exist neighbouring regions $X_{L}X_C X_R$ at a distance at most $l_0 \exp(O(1/\varepsilon))$ sites from $s$, with $X_L$ and $X_R$ each of size $l$ and $X_C$ of size $2l$, such that $I(X_C : X_LX_R) \leq \varepsilon l$ and $l \leq l_0 \exp(O(1/\varepsilon))$ (see Fig. \ref{sat}). Let us consider the partition $X_LX_CX_R R$, with $R$ the region composed of all the remaining sites not in $X_LX_CX_R$. Then we choose $\varepsilon = 1/(2 \xi)$ and use once more the fact that the state has exponential decay of correlations to find that $\text{Cor}(X_C:R) \leq 2^{-l / \xi} \leq 2^{I(X_C : X_LX_R)}$. Then using Eq. (\ref{corfromentropybymerging}) we get $H(R) \leq H(X_LX_R)$. But since $H(X_C) + H(X_LX_R) - H(R) = I(X_C : X_LX_R) \leq l/(2\xi)$, we find $H(X_C) \leq 1/(2\xi)$, which gives the desired relation setting $B = X_C$.

To summarize, we employ the entanglement distillation protocol in the form of Eq. (\ref{corfromentropybymerging}) and the assumption of exponential decay of correlations \textit{twice}. One in conjunction with the result about saturation of mutual information in order to get a region of constant size and not so large entropy, and the second to boost this into an area law for regions of arbitrary size. This finishes the sketch of the proof. $\square$

\begin{figure}  
\begin{center}  
\includegraphics[height=2cm,width=10.5cm,angle=0]{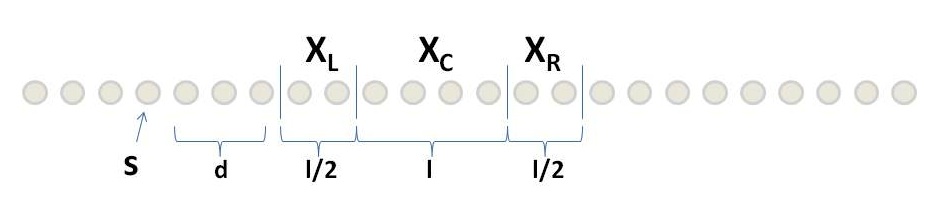}  
\caption{\small Structure of the regions for Lemma \ref{saturationmutualvonNeumann} and Lemma \ref{saturationmaxmutualinfo}. In Lemma \ref{saturationmutualvonNeumann}, $d, l \leq l_0\exp(O(1/\varepsilon))$, while in Lemma \ref{saturationmaxmutualinfo}, $d, l \leq l_0\exp(O(\log(1/\varepsilon)/\varepsilon))$. \label{sat}}  
\end{center}  
\end{figure}

In order to turn the sketch above into a proof, we will need two lemmas, which we believe might be of independent interest. Both provide single-shot analogues of the entropic results used in the sketch. To this goal we use the recent framework of single-shot quantum information theory \cite{Ren05, Tom12}, whose objective is to analyse the protocols of quantum information theory \textit{without} the assumption of having an arbitrarily large number of copies of the state (or uses of the channel in dynamical problems).

The first lemma gives three relations between correlations and smooth entropies in a general tripartite quantum pure state. To state it we will need the following single-shot analogue of the conditional entropy, which has an important role in the proof: Given a bipartite state $\rho_{AB} \in {\cal D}({\cal H}_A \otimes {\cal H}_B)$, the min-entropy of $A$ conditioned on $B$ is defined as
\begin{equation}
H_{\min}(A|B)_{\rho} := \max_{\sigma \in {\cal D}({\cal H}_B)} \sup \{ \lambda \in \mathbb{R} : 2^{-\lambda}\id_A \otimes \sigma_B \geq \rho_{AB}  \}.
\end{equation}
The $\varepsilon$-smooth min-entropy of $A$ conditioned on $B$ of $\rho_{AB}$ is given by
\begin{equation}
H_{\min}^{\varepsilon}(A|B)_{\rho} := \max_{\tilde \rho_{AB} \in {\cal B}_{\varepsilon}(\rho_{AB})} H_{\min}(A|B)_{\tilde \rho}.
\end{equation}


An operational interpretation for the conditional min-entropy, which we explore in the proof of the next lemma, was given in Ref. \cite{DBWR10} (see also Lemma \ref{singleshotmerging} in Appendix \ref{singleshotmergingsection} for a formal statement). Let $\ket{\psi}_{ABC}$ be an arbitrary purification of $\rho_{AC}$.

 $H_{\min}^{\varepsilon}(A|B)$ is the entanglement distillation rate of the single-shot protocol, where one has a \textit{single} copy of the state and allows an $\varepsilon$-error in the output state. Moreover, the classical communication cost of the protocol is given by the max-mutual information of $A$ with the purification $B$ of $AC$ \cite{Datta-Hsieh-imax} defined as \footnote{We note that there are other possible definitions of single-shot mutual information such as the one given in Ref. \cite{BCR11}.}
\begin{equation}  
I_{\max}^{\varepsilon}(A:B) := H_{\max}^{\varepsilon}(A) - H_{\min}^{\varepsilon}(A|B).
\end{equation}
\def\bigentropycor{
Given a tripartite state $\ket{\psi}_{ABC}$ and real numbers $\delta, \nu  >0$, the following holds
\begin{enumerate}
\item  $H_{\min}^{\delta}(A|B) \geq - 2 \log (\delta) + 5$ implies that
\begin{equation} \label{corfrommerging}
\text{{\rm Cor}}\left( A:C \right) \geq \left(\frac{1}{256} - 26 \sqrt{\delta} \right) 2^{- \left( I_{\max}^{\delta}(A:B) - 2 \log (\delta) + 5\right)}.
\end{equation}
\item $H_{\max}^{\nu}(C) \geq 3 H_{\max}^{\delta}(B)$ and $\frac{\nu}{2} - 2\delta \geq \sqrt{\frac{4\delta}{1-2\nu-2\delta}}$ imply that
\begin{equation}
\text{{\rm Cor}}\left( A:C \right) \geq \left( \frac\nu2-2\delta - \sqrt{\frac{4\delta}{1-2\nu-2\delta}} \right)^2 2^{-3\left( H^\delta_{\max}(B) + \log \left(\delta^{-2} + \nu^{-2} \right) + 5 \right)}.
\end{equation}
\item Setting $\gamma := \text{{\rm Cor}}(A:C)^{1/2}(1/2 - \delta)^{-1/2}$,
\be
H_{\max}^{2\gamma}(A) \leq H_{\max}^{\delta}(A) + 2 \log|B| + \log \left( \frac{2}{\gamma^2}  \right).
\label{eq:boost}
\ee
\end{enumerate}

}

\begin{lemma}[Correlations Versus Entropies]
\label{bigentropiesimplycorrelations}
\bigentropycor
\end{lemma}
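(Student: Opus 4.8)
All three bounds run on a single engine: the single-shot state merging protocol of Lemma~\ref{singleshotmerging} applied to the one copy $\ket{\psi}_{ABC}$, whose distillation rate is $-H_{\max}^{\delta}(A|C)$ and whose communication cost is $I_{\max}^{\delta}(A:B)$. I would prove Part~1 first as the fundamental correlation-from-merging estimate, deduce Part~2 from it by single-shot entropic manipulations, and treat Part~3 separately via the quantum substate theorem. For Part~1 the plan is to run state merging: a random measurement $\{P_k\}$ on $A$ produces, for a typical outcome $k$, a post-measurement state in which a subsystem $A'$ with $\log|A'| \approx -H_{\max}^{\delta}(A|C)$ is nearly maximally entangled with part of $C$ and decoupled from $B$. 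The hypothesis $-H_{\max}^{\delta}(A|C) \geq -2\log\delta + 5$ guarantees that at least one near-EPR pair is distilled. On a maximally entangled pair the correlation function $\text{Cor}$ is a constant of order one, obtained by taking $M$ and $N$ to be projectors onto half the Schmidt basis correlated across the cut, which is the origin of the $1/256$ prefactor after the single-shot errors are folded in. The subtlety is that this large correlation sits in the post-selected state, realized only with probability $p_k \approx 2^{-I_{\max}^{\delta}(A:B)}$. To push the bound back onto the unmeasured $\ket{\psi}_{ABC}$ I would invoke the quantum substate theorem, which certifies that the post-measurement state is a substate of the original up to the multiplicative factor $2^{I_{\max}^{\delta}(A:B)}$; hence a correlation present after measuring survives, diminished by exactly this factor. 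Tracking the merging error together with the substate-theorem error produces the $26\sqrt{\delta}$ correction and the additive $-2\log\delta + 5$ inside the exponent.

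\textbf{Part 2.} Here I would feed Part~1 with estimates extracted from the hypothesis $H_{\max}^{\nu}(C) \geq 3 H_{\max}^{\delta}(B)$. Two quantities must be controlled: $-H_{\max}^{\delta}(A|C)$, which must exceed the distillation threshold, and $I_{\max}^{\delta}(A:B)$, which must be small, being the exponent in Part~1. Using the duality $H_{\max}^{\varepsilon}(A|C) = -H_{\min}^{\varepsilon}(A|B)$ and single-shot chain rules for the pure state $\ket{\psi}_{ABC}$, the entropy gap between $C$ and $B$ forces $-H_{\max}^{\delta}(A|C)$ to be large; and the single-shot analogue of $I(A:B) \leq 2S(B)$ bounds $I_{\max}^{\delta}(A:B)$ in terms of $H_{\max}^{\delta}(B)$. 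Combining these, the net exponent collapses to $3\big(H_{\max}^{\delta}(B) + \log(\delta^{-2}+\nu^{-2}) + 5\big)$, while the prefactor becomes the square of the combined merging/substate error, which is precisely $\big(\frac{\nu}{2}-2\delta - \sqrt{4\delta/(1-2\nu-2\delta)}\big)^2$; the stated side-condition is exactly what keeps this prefactor nonnegative.

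\textbf{Part 3.} This inequality goes the other way and does not use merging; I expect it to follow from the quantum substate theorem applied to $\rho_{AC}$. Starting from the rank-$2^{H_{\max}^{\delta}(A)}$ state that is $\delta$-close to $\rho_A$ and achieves the smooth max-entropy, I would convert the smallness of $\text{Cor}(A:C)$ into a purified-distance bound of order $\gamma = \sqrt{\text{Cor}(A:C)/(1/2-\delta)}$ between $\rho_{AC}$ and a suitable product approximation. Reconstituting $\rho_A$ from this approximation and then discarding the boundary system $B$ inflates the support by at most $|B|^2$ (the $2\log|B|$ term) and by the substate penalty $2/\gamma^2$ (the $\log(2/\gamma^2)$ term), yielding a state of the stated rank within purified distance $2\gamma$ of $\rho_A$, which is the claimed bound on $H_{\max}^{2\gamma}(A)$.

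\textbf{Main obstacle.} The crux is Part~1: the faithful single-shot implementation of merging together with the post-selection accounting. Asymptotically the distillation rate and communication cost are clean entropies, but in one shot one must simultaneously guarantee that a good measurement outcome genuinely exists (not merely on average) and control the multiplicative penalty $2^{I_{\max}^{\delta}(A:B)}$ from conditioning on it without washing out the distilled entanglement. Threading the quantum substate theorem through both requirements while keeping the smoothing parameters consistent across $H_{\max}^{\delta}(A|C)$, $I_{\max}^{\delta}(A:B)$ and the output error is the delicate step, and all the explicit constants are bookkeeping residue from it.
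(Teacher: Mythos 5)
Your proposal has a genuine gap in Part~2, and it is precisely the gap the paper warns about. You propose to deduce Part~2 from Part~1 by arguing that the hypothesis $H_{\max}^{\nu}(C) \geq 3 H_{\max}^{\delta}(B)$, via duality and ``single-shot chain rules,'' forces $-H_{\max}^{\delta}(A|C) = H_{\min}^{\delta}(A|B)$ above the distillation threshold. This step fails: in the one-shot setting there is no chain rule that converts a gap between the \emph{unconditional smooth max-entropies} of $C$ and $B$ into a lower bound on the \emph{conditional min-entropy} $H_{\min}^{\delta}(A|B)$, because of the possibly huge gap between min- and max-entropies. For instance, $\rho_C$ may have one eigenvalue close to $1/2$ and the remaining weight spread over an enormous rank: then $H_{\max}^{\nu}(C)$ is large while no EPR pair can be distilled across the $A\!:\!C$ cut, i.e. $-H_{\max}^{\delta}(A|C) \leq 0$. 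The paper states this explicitly when introducing the lemma: Part~2 is a \emph{novel} relation exactly because one can have $H_{\max}^{\nu}(C) \geq 3H_{\max}^{\delta}(B)$ and yet $-H_{\max}^{\delta}(A|C) \leq 0$, so the merging route of Part~1 is unavailable in the regime Part~2 is designed for. The paper's actual proof of Part~2 is an independent argument: it takes projectors $P_A, P_B$ from Lemma~\ref{lem:H_vs_P}, writes $\rho_{AC} = (1-2\delta)\pi_{AC} + 2\delta\sigma_{AC}$ with $\pi$ obtained by projecting the pure state with $P_B$, measures a \emph{Haar-random} projector $Q \leq P_A$ of dimension $\approx |P_A|\, 2^{-3H_{\max}^{\delta}(B)}$, uses a concentration-of-measure estimate (Levy's lemma, packaged as Lemma~\ref{lem:pi_close_rho}) to show the post-measurement states of $\rho$ and $\pi$ on $C$ are close with nonzero probability, bounds $H_{\max}(\tilde\pi_C)$ by subadditivity, relates the disturbance on $C$ to $\cor(A:C)$ via Lemma~\ref{lem:cor-norm}, and finally removes two technical assumptions by tensoring with a maximally entangled state $\ket{\Phi}_{A'B'}$. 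Without an argument of this kind, Part~2 is unproven in your proposal.

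Two smaller points. In Part~1 your overall route (run single-shot merging, pay the probability of the good outcome) matches the paper, but the quantum substate theorem is neither needed nor the right tool for the post-selection accounting: Lemma~\ref{singleshotmerging} itself guarantees a POVM element $M$ with $\tr(M\rho_A) \geq 1/(2N)$, $N \approx 2^{I_{\max}^{\delta}(A:B)}$, and one then tests correlations with the norm-one operator $M^{1/2}P_{A_1}M^{1/2}$ directly. The real subtlety you do not address is the product term $\rho_A \otimes \rho_C$ in the correlation function, which the paper handles by a two-case analysis on $D(\rho_C, \tilde\rho_C)$: either it exceeds $1/8$, in which case Lemma~\ref{lem:cor-norm} already gives the bound, or it is at most $1/8$ and the direct computation on the post-selected state goes through. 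In Part~3 your sketch is close in spirit (project onto the support achieving $H_{\max}^{\delta}(A)$, convert smallness of $\cor(A:C)$ into a purified-distance bound on $C$ via Lemma~\ref{lem:cor-norm}), but again no substate theorem enters: the terms $2\log|B|$ and $\log(2/\gamma^2)$ arise from two applications of subadditivity of the smooth max-entropy (Lemma~\ref{subadditivityHmax}), once for the original pure state and once for the post-selected one, using purity to trade $H_{\max}(A)$ for $H_{\max}(BC)$.
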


Part 1 of Lemma \ref{bigentropiesimplycorrelations} is a direct consequence of the entanglement distillation protocol of Ref. \cite{DBWR10} (see Appendix \ref{singleshotmergingsection}). Part 3 also follows directly from the definition of correlation function together with basic properties of the max-entropy and the purified distance. 

Part 2 of the lemma, in turn, is a novel relation: Since one might have $H_{\max}^{\nu}(C) \geq 3 H_{\max}^{\delta}(B)$ but $H_{\min}^{\delta}(A|B) \leq 0$, it shows that even in situations where one cannot establish EPR pairs between two parties thus decoupling the third purifying party (which can only happen when $H_{\min}^{\delta}(A|B) > 0$ \cite{DBWR10}), one can still have large correlations between the two parties. Moreover, as shown in the proof such correlations are manifested by choosing a random measurement on $A$ and optimizing over a measurement on $C$. So we are operating in a regime where a random measurement on one party does not decouple it from either of the other two parties, but it does generate correlations with one of them.

The next lemma is an analogue for the max-mutual information of Lemma \ref{saturationmutualvonNeumann} in Appendix \ref{entropies}, which itself is a slight modification of a result originally shown by Hastings in his proof of an area law for groundstates of 1D gapped models \cite{Has07} 
\def\saturationImax{

Let $\rho_{1,...,n} \in {\cal D}\left( (\mathbb{C}^2)^{\otimes n} \right)$ be a state defined on a line with $(\xi, l_0)$-exponential decay of correlations, and let $s$ be a particular site. Then for all $\delta > 0$, $\min\{ \delta/2, 1/\xi\} \geq \varepsilon > 0$, and $\overline{l}_0 \geq l_0$, there is an $l$ satisfying $O(\xi^2 \log \left( 2/\varepsilon\right) \varepsilon^{-1}) \leq l/\overline{l}_0 \leq \exp(O(\log(1/\varepsilon)/\varepsilon))$ and a connected region $X_{2l} := X_{L, l/2} X_{C, l} X_{R, l/2}$ of $2l$ sites (the borders $X_{L, l/2}$ and $X_{R, l/2}$ with $l/2$ sites each, and the central region $X_{C, l}$ with $l$ sites) centred at most $\overline{l}_0\exp(O(\log(1/\varepsilon)/\varepsilon))$ sites away from $s$ (see Fig. \ref{sat}) such that
\begin{equation} \label{sublinearmutual}
I_{\max}^{\delta}(X_{C, l}:X_{L, l/2}X_{R, l/2}) \leq \frac{8 \varepsilon l}{\delta} + \delta.
\end{equation}

}

\begin{lemma} [Saturation of max-Mutual Information]
\label{saturationmaxmutualinfo}
\saturationImax
\end{lemma}

Although the proof of Lemma \ref{saturationmutualvonNeumann} in Appendix \ref{entropies} is straightforward, only involving a chain of applications of subadditivity of entropy, the proof of the single-shot counterpart given by Lemma \ref{saturationmaxmutualinfo} is considerably more involved. The main difficulty is that one might have big gaps between min- and max-entropies, and thus it is not clear how one can concatenate several uses of the subadditivity inequality. 

This difficulty is handled by exploring two ideas: The first is to apply the quantum substate theorem of Jain, Radhakrishnan, and Sen \cite{JJS07} (see also \cite{JN11} and Lemma \ref{thmsubstate} in Appendix \ref{entropies}) in order to turn the max-mutual information into a hybrid mutual information, involving both von Neumann and smooth max entropies. The second idea, necessary to handle the remaining gaps that might exist between von Neumann and max entropies, is to use the hypothesis of exponential decay of correlations, together with the quantum equipartition property \cite{TCR09, Tom12}, to upper bound the max-entropy by a sum of von Neumann entropies.


We are now ready to prove Theorem \ref{maintheorempure}. We will make use of the three parts of Lemma \ref{bigentropiesimplycorrelations}, each in conjunction with exponential decay of correlations, as well as Lemma \ref{saturationmaxmutualinfo}, which itself is based on exponential decay of correlations, which in total amounts to  \textit{four} applications of the assumption that the state has exponential decay of correlations.

\begin{repthm}{maintheorempure}
\entropybound
\end{repthm}

\begin{figure}  
\begin{center}  
\includegraphics[height=7cm,width=10cm,angle=0]{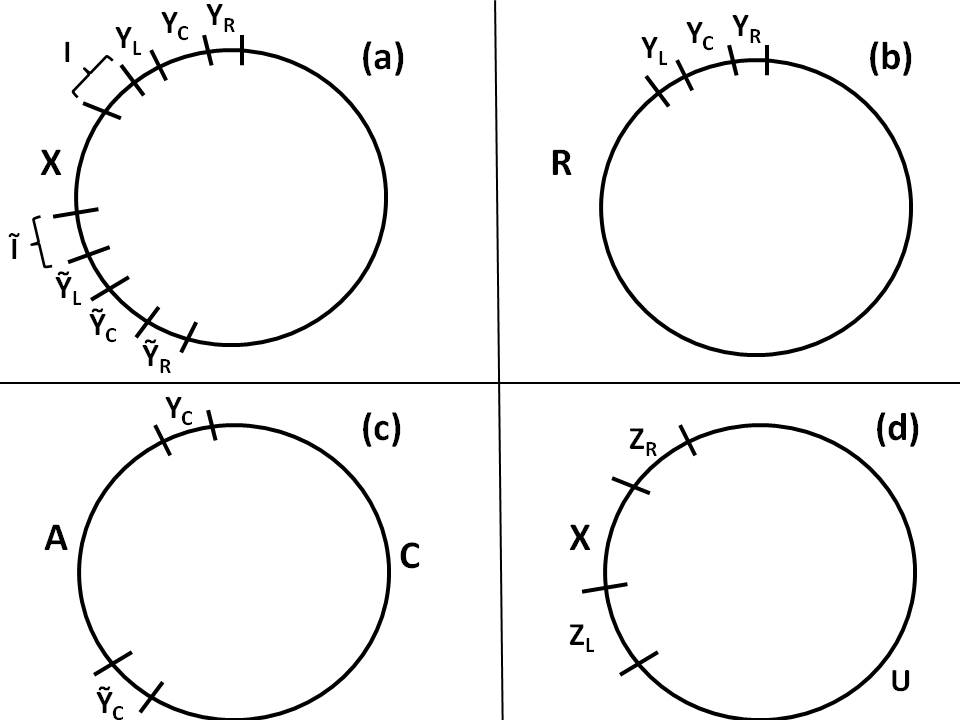}  
\caption{\small (a) Region $X$ separated from $Y := Y_LY_CY_R$ by $I \leq l_0\exp(O(\log(1/\varepsilon)/\varepsilon))$ sites and from $\tilde Y := \tilde Y_L \tilde Y_C \tilde Y_R$ by $\tilde I \leq  l_0\exp(O(\log(1/\varepsilon)/\varepsilon))$ sites; (b) partition into regions $Y := Y_LY_CY_R$ and $R$; (c) partition into regions $A$, $B := Y_C \tilde Y_C$, and $C$; (d) partition into regions $X$, $Z := Z_LZ_R$, and $U$. \label{fig2}}  
\end{center}  
\end{figure}  

\begin{proof}

We start by applying Lemma \ref{saturationmaxmutualinfo} twice to the two boundaries of the region $X$, with $\delta := 10^{-8}$, $\varepsilon := \min \left(1/(10^5 \xi), \delta/8 \right)$ and $\overline{l}_0 := \max \left( l_0, c l_0  \varepsilon/(\xi \log(2/\varepsilon)) \right)$, for a constant $c > 0$. Then we find that there are regions $Y$ and $\tilde Y$ (see Fig. \ref{fig2} (a)), each of size $2l$ with $O(\xi^2 \log \left( 2/\varepsilon\right))  \leq l/\overline{l}_0 \leq \exp(O(\log(1/\varepsilon)/\varepsilon))$, and at a distance at most $\overline{l}_0\exp(O(\log(1/\varepsilon)/\varepsilon))$ sites away from the two boundaries of $X$, respectively, such 
that 
\begin{equation}  \label{saturationmutualfirstpart11}
I^{\delta/4}_{\max}(Y_{C, l} : Y_{L, l/2}  Y_{R, l/2}) = H^{\delta/4}_{\max}(Y_{C, l}) - H^{\delta/4}_{\min}(Y_{C, l}| Y_{L, l/2}  Y_{R, l/2}) \leq \frac{l}{8 \xi},
\end{equation}
and likewise for $\tilde Y$.  We note that from the choice of parameters it follows by assuming that $l \geq 300 \xi$.

We now argue that exponential decay of correlations implies that the entropy of $\rho_{Y_{C, l}}$ must be small. Let $R$ be the complementary region to $Y$ (see Fig. \ref{fig2} (b)). Then from exponential decay of correlations we have $\text{Cor}(Y_{C, l} : R) \leq 2^{- l / (2\xi)}$. Applying part (1) of Lemma \ref{bigentropiesimplycorrelations} we find that
\begin{equation} \label{Hmaxeqbound1}
- H_{\min}^{\delta/4}(Y_{C, l}| Y_{L, l} Y_{R, l}) < -2 \log(\delta) + 9, 
\end{equation}
since otherwise
\begin{equation}
\text{Cor}(Y_{C, l} : R)  \geq \left(\frac{1}{256} - 26 \sqrt{\delta}\right) 2^{- I^{\delta/4}_{\max}(Y_{C, l} : Y_{L, l/2}  Y_{R, l/2}) - 2 \log(\delta) + 9 } \geq 2^{- \left( \frac{l}{8 \xi} - 34 \right)},
\end{equation}
in contradiction with the correlation length being $\xi$ (since $l \geq 300 \xi$). Then from Eq. (\ref{Hmaxeqbound1}) and Eq. (\ref{saturationmutualfirstpart11}) we have
\begin{equation}  \label{boundYc11}
H^{\delta/4}_{\max}(Y_{C, l}) \leq \frac{l}{8 \xi} -2 \log(\delta) + 9.
\end{equation}
Applying the same reasoning to $\tilde Y$ we find that also $H^{\delta/4}_{\max}(\tilde Y_{C, l}) \leq \frac{l}{8 \xi} - 2 \log(\delta) + 9$. 

Define $A$ as the region between $Y_{C, l}$ and $\tilde Y_{C, l}$, $B := Y_{C, l} \tilde Y_{C, l}$, and $C$ the complementary region to $AB$ (see Fig. \ref{fig2} (c)). Note that $A$ differs from $X$ by at most $2 \overline{l}_0\exp(O(\log(1/\varepsilon)/\varepsilon))$ sites and so by Lemma \ref{subadditivityHmax}, for every $\nu > 0$,
\begin{equation} \label{xregionboundbyA11}
H_{\max}^{2\nu}(X) \leq H_{\max}^{\nu}(A) + 2\overline{l}_0\exp(O(\log(1/\varepsilon)/\varepsilon)) + \log \frac{2}{\nu^2}.
\end{equation}

We now prove an upper bound on $H_{\max}^{\nu}(A)$, with $\nu = 0.01$. Since $A$ and $C$ are separated by $l$ sites, we have from exponential decay of correlations that 
\begin{equation} \label{corrlargesecondpart11}
\text{Cor}(A:C) \leq 2^{-l/\xi}. 
\end{equation}
From part (2) of Lemma \ref{bigentropiesimplycorrelations} and Eq. (\ref{corrlargesecondpart11}) we then find that
\begin{equation} \label{BonC11}
H_{\max}^{\nu}(C) < 3 H_{\max}^{\delta}(B),
\end{equation}
since otherwise, using that $l \geq 300\xi$,
\begin{equation}
\text{Cor}\left( A:C \right) \geq \left( \frac\nu2-2\delta - \sqrt{\frac{4\delta}{1-2\nu-2\delta}} \right)^2 2^{-3\left( H^\delta_{\max}(B) + 16 \log \left(\delta^{-2} + \nu^{-2} \right) + 5 \right)} \geq  2^{-3l/(4 \xi)},
\end{equation}
contradicting Eq. (\ref{corrlargesecondpart11}). In the last inequality of the equation above we used that
\begin{eqnarray}
H_{\max}^{\delta}(B) &=& H_{\max}^{\delta}(Y_{C, l} \tilde Y_{C, l}) \nonumber \\ &\leq& H_{\max}^{\delta/4}(Y_{C, l}) + H_{\max}^{\delta/4}(\tilde Y_{C, l}) + \log \frac{8}{\delta^2} \nonumber \\ &\leq& \frac{l}{4 \xi} - 4 \log(\delta) + 18  + \log \frac{8}{\delta^2},
\end{eqnarray}
where the first inequality follows from subadditivity of the max-entropy (Lemma \ref{subadditivityHmax}) and the second from Eq. (\ref{boundYc11}).

Using subadditivity of the max-entropy (Lemma \ref{subadditivityHmax}) again, we get
\begin{eqnarray}
H_{\max}^{3\nu + \delta}(A) = H_{\max}^{3\nu + \delta}(BC) &\leq& H_{\max}^{\delta}(B) +  H_{\max}^{\nu}(C) + \log \frac{2}{\nu^2} \nonumber \\ &\leq& 4H_{\max}^{\delta}(B) + \log \frac{2}{\nu^2} \nonumber \\ &\leq&  2\overline{l}_0\exp(O(\log(1/\varepsilon)/\varepsilon)),
\end{eqnarray}
where we used Eq. (\ref{BonC11}) and the bound $H_{\max}(B) \leq 2l \leq \overline{l}_0\exp(O(\log(1/\varepsilon)/\varepsilon))$. Therefore from Eq. (\ref{xregionboundbyA11}),
\begin{equation}
H_{\max}^{6\nu + 2\delta}(X) \leq 2\overline{l}_0\exp(O(\log(1/\varepsilon)/\varepsilon)) = O(l_0)\exp(O(\log(1/\varepsilon)/\varepsilon)).
\end{equation}

This is already an area law for $X$, although with a fixed error (recall that we fixed $\delta=10^{-8}$ and $\nu=0.01$). To finish the proof we show how applying exponential decay of correlations once more we can reduce the error. Let $Z$ be a region of size $2l$ separating $X$ from the remaining sites by a distance $l$, and denote by $U$ the complementary region to $XZ$ (see Fig. \ref{fig2} (d)) \footnote{Note that the distance $l$ has no relation with the parameter $l$ used in the first part of the proof and in particular it does not have to be bigger than $300 \xi$.}. By exponential decay of correlations we have $\text{Cor}(X:U) \leq 2^{-l/\xi}$. Using part (3) of Lemma \ref{bigentropiesimplycorrelations} we then get
\begin{equation}
H_{\max}^{2^{-l/(4\xi)}}(X)  \leq H_{\max}^{2^{-l/(2\xi) + 1}}(X) \leq H_{\max}^{6\nu + 2\delta}(X) + 2l  \leq O(l_0) \exp(O(\log(1/\varepsilon)/\varepsilon)) + 2l,
\end{equation}
where the first inequality follows assuming $l \geq 4\xi$. 

\end{proof}

\section{Correlations versus Entropies}

In this section we present the proof of Lemma \ref{bigentropiesimplycorrelations}, which we restate for convenience of the reader. 
\begin{replem}{bigentropiesimplycorrelations}
\bigentropycor
\end{replem}

\begin{proof}

$\newline$ 

\noindent \textbf{(Part 1)} 

From Lemma \ref{singleshotmerging} in Appendix \ref{singleshotmergingsection} it follows there is a POVM element $M$ and an isometry $V$ such that $\tr(M \rho_A) \geq  1/(2N)$  and 	
\be   \label{distancebound1}
D\left(\frac{(M^{1/2} \otimes V)\ket{\psi}\bra{\psi}_{ACB}(M^{1/2} \otimes V)^{\cal y}}
{\tr(M \rho_A)}, \ket{\Phi}\bra{\Phi}_{A_1C_1} \otimes \ket{\psi}\bra{\psi}_{C'CB}\right) \leq 26\sqrt{\delta},
\ee 
with 
\be
\log(\dim(A_1)) = \lfloor  H_{\min}^{\delta}(A|B) - 2 \log (\delta) - 4 \log(13) \rfloor,
\ee 
and 
\be 
\log(N) = \lfloor I^{\delta}_{\max}(A:B) - 2 \log (\delta) - 4 \log(13) \rfloor.
\ee 
Define 
\be
P_{A_1} := \sum_{k=1}^{\dim(A_1)/2} \ket{k}\bra{k},\quad P_{C_1} := \sum_{k=1}^{\dim(C_1)/2} \ket{k}\bra{k},
\ee
and 
\be 
\tilde \rho_{AC} = \frac{(M^{1/2} \otimes V) \rho_{AC} (M^{1/2} \otimes V)^{\cal y}}{\tr(M \rho_A)}. 
\ee


Let $\mu(dU)$ be the Haar measure on unitaries over a vector space isimorphic to $A_1$ and $C_1$. Then we have
\begin{eqnarray}
\text{Cor}(A:C) &\geq& \int \mu(dU) \tr\left( [(M^{1/2}U P_{A_1} U^{\cal y}M^{1/2})\otimes V^{\cal y} (U^* P_{C_1} U^T \otimes \id_{C'C}) V](\rho_{AC} - \rho_A \otimes \rho_C)  \right) \nonumber \\
&=&  \tr(M \rho_A)  \int \mu(dU)  \tr \left( (U P_{A_1} U^{\cal y} \otimes U^* P_{C_1} U^T ) (\tilde \rho_{A_{1} C_{1}} - \sigma_{A_1} \otimes \tilde \sigma_{C_1}) \right)   \nonumber \\
&=:& M,
\end{eqnarray}
with $\sigma_{A_1}$ and $\sigma_{C_1}$ states on $A_1$ and $C_1$, respectively. We have
\begin{eqnarray}
M = \tr(M \rho_A)   \tr \left( (P_{A_1} \otimes P_{C_1}) (\tilde \Delta(\rho_{A_{1} C_{1}}) - \Delta(\sigma_{A_1} \otimes \tilde \sigma_{C_1})) \right),  
\end{eqnarray}
where the cptp map $\Delta$ is given by
\begin{equation}
\Delta(X) :=  \int \mu(dU) (U \otimes U^*) X (U \otimes U^*)^{\cal y}. 
\end{equation}

By monotonicity of trace-norm under cptp maps and Eq. (\ref{distancebound1}),
\be
D( \Delta(\rho_{A_{1} C_{1}}), \ket{\Phi}\bra{\Phi}_{A_1C_1} ) \leq 26\sqrt{\delta}.
\ee
Moreover since $\sigma_{A_1} \otimes \tilde \sigma_{C_1}$ is a separable state (even product), we have 
\be
\Delta(\sigma_{A_1} \otimes \tilde \sigma_{C_1})) = p \ket{\Phi}\bra{\Phi}_{A_1C_1} + (1-p) \tau_{A_1 C_1},
\ee
with $p \leq 2/|A_1|$. Therefore
\be
M \geq \tr(M \rho_A) \left( 1/256 - 26 \sqrt{\delta}   \right).
\ee


\vspace{0.4 cm}

\noindent \textbf{(Part 2)} 


Part 1 of lemma \ref{lem:H_vs_P} gives projectors $P_A, P_B$ such that $\tr (P_A \rho_A) \geq  1 - 2\nu$, $|P_A| = 2^{H_{\max}^\nu(A)}$
and $\tr (P_B \rho_B) \geq 1-2\delta$, $|P_B|=2^{H_{\max}^\delta(B)}$. Then part 3 of Lemma \ref{lem:H_vs_P} allows us to write $\rho_{AC}$ 
as the following mixture
\be
\rho_{AC}=(1- 2\delta) \pi_{AC} +2\delta \sigma_{AC},
\ee
where $\pi_{AC}$ is the reduced density matrix of the pure state $\ket{\pi}_{ABC}\bra{\pi} := P_B\ket{\psi}\bra{\psi} P_B/\tr (P_B\rho_B)$, and $H_{\max}(\pi_{AC}) = H_{\max}(\pi_B) \leq H^\delta_{\max}(\rho_B)$. Let $Q$ be a projector satisfying $Q\leq P_A$ and chosen from the Haar measure, i.e. $Q = U Q_0 U^{\cal y}$ for a fixed arbitrary projector $Q_0 \leq P_A$ and $U$ is a unitary drawn from the Haar measure in the space in which $P_A$ projects on. Define $\tilde\rho_C := \tr_A (Q\ot I_C \rho_{AC} Q \ot I_C)/\tr(\rho_A Q)$, $\tilde\pi_C := \tr_A (Q\ot I_C \pi_{AC} Q \ot I_C)/\tr(\pi_A Q)$, and $\tilde\sigma_C := \tr_A (Q\ot I_C \sigma_{AC} Q \ot I_C)/\tr(\sigma_A Q)$. Let $\ket{\tilde \pi}\bra{\tilde\pi}_{ABC} := Q\ket{\pi}\bra{\pi} Q/\tr (Q\pi_A)$.

Let us first prove the claim of the lemma under two assumptions and then show how we can relax them. They read
\be
\hspace{0.2 cm} \delta |Q|\geq 4 \sqrt{|P_A|},
\label{eq:assumption1}
\ee
and
\be
\hspace{0.2 cm} H_{\max}^\delta(\rho_B)\geq \log\frac{8}{\nu^2}
\label{eq:assumption2}
\ee

The first step in the proof is to show that the entropy of $\tilde\rho_C$ is bounded from above as follows:
\be
H_{\max}^\eta(\tilde\rho_C)\leq \log |Q| + H_{\max}^\delta(\rho_B)
\label{eq:C_vs_QB}
\ee
where $\eta :=\sqrt{\frac{4\delta}{1-2\nu-2\delta}}$. 
To prove it first note that by subadditivity (Lemma \ref{subadditivityHmax}),
\be
H_{\max}(\tilde \pi_C)\leq  H_{\max}(\tilde \pi_A) +H_{\max}(\tilde \pi_B).
\ee
However $H_{\max}(\tilde \pi_A)\leq \log |Q|$ and
$H_{\max}(\tilde \pi_B)\leq  H_{\max}(\pi_B)\leq H_{\max}^\delta(\rho_B)$ (which follows from the fact that the application of a projection to one system of a bipartite state cannot increase the rank of the other system). Hence we obtain 
\be
H_{\max}(\tilde \pi_C) \leq \log |Q| + H_{\max}^\delta(\rho_B)
\label{eq:pi_vs_rho}
\ee

Applying Lemma \ref{lem:pi_close_rho} it follows that with non-zero probability
\be
D(\tilde\pi_C,\tilde\rho_C)\leq \sqrt{\frac{4\delta}{1-2\nu-2\delta}} = \eta.
\ee
We thus obtain 
\be
H^{\eta}_{\max}(\tilde\rho_C)\leq H_{\max}(\tilde\pi_C),
\ee
which together with Eq. \eqref{eq:pi_vs_rho} gives the required upper bound on 
the entropy of $\tilde\rho_C$ given by Eq.~\eqref{eq:C_vs_QB}. 

Define 
\be \label{eq:C_mu}
\mu := 2 \hspace{0.05 cm} \text{Cor}(A:C) \frac{|P_A|}{|Q|}.
\ee
Then from Lemma \ref{lem:cor-norm}, $D(\tilde\rho_C,\rho_C)\leq \sqrt{\mu}$, and hence
\be
H_{\max}^{\sqrt{\mu}+\eta}(\rho_C)\leq H_{\max}^\eta(\tilde\rho_C).
\label{eq:rho_tilde_rho}
\ee
We now set 
\be
\label{eq:Q_vs_P}
|Q|= \left \lfloor \frac{|P_A|}{2^{3H_{\max}^\delta(B)}} \right \rfloor.
\ee
Recalling that $|P_A|=2^{H_{\max}^\nu(\rho_A)}$, and using  Eqs. \eqref{eq:rho_tilde_rho} and \eqref{eq:C_vs_QB}
we obtain 
\be
H_{\max}^{\sqrt{\mu}+\eta}(\rho_C)\leq H_{\max}^\delta(\rho_B)+H_{\max}^\nu(\rho_A)-3H^\delta_{\max}(\rho_B).
\ee
By subadditivity of max-entropy (Lemma \ref{subHmax}) for system $BC$ with $\epsilon=\frac{\nu}{2}$,
$\epsilon'=\frac{\nu}{2}-2\delta$, and $\epsilon''=\delta$,
\be
H_{\max}^\nu(\rho_A) = H_{\max}^\nu(\rho_{BC}) \leq H_{\max}^{\delta}(\rho_B) + H_{\max}^{\frac{\nu}{2} - 2 \delta}(\rho_C) + \log \frac{8}{\nu^2},
\ee
and so
\be
H_{\max}^{\sqrt{\mu}+\eta}(\rho_C)\leq H^{\frac{\nu}{2}-2\delta}_{\max}(\rho_C)-H_{\max}^\delta(\rho_B)+\log\frac{8}{\nu^2}.
\ee
From Eq. \eqref{eq:assumption2} we obtain $H_{\max}^{\sqrt{\mu}+\eta}(\rho_C)\leq H^{\frac\nu2-2\delta}_{\max}(\rho_C)$ so that 
\be
\label{eq:munudelta}
\sqrt{\mu}+\eta\geq \frac\nu2-2\delta.
\ee
Eqs. \eqref{eq:Q_vs_P} and \eqref{eq:C_mu} give $\text{Cor}(A : C) \geq \mu 2^{-3 H^\delta_{\max}(\rho_B)}$,
which together with Eq. \eqref{eq:munudelta} implies
\be \label{corACunderassump}
\text{Cor}(A:C)\geq \left( \frac\nu2-2\delta - \sqrt{\frac{4\delta}{1-2\nu-2\delta}} \right)^2 2^{-3H^\delta_{\max}(B)}.
\ee

We just derived Eq. (\ref{corACunderassump}) under the two assumptions given by Eqs. (\ref{eq:assumption1}) and (\ref{eq:assumption2}). Let us finally show how we can extend it to a slightly weaker relation without the need of any assumption. Consider the state $\ket{\phi}_{AA':BB':C} := \ket{\psi}_{ABC} \otimes \ket{\Phi}_{A'B'}$, where
$\ket{\Phi}_{A'B'}$ is a maximally entangled state of dimension $|A'| = |B'| = \lfloor 16(\delta^{-2} + \nu^{-2}) \rfloor$. Then the assumptions of Eqs. (\ref{eq:assumption1}) and (\ref{eq:assumption2}) are satisfied. Using Eq. (\ref{corACunderassump}) applied to $\ket{\phi}_{AA':BB':C}$ we get  
\be
\text{Cor}(A:C) = \text{Cor}(AA':C) \geq \left( \frac\nu2-2\delta - \sqrt{\frac{4\delta}{1-2\nu-2\delta}} \right)^2 2^{-3H^\delta_{\max}(BB')},
\ee
 where the first equality follows from the fact that $A'$ is decoupled from $AC$. Finally we note that $H^\delta_{\max}(BB') \leq H^\delta_{\max}(B) + \log|B'| \leq H^\delta_{\max}(B) + \log \left(\delta^{-2} + \nu^{-2} \right) + 5$ and we are done.

\vspace{0.2 cm}

\noindent \textbf{(Part 3)} 

By Lemma \ref{lem:H_vs_P} there is a projector $P$ such that $\tr\left(P\rho_A \right) \geq 1 - 2\delta$ and $|P| = 2^{H_{\max}^{\delta}(A)}$, with $\rho_A$ the reduced state of $\ket{\psi}_{ABC}$. Then applying Lemma \ref{lem:cor-norm} with $M$ equals $P$ we get
\be
\cor(A:C)\geq \left( \frac{1}{2} - \delta \right) D(\tilde \rho_C, \rho_C)^2,
\ee
where $\tilde\rho_C = \tr_{AB} \left(|\tilde\psi\>\<\tilde\psi|_{ABC} \right)$, with
$\ket{\tilde\psi}_{ABC} = \left( P_A \ot \id_{BC} \right) |\psi\>_{ABC} / \sqrt{\tr(P\rho_A)}$ the postselected total state. Thus setting $\gamma := \cor(A:C)^{1/2}(1/2 - \delta)^{-1/2}$, 
\be
H_{\max}^{\gamma}(\rho_C)\leq H_{\max} (\tilde \rho_C)
\label{eq:boost-c}
\ee
By subadditivity of max-entropy (Lemma \ref{subHmax}) we have
\be
H_{\max}^{2\gamma} (\rho_A) = H_{\max}^{2\gamma} (\rho_{BC}) \leq H_{\max}^{\gamma}(\rho_C)+ \log|B| + \log\left( \frac{2}{\gamma^2}  \right),
\ee 
where the entropies are computed with respect to the state $\ket{\psi}_{ABC}$. Again by subadditivity of max-entropy, this time applied to the 
postselected state $\ket{\tilde\psi}_{ABC}$, 
\be
\quad H_{\max}(\tilde\rho_C)\leq H_{\max}(\tilde \rho_A) + \log|B| \leq H_{\max}^{\delta}(\rho_{A}) + \log |B|,
\ee
where we used that $\text{rank}(\tilde\rho_A)\leq 2^{H_{\max}^{\delta}(A)}$, due to the dimension of the projector $P$. The last two inequalities together with 
Eq. \eqref{eq:boost-c} imply Eq. \eqref{eq:boost}.
\end{proof}

The next lemma is used in the proof of Theorem \ref{bigentropiesimplycorrelations}.

\begin{lemma}
\label{lem:pi_close_rho}
Let $\rho_{AC} \in {\cal D}({\cal H}_A \otimes {\cal H}_B)$ be such that
\be
\label{eq:rhoac}
\rho_{AC}= (1-\delta) \pi_{AC} + \delta \sigma_{AC},
\ee
with $0\leq \delta\leq 1$ and $\pi, \sigma \in {\cal D}({\cal H}_A \otimes {\cal H}_B)$. Let $P$ be a projector on system $A$ such that $\tr \left(P \rho_A \right) \geq  1 - 2\nu$, with $0 \leq \nu \leq 1$. Let $Q$ be a random projector given by $U Q_0 U^\dagger$, where $U$ is a Haar distributed unitary on the support of $P$ and $Q_0 \leq P$ is an arbitrary projector. Define $\tilde\rho_C:=\tr_A (Q\ot I_C \rho_{AC} Q \ot I_C)/\tr(\rho_A Q)$, $\tilde\pi_C:=\tr_A (Q\ot I_C \pi_{AC} Q \ot I_C)/\tr(\pi_A Q)$, and $\tilde\sigma_C:=\tr_A (Q\ot I_C \sigma_{AC} Q \ot I_C)/\tr(\sigma_A Q)$. Then with non-zero probability over the choice of $U$,
\be
D(\tilde\rho_C,\tilde\pi_C)\leq \sqrt{\alpha},
\ee
with
\be
\alpha := \frac{\delta\,\frac{|Q_0|}{|P|}+\frac{8}{\sqrt{|P|}}}{(1-2\nu)\frac{|Q_0|}{|P|}-\frac{8}{\sqrt{|P|}}} 
\label{eq:alpha}
\ee
\end{lemma}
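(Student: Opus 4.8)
The plan is to recognise the postselected state $\tilde\rho_C$ as a convex combination of $\tilde\pi_C$ and $\tilde\sigma_C$, and then to control the weight of the unwanted component $\tilde\sigma_C$ by a concentration argument over the random projector $Q$. Applying $Q\ot I_C$ on both sides of the mixture decomposition \eqref{eq:rhoac}, tracing out $A$, and dividing by $\tr(\rho_A Q)=(1-\delta)\tr(\pi_A Q)+\delta\tr(\sigma_A Q)$, I would obtain the identity
\be
\tilde\rho_C=(1-w)\,\tilde\pi_C+w\,\tilde\sigma_C,\qquad w:=\frac{\delta\,\tr(\sigma_A Q)}{\tr(\rho_A Q)},
\ee
valid for every realisation of $Q$ (with the convention $w=0$ when $\tr(\sigma_A Q)=0$). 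This reduces the whole lemma to showing that, with non-zero probability, the scalar $w$ is at most $\alpha$.

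Since $\tilde\sigma_C\geq 0$, the identity gives the operator inequality $\tilde\rho_C\geq(1-w)\tilde\pi_C$. Operator monotonicity of the square root then yields $\sqrt{\sqrt{\tilde\pi_C}\,\tilde\rho_C\sqrt{\tilde\pi_C}}\geq\sqrt{1-w}\,\tilde\pi_C$, and taking the trace gives $F(\tilde\rho_C,\tilde\pi_C)\geq\sqrt{1-w}$. Hence for the purified distance $D(\tilde\rho_C,\tilde\pi_C)=\sqrt{1-F^2}\leq\sqrt{w}$, so it will be enough to prove $w\leq\alpha$.

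The core of the argument is therefore a concentration estimate for the two real random variables $\tr(\sigma_A Q)$ and $\tr(\rho_A Q)$. Because $Q=UQ_0U^\dagger\leq P$, for any operator $M$ one has $\tr(MQ)=\tr((PMP)Q)$, and the first moment of the Haar average, $\E_U[Q]=\tfrac{|Q_0|}{|P|}P$, gives $\E[\tr(MQ)]=\tfrac{|Q_0|}{|P|}\tr(MP)$; bounding $\tr(\sigma_A P)\leq 1$ and $\tr(\rho_A P)\geq 1-2\nu$ then fixes the two means. For the fluctuations I would compute the second moment $\E[\tr(MQ)^2]=\tr\bigl((M\ot M)\,\E[Q\ot Q]\bigr)$ by decomposing the two-fold Haar average $\E[Q\ot Q]$ into its parts on the symmetric and antisymmetric subspaces of the range of $P$; since $\tr((PMP)^2)\leq 1$ for a density matrix $M$, this produces a variance of order $1/|P|$. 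Chebyshev's inequality with the generous threshold $8/\sqrt{|P|}$ then makes each deviation event fail with probability well below $1/2$, so a union bound guarantees that both $\tr(\sigma_A Q)$ and $\tr(\rho_A Q)$ stay within $8/\sqrt{|P|}$ of their means simultaneously with positive probability.

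On that good event I would have $\tr(\sigma_A Q)\leq\tfrac{|Q_0|}{|P|}+\tfrac{8}{\sqrt{|P|}}$ and $\tr(\rho_A Q)\geq(1-2\nu)\tfrac{|Q_0|}{|P|}-\tfrac{8}{\sqrt{|P|}}$, so using $\delta\leq 1$ to absorb the numerator deviation one gets $w\leq\alpha$ with $\alpha$ as in \eqref{eq:alpha}, and therefore $D(\tilde\rho_C,\tilde\pi_C)\leq\sqrt{w}\leq\sqrt{\alpha}$, which is the claim. I expect the main obstacle to lie in the second-moment step: the convexity and first-moment parts are routine, but carrying out the symmetric/antisymmetric bookkeeping of $\E[Q\ot Q]$ and verifying that the $|Q_0|$-dependence of the variance is genuinely dominated by the $1/|P|$ scaling — cleanly enough that Chebyshev delivers a failure probability below $1/2$ for both variables at once — is where the actual work is concentrated.
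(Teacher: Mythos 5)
Your proposal is correct, and its skeleton is the same as the paper's: the same convex decomposition $\tilde\rho_C=(1-w)\tilde\pi_C+w\,\tilde\sigma_C$ with $w=\delta\,\tr(\sigma_A Q)/\tr(\rho_A Q)$, the same reduction to showing $w\le\alpha$ on a good event, and the same good event (both $\tr(\sigma_A Q)$ and $\tr(\rho_A Q)$ within $8/\sqrt{|P|}$ of their Haar means, via a union bound, with $\delta\le 1$ absorbing the numerator deviation). You differ in the two technical sub-steps. First, for $D(\tilde\rho_C,\tilde\pi_C)\le\sqrt{w}$ the paper constructs an explicit purification $\sqrt{1-w}\,\ket{\psi}\ot\ket{0}+\sqrt{w}\,\ket{\phi}\ot\ket{1}$ of $\tilde\rho_C$ and invokes Uhlmann; you instead lower-bound the fidelity directly from the operator inequality $\tilde\rho_C\ge(1-w)\tilde\pi_C$ via operator monotonicity of the square root. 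Both are valid; yours avoids purifications entirely. Second, for the concentration step the paper invokes Levy's lemma (Lipschitz concentration on the unitary group), whereas you compute $\E[Q\ot Q]$ by Schur--Weyl twirling and apply Chebyshev. Your variance bookkeeping does close: the variance of $\tr(MQ)$ is at most roughly $|Q_0|/(|P|^2-1)\le 2/|P|$, so Chebyshev at threshold $8/\sqrt{|P|}$ fails with probability at most about $1/32$ per variable, comfortably enough for the union bound. Levy's lemma buys exponentially small failure probability, but since the lemma only asserts ``non-zero probability,'' your second-moment route suffices and is more elementary and self-contained -- arguably preferable here, since the paper's printed application of Levy's lemma contains typos (the probability bounds are written as $\leq 1-4e^{-(\cdot)}$ rather than $\leq 4e^{-(\cdot)}$, and the lower bound on $\tr(\rho_A Q)$ has a $+\,8|P|^{-1/2}$ where a $-$ is meant), which your argument sidesteps.
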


\begin{proof}
From Eq. \eqref{eq:rhoac} and the definitions of $\tilde\rho_C,\tilde\pi_C$, and $\tilde\sigma_C$:
\be
\tilde\rho_C=(1-\gamma)\tilde \pi_C + \gamma \tilde\sigma_C,
\ee
with 
\be
\gamma := \delta\frac{\tr (\sigma_A Q)}{\tr(\rho_A Q)}.
\ee
We then have \footnote{Indeed, for normalized pure states $\rho, \sigma$, $D(\rho,\sigma)=\sqrt{1-F^2(\rho,\sigma)}$. We also have $F(\rho,\sigma)=\max|\<\psi|\phi\>|$, with the maximum taken over all $\ket{\psi},\ket{\phi}$ which are purifications of $\rho$ and $\sigma$. Let $\ket{\psi},\ket{\phi}$ be purifications of $\tilde \pi_C$ and $\tilde \sigma_C$, respectively. Then $\ket{\eta} := \sqrt{1 - \gamma} \ket{\psi} \otimes \ket{0} + \sqrt{\gamma} \ket{\phi} \otimes \ket{1}$ is a purification of $\tilde \rho_C$. Therefore $D(\tilde \pi_C,\tilde \rho_C) \leq \sqrt{1-|\braket{\eta}{\psi, 0}|^2}$, giving Eq. (\ref{purifiedrelbufidelityandpuri}).}
\be \label{purifiedrelbufidelityandpuri}
D(\tilde\pi_C,\tilde\rho_C)\leq \sqrt{\gamma}.
\ee

We now show that, with non-zero probability, $\gamma \leq \alpha$ with $\alpha$ given by Eq. \eqref{eq:alpha}.
The function $f(U) := \tr(\rho_A UQ_0 U^{\cal y})$ is 1-Lipschitz continuous on the set of unitaries acting on a space of dimension $|P|$. Hence from Levy's lemma \cite{Led01} we have 
\be
\text{Pr}(|\tr(\sigma_A Q)-\E [\tr (\sigma_A Q)]|\geq \ep)\leq 1- 4 e^{-\frac{(2|P|+1)\ep^2}{16}},
\ee
and similarly
\be
\text{Pr}(|\tr(\rho_A Q)-\E [\tr (\rho_A Q)]|\geq \ep)\leq 1- 4 e^{-\frac{(2|P|+1)\ep^2}{16}}.
\ee
We now take $\ep=8 d^{-\frac12}$ so that both probabilities above are smaller than $1/2$.
Then, since 
\be
\E[\tr(\sigma_A Q)]=\tr (\sigma_A P)\frac{|Q|}{|P|},\quad 
\E[\tr(\rho_A Q)]=\tr (\rho_A P)\frac{|Q|}{|P|},
\ee
by the union bound we have that with non-zero probability:
\begin{eqnarray}
&&\tr(\sigma_A Q)\leq \tr (\sigma_A P)\frac{|Q|}{|P|} + 8 |P|^{-\frac12}\leq \frac{|Q|}{|P|} + 8 |P|^{-\frac12}\nonumber\\
&&\tr(\rho_A Q)\geq \tr (\rho_A P)\frac{|Q|}{|P|} + 8 |P|^{-\frac12}\geq (1-2\nu)\frac{|Q|}{|P|} + 8 |P|^{-\frac12},
\end{eqnarray}
and we are done.
\end{proof}

In the proof of parts 1 and 3 of Lemma \ref{bigentropiesimplycorrelations} we make use of the following simple lemma.

\begin{lemma}
\label{lem:cor-norm}
Given $\rho_{AC} \in {\cal D}({\cal H}_A \otimes {\cal H}_B)$ and an operator $M$ on system $A$ satisfying $0\leq  M \leq \id$, 
\begin{equation}
\label{eq:cor-norm}
\text{Cor}(A:C)\geq \frac{p}{2} D(\tilde \rho_C, \rho_C)^2,
\end{equation}
where $p := \tr (M \rho_A)$ and $\tilde\rho_C :=\tr_A [(\sqrt{M} \ot \id_C) \rho_{AC}(\sqrt{M} \ot \id_C)]/p$
(i.e. $\tilde\rho_{C}$ is postselected state of system $C$ after the measurement on system $A$). 
\end{lemma}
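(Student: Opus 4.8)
The plan is to construct explicit test operators realizing the lower bound in the definition of $\cor(A:C)$. The operator $M$ is already admissible on system $A$: since $0 \le M \le \id$ we have $\Vert M \Vert \le 1$. It remains to choose a good operator $N$ on $C$, and the natural candidate is one that optimally distinguishes the postselected state $\tilde\rho_C$ from $\rho_C$. The technical heart of the argument is an algebraic identity connecting the correlation functional evaluated on $M \ot N$ to the pair $(\tilde\rho_C,\rho_C)$. For any $N$ supported on $C$, using that $N$ commutes with $\sqrt M \ot \id_C$ together with cyclicity of the trace,
\[
p\,\tr(N\tilde\rho_C) = \tr\!\big[(\id_A \ot N)(\sqrt M \ot \id_C)\rho_{AC}(\sqrt M \ot \id_C)\big] = \tr\!\big[(\sqrt M \ot \id_C)(\id_A \ot N)(\sqrt M \ot \id_C)\rho_{AC}\big] = \tr\!\big[(M \ot N)\rho_{AC}\big],
\]
where the last step uses $(\sqrt M \ot \id_C)(\id_A \ot N)(\sqrt M \ot \id_C) = M \ot N$. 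Since $\tr[(M \ot N)(\rho_A \ot \rho_C)] = p\,\tr(N\rho_C)$, subtracting gives the clean relation $\tr[(M \ot N)(\rho_{AC} - \rho_A \ot \rho_C)] = p\,\tr\!\big(N(\tilde\rho_C - \rho_C)\big)$.

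Next I would optimize over $N$. Because $\tilde\rho_C - \rho_C$ is Hermitian, taking $N = \Pi_+ - \Pi_-$, the difference of the projectors onto its positive and negative parts, yields $\Vert N \Vert \le 1$ and $\tr\!\big(N(\tilde\rho_C - \rho_C)\big) = \Vert \tilde\rho_C - \rho_C \Vert_1$, the trace norm. Feeding this $N$ and the given $M$ into the definition of $\cor$ (and noting the right-hand side is nonnegative, so the absolute value can be dropped) produces $\cor(A:C) \ge p\,\Vert \tilde\rho_C - \rho_C \Vert_1 = 2p\,D_{\mathrm{tr}}(\tilde\rho_C,\rho_C)$, where $D_{\mathrm{tr}} := \tfrac12\Vert\cdot\Vert_1$ denotes the trace distance.

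Finally I would convert the trace distance into the purified distance $D$ appearing in the statement. The one point requiring care is that $D$ is the \emph{larger} of the two metrics, so I must use a lower bound on $D_{\mathrm{tr}}$ in terms of $D$, not the reverse. Writing $D = \sqrt{1 - F^2}$ (as recalled in the earlier footnote) and invoking the Fuchs--van de Graaf inequality $D_{\mathrm{tr}} \ge 1 - F$, together with $1 - F \ge \tfrac12(1 - F^2) = \tfrac12 D^2$ (valid since $F \le 1$), gives $D_{\mathrm{tr}}(\tilde\rho_C,\rho_C) \ge \tfrac12 D(\tilde\rho_C,\rho_C)^2$. Combining with the previous bound yields $\cor(A:C) \ge p\,D(\tilde\rho_C,\rho_C)^2 \ge \tfrac{p}{2} D(\tilde\rho_C,\rho_C)^2$, which is the claim (in fact with a factor better than stated). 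The main obstacle is nothing computational but rather keeping this last chain of inequalities pointing in the correct direction; the commutation-and-cyclicity identity, though it is where the real content lies, is a routine manipulation once one spots that $\sqrt M$ can be split and recombined around $N$.
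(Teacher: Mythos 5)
Your proof is correct and takes essentially the same route as the paper's: the same key identity $\tr[(M\ot N)(\rho_{AC}-\rho_A\ot\rho_C)] = p\,\tr\left(N(\tilde\rho_C-\rho_C)\right)$, followed by optimizing the observable on $C$ and converting a trace-norm distance into the purified distance (the paper does this by citing Eq.~(\ref{variationallytracenorm}) and Lemma~\ref{purifiedvstracenorm}, which carry the same Fuchs--van de Graaf content you invoke directly). Your only deviation is cosmetic but slightly sharper: by taking $N=\Pi_+-\Pi_-$ with $\Vert N\Vert\le 1$ instead of restricting to $0\le N\le \id$, you capture the full trace norm rather than half of it, yielding $\text{Cor}(A:C)\ge p\,D(\tilde\rho_C,\rho_C)^2$, a factor-2 improvement over the stated bound.
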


\begin{proof}
We have 
\be
\tr((M \otimes N)(\rho_{AC} - \rho_{A} \otimes \rho_{C}))=
p\tr[N(\tilde\rho_C -\rho_C)].
\ee
Maximizing over $0\leq N \leq 1$ we get 
\begin{equation}
\label{eq:cor-norm}
\text{Cor}(A:C)\geq p D_1(\tilde \rho_C, \rho_C) \geq \frac{p}{2} D(\tilde \rho_C, \rho_C)^ 2,
\end{equation}
where we used Eq. (\ref{variationallytracenorm}) and Lemma \ref{purifiedvstracenorm} in Appendix \ref{purifieddistance}.
\end{proof}

\section{Saturation of max-Mutual Information}

We now turn to the proof of Lemma \ref{saturationmaxmutualinfo}:

\begin{replem}{saturationmaxmutualinfo}
\saturationImax
\end{replem}

\begin{proof}
The first step of the proof is to relate the max-mutual information appearing on the L.H.S. of Eq. (\ref{sublinearmutual}) to an hybrid mutual information, in which  $H^{\delta}_{\min}(X_{C, l}| X_{L, l/2} X_{R, l/2}) $ is replaced by its von Neumann counterpart. We have 
\begin{eqnarray}
I_{\max}^{\delta}(X_{C, l}:X_{L, l/2}X_{R, l/2}) &=& H^{\delta}_{\max}(X_{C, l}) - H^{\delta}_{\min}(X_{C, l}| X_{L, l/2}  X_{R, l/2}) \nonumber \\ &=& H^{\delta}_{\max}(X_{C, l}) + \min_{\sigma} S_{\max}^{\delta}(\rho_{X_{2l}} || \id \otimes \sigma) \nonumber \\ &\leq& H^{\varepsilon}_{\max}(X_{C, l}) + \min_{\sigma}  S_{\max}^{\delta}(\rho_{X_{2l}} || \id \otimes \sigma) ,
\end{eqnarray}
where the last inequality follows from the assumption that $\varepsilon \leq \delta /2$. Here $S_{\max}^{\delta}$ is the smooth max-relative entropy defined in Eqs. (\ref{maxrelent}) and (\ref{smoothmaxrelent}) of Appendix \ref{entropies}.

Using Lemma \ref{distanceextension} of Apendix \ref{entropies}, we can find a $\tilde \rho_{X_{2l}} \in {\cal B}_{\varepsilon}(\rho_{X_{2l}})$ such that $\text{rank}(\tilde \rho_{X_{C, l}}) = 2^{H^{\varepsilon}_{\max}(X_{C, l})}$. Let $P$ be a projector onto its support. Then using $\varepsilon \leq \delta/2$,

\begin{eqnarray}
I_{\max}^{\delta}(X_{C, l}:X_{L, l/2}X_{R, l/2}) &\leq& H^{\varepsilon}_{\max}(X_{C, l}) + \min_{\sigma}  S_{\max}^{\delta/2}( \tilde \rho_{X_{2l}} || P \otimes \sigma) \nonumber \\ &=&  \min_{\sigma}  S_{\max}^{\delta/2}(\tilde \rho_{X_{2l}} || (2^{- H^{\varepsilon}_{\max}(X_{C, l})}P) \otimes \sigma) 
\end{eqnarray}
Using the quantum substate theorem (Lemma \ref{thmsubstate}), a version of Fannes inequality (Lemma \ref{fanaudineq}), and the identity $\min_{\sigma} S(\rho_{AB} || \id \otimes \sigma) = - H(A|B)$ \footnote{It is clear that it sufficies that $\id$ is the identity on the support of $\rho_{A}$.}, we find
\begin{eqnarray}
I_{\max}^{\delta}(X_{C, l}:X_{L, l/2}X_{R, l/2}) &\leq& \frac{2}{\delta} \left( \min_{\sigma}  S(\tilde \rho_{X_{2l}} || (2^{- H^{\varepsilon}_{\max}( X_{C, l} ) } P) \otimes \sigma)      \right) +  \delta  \\ &=& \frac{2}{\delta} \left( H^{\varepsilon}_{\max}(X_{C, l}) + H(\tilde \rho_{X_{L,l/2}X_{R,l/2}}) - H(\tilde \rho_{X_{2l}}) \right) +  \delta \nonumber \\ &\leq& \frac{2}{\delta} \left( H^{\varepsilon}_{\max}(X_{C, l}) + H(X_{L,l/2}X_{R,l/2}) - H(X_{2l}) + 2\varepsilon l + 2 h(\epsilon) \right) +  \delta.  \nonumber
\end{eqnarray}

Thus in order to prove the theorem it suffices to show the existence of an $l$ satisfying $2 \overline{l}_0 (4\xi + 1)^2 \log \left( \frac{2}{\varepsilon} \right) \varepsilon^{-1} \leq l \leq \overline{l}_0 \exp(O(\log(1/\varepsilon)/\varepsilon))$ such that 
\begin{equation} \label{hybridmutualsublinear}
H^{\varepsilon}_{\max}(X_{C, l}) + H(X_{L,l/2}X_{R,l/2}) - H(X_{2l})  \leq \varepsilon l.
\end{equation}

Following an idea of Ref. \cite{Has07}, we prove Eq. (\ref{hybridmutualsublinear}) by contradiction. Suppose that for all $2 \overline{l}_0 (4\xi + 1)^2 \log \left( \frac{2}{\varepsilon} \right) \varepsilon^{-1} \leq l \leq \overline{l}_0 \exp(O(\log(1/\varepsilon)/\varepsilon))$,
\begin{equation} \label{hybridinter00}
H(X_{2l}) \leq H^{\varepsilon}_{\max}(X_{C, l}) + H(X_{L,l/2}X_{R,l/2}) - \varepsilon l .
\end{equation}
Then we will show that this leads to the entropy $H(X_{2l})$ being negative for $l = \overline{l}_0\exp(O(\log(1/\varepsilon)/\varepsilon))$.

Lemma \ref{saturationmutualvonNeumann} gives an argument for the von Neumann mutual information. Thus the challenge is how to adapt the proof of Lemma \ref{saturationmutualvonNeumann} to Eq. (\ref{hybridinter}), in which one of the von Neumann entropies is replaced by the max-entropy. In the sequel we show how to do it exploiting the fact that the state $\rho_{1, ..., n}$ has exponential decay of correlations\footnote{We leave as an open question whether the result holds true for general states without having to assume exponential decay of correlations.}. The idea we explore is to relate $H^{\varepsilon}_{\max}(X_{l})$ to the von Neumann entropy. For general states this is not possible, but by assuming exponential decay of correlations we will achieve it. 

\begin{figure}  
\begin{center}  
\includegraphics[height=4cm,width=14cm,angle=0]{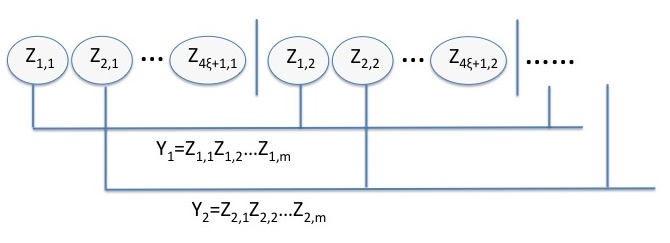}  
\caption{\small Definition of regions $Y$'s and $Z$'s.  \label{fig1}}  
\end{center}  
\end{figure}  

Let us break the region $X_{C, l}$ into a partition of $(4\xi + 1)$ equally sized subregions, labelled by $\{ Y_{k} \}_{k=1}^{(4\xi + 1)}$, each containing $l/(4 \xi + 1)$ sites (with $\xi$ the correlation length of the state). Each region $Y_k$ is given by $m := c/(4 \xi + 1)$ blocks of $l/c$ sites (for a constant $c$ to be determined later), themselves labelled by $Z_{k, j}$. The $Z_{k, j}$ are separated by a distance of $(4 \xi l)/c$ sites to each other, i.e. $Z_{k, j}$ is $(4\xi l) / c$ sites away from $Z_{k, j'}$ for every $j' \neq j$. The structure of the regions is depicted in Figure \ref{fig1}.

By iteratively applying subadditivity of $H_{\max}^{\varepsilon}$ (Lemma \ref{subadditivityHmax}),
\begin{equation} \label{subadditivityHmax2}
H^{\varepsilon}_{\max}(X_{C, l}) \leq \sum_{k=1}^{(4\xi + 1)} H^{ 4^{-(4\xi + 1)} \varepsilon }_{\max}(Y_{k}) + (4 \xi + 1)^2 \log \frac{32}{\varepsilon}.
\end{equation}

We claim that by exponential decay of correlations, for every $k$, the state $\rho_{Y_{k}}$ is close in trace norm to a $m$-fold tensor product of states, each with $l/c$ sites. 
Indeed using the identity 
\begin{eqnarray}
\rho_{Z_{k, 1},...,Z_{k, m}} - \rho_{Z_{k, 1}} \otimes ... \otimes  \rho_{Z_{k, m}} &=& \rho_{Z_{k, 1},...,Z_{k, m}} - \rho_{Z_{k, 1}} \otimes \rho_{Z_{k, 2},...,Z_{k, m}} \nonumber \\ &+&  \rho_{Z_{k, 1}} \otimes \rho_{Z_{k, 2},...,Z_{k, m}} - \rho_{Z_{k, 1}} \otimes \rho_{Z_{k, 2}} \otimes \rho_{Z_{k, 3},...,Z_{k, m}} \nonumber \\ &+& \rho_{Z_{k, 1}} \otimes \rho_{Z_{k, 2}} \otimes \rho_{Z_{k, 3},...,Z_{k, m}} - ... \nonumber \\ &+& ... \nonumber \\  &+& \rho_{Z_{k, 1}} \otimes ... \otimes  \rho_{Z_{k, m-1}, Z_{k, m}}   -  \rho_{Z_{k, 1}} \otimes ... \otimes  \rho_{Z_{k, m}},
\end{eqnarray}
the triangle inequality, and Lemma \ref{bounddatahiding}:
\begin{eqnarray} \label{boundEDC}
\Vert \rho_{Z_{k, 1},...,Z_{k, m}} - \rho_{Z_{k, 1}} \otimes ... \otimes \rho_{Z_{k, m}} \Vert_1 &\leq& \sum_{j=1}^{m} \Vert \rho_{Z_{k, j}, ..., Z_{k, m}} -  \rho_{Z_{k, j}} \otimes  \rho_{Z_{k, j+1},..., Z_{k, m}} \Vert_1 \nonumber \\ &\leq& 2^{\frac{3 l}{c}} \sum_{j=1}^{m}  \text{Cor}\left( Z_{k, j}, (Z_{k, j+1},..., Z_{k, m})\right) \nonumber \\ &\leq& m 2^{\frac{3 l}{c}} 2^{- \frac{4 \xi l} {c \xi}} = m 2^{-\frac{l}{c}},
\end{eqnarray}
where the last inequality follows from $\text{Cor}\left( Z_{k, j}, (Z_{k, j+1},..., Z_{k, m})\right) \leq 2^{- \frac{4 \xi l} {c \xi}}$, which is a consequence of exponential decay of correlations and the fact that $Z_{k, j}$ is $4\xi l/c$ sites away from $Z_{k, j+1},..., Z_{k, m}$. By Lemma \ref{purifiedvstracenorm},
\begin{equation}
D \left( \rho_{Z_{k, 1},...,Z_{k, m}} , \rho_{Z_{k, 1}} \otimes ... \otimes \rho_{Z_{k, m}} \right) \leq \sqrt{2m} 2^{-\frac{l}{2c}}.
\end{equation}

From Eqs. (\ref{subadditivityHmax2}) and (\ref{boundEDC}),
\begin{equation}
H^{\varepsilon}_{\max}(X_{C, l}) \leq \sum_{k=1}^{(4\xi + 1)} H^{\delta}_{\max}( \rho_{Z_{k, 1}} \otimes ... \otimes \rho_{Z_{k, m}}) + (4 \xi + 1)^2 \log \frac{2}{\varepsilon},
\end{equation}
with $\delta := 4^{-(4\xi + 1)} \varepsilon - \sqrt{2m} 2^{-l/(2c)}$. By the quantum equipartition property (Lemma \ref{QEP}), 
\begin{equation}
H^{\delta}_{\max}( \rho_{Z_{k, 1}} \otimes ... \otimes \rho_{Z_{k, m}}) \leq H( \rho_{Z_{k, 1}} \otimes ... \otimes \rho_{Z_{k, m}}) + 10\sqrt{\log(1/\delta)} \sqrt{m} \frac{l}{c}.
\end{equation}
Setting $c := \frac{10^{3}}{\xi^3 \varepsilon^2 \log(1/\varepsilon)}$, we find 
\begin{equation}
H^{\delta}_{\max}( \rho_{Z_{k, 1}} \otimes ... \otimes \rho_{Z_{k, m}}) \leq\sum_{j=1}^{m} H(\rho_{Z_{k, j}}) + \frac{\varepsilon l }{4 (4 \xi + 1)}.
\end{equation}
and thus
\begin{equation} \label{hybridinter11}
H^{\varepsilon}_{\max}(X_{C, l}) \leq \sum_{k=1}^{(4\xi + 1)} \sum_{j=1}^{m} H(Z_{k, j}) - \varepsilon l / 2,
\end{equation}
where we used that $l \geq \overline{l}_0 (4\xi + 1)^2 \log \left( \frac{2}{\varepsilon} \right) \varepsilon^{-1}$.

We can now apply Eq. (\ref{hybridinter11}) to Eq. (\ref{hybridinter00}), and apply subadditivity of von Neumann entropy repeatedly to the region $X_{L,l/2}X_{R,l/2}$ until we reach regions of size $l/c$ to obtain:
\begin{equation} 
H(X_{2l}) \leq  2c H(X_{l/c}) - \frac{\varepsilon}{2} l.
\end{equation}
Appplying the equation above recursively $\log_c(l/\overline{l}_0)$ times we get
\begin{equation} 
H(X_{2l}) \leq  \frac{2l}{\overline{l}_0} H(X_{\overline{l}_0}) - \frac{\varepsilon}{2}l \log_{c}(l/\overline{l}_0),
\end{equation}
which leads to a contradiction for $l = \overline{l}_0 c^{2/\varepsilon} = \overline{l}_0 \exp \left( O \left( \log(1/\epsilon) / \epsilon \right)   \right)$.
\end{proof}


\section{Proof of Theorem \ref{maintheoremmixed}}

In this section we prove Theorem \ref{maintheoremmixed}. The main idea is to consider a purification of the state and follow the proof of the pure state case, adding the purifying system to the part attributed to the reference system, or environment system, in each of the two applications of the entanglement distillation protocol. Apart from this, the proof will follow closely the argument in the pure state case. 

\begin{repthm}{maintheoremmixed}
\entropyboundmixed
\end{repthm}

\begin{proof}
Let $\ket{\psi}_{SE}$ be a purification of the state $\rho$, i.e. $\tr_{R}(\ket{\psi}_{SE}\bra{\psi}) = \rho$. We start by 
applying Lemma \ref{saturationmaxmutualinfo} twice to the two boundaries of the region $X$, with $\delta := 10^{-8}$, $\varepsilon := \min \left(1/(10^5 \xi), \delta/8 \right)$, and $\overline{l_0} := \max \left(4(1+H_{\max}(\rho))  l_0, 4(1+H_{\max}(\rho)) c l_0  \varepsilon/(\xi \log(2/\varepsilon)) \right)$, for a constant $c > 0$. Then we find there are regions $Y$ and $\tilde Y$ (see Fig. \ref{fig2} (a)), each of size $2l$ with $O(\xi^2 \log \left( 2/\varepsilon\right))  \leq l/\overline{l}_0 \leq \exp(O(\log(1/\varepsilon)/\varepsilon))$, and at a distance at most $\overline{l}_0\exp(O(\log(1/\varepsilon)/\varepsilon))$ sites away from the two boundaries of $X$, respectively, such 
that 
\begin{equation}  \label{saturationmutualfirstpart2}
I^{\delta/4}_{\max}(Y_{C, l} : Y_{L, l/2}  Y_{R, l/2}) = H^{\delta/4}_{\max}(Y_{C, l}) - H^{\delta/4}_{\min}(Y_{C, l}| Y_{L, l/2}  Y_{R, l/2}) \leq \frac{l}{8 \xi},
\end{equation}
and likewise for $\tilde Y$.  We note that from the choice of parameters if follows that $l \geq 300 \xi$. By sub-additivity we then find
\begin{equation} \label{saturationmutualfirstpart22}
I^{\delta}_{\max}(Y_{C, l} : Y_{L, l/2}  Y_{R, l/2}E) \leq \frac{l}{8 \xi} + H_{\max}(E).
\end{equation}

We now argue that exponential decay of correlations implies that the entropy of $\rho_{Y_{C, l}}$ must be small. Let $R$ be the complementary region to $Y$ (see Fig. \ref{fig2} (b)). Then from exponential decay of correlations we have $\text{Cor}(Y_{C, l} : R) \leq 2^{- l / (2\xi)}$. Applying part (1) of Lemma \ref{bigentropiesimplycorrelations} we find that
\begin{equation} \label{Hmaxeqbound22}
H_{\min}^{\delta/4}(Y_{C, l}| Y_{L, l} Y_{R, l} ) < -2 \log(\delta) + 9, 
\end{equation}
since otherwise
\begin{equation}
\text{Cor}(Y_{C, l} : R)  \geq \left(\frac{1}{256} - 26 \sqrt{\delta}\right) 2^{- I^{\delta/4}_{\max}(Y_{C, l} : Y_{L, l/2}  Y_{R, l/2}E) - 2 \log(\delta) + 9 } \geq 2^{- \left( \frac{l}{8 \xi} - 34 \right)},
\end{equation}
in contradiction with the correlation length being $\xi$, since $l \geq 300 \xi$. 

From Eq. (\ref{Hmaxeqbound22}) and Eq. (\ref{saturationmutualfirstpart22}) we thus have
\begin{equation}  \label{boundYc22}
H^{\delta/4}_{\max}(Y_{C, l}) \leq \frac{l}{8 \xi} -2 \log(\delta) + 9 + H_{\max}(\rho),
\end{equation}
where we used $H(E) = H(\rho)$, since $E$ is purification of the state.

Applying the same reasoning to $\tilde Y$ we find that also $H^{\delta/4}_{\max}(\tilde Y_{C, l}) \leq \frac{l}{8 \xi} - 2 \log(\delta) + 9 + H_{\max}(\rho)$. 

Define $A$ as the region between $Y_{C, l}$ and $\tilde Y_{C, l}$, $B := Y_{C, l} \tilde Y_{C, l}$, and $C$ the complementary region to $AB$ (see Fig. \ref{fig2} (c)). Note that $A$ differs from $X$ by at most $2 \overline{l}_0\exp(O(\log(1/\varepsilon)/\varepsilon))$ sites and so by Lemma \ref{subadditivityHmax}, for every $\nu > 0$,
\begin{equation} \label{xregionboundbyA22}
H_{\max}^{2\nu}(X) \leq H_{\max}^{\nu}(A) + 2\overline{l}_0\exp(O(\log(1/\varepsilon)/\varepsilon)) + \log \frac{2}{\nu^2}.
\end{equation}

We now prove an upper bound on $H_{\max}^{\nu}(A)$, with $\nu = 0.01$. Since $A$ and $C$ are separated by $l$ sites, we have from exponential decay of correlations that 
\begin{equation} \label{corrlargesecondpart22}
\text{Cor}(A:C) \leq 2^{-l/\xi}. 
\end{equation}
From part (2) of Lemma \ref{bigentropiesimplycorrelations} and Eq. (\ref{corrlargesecondpart22}) we then find that
\begin{equation} \label{BonC22}
H_{\max}^{\nu}(C) < 3 H_{\max}^{\delta}(BE),
\end{equation}
since otherwise, using that $l \geq 300\xi$,
\begin{equation}
\text{Cor}\left( A:C \right) \geq \left( \frac\nu2-2\delta - \sqrt{\frac{4\delta}{1-2\nu-2\delta}} \right)^2 2^{-3\left( H^\delta_{\max}(BE) + 16 \log \left(\delta^{-2} + \nu^{-2} \right) + 5 \right)} \geq  2^{-3l/(4 \xi)},
\end{equation}
contradicting Eq. (\ref{corrlargesecondpart22}). In the last inequality of the equation above we used that
\begin{eqnarray}
H_{\max}^{\delta}(BE) &=& H_{\max}^{\delta}(Y_{C, l} \tilde Y_{C, l}E) \nonumber \\ &\leq& H_{\max}(\rho) + H_{\max}^{\delta/4}(Y_{C, l}) + H_{\max}^{\delta/4}(\tilde Y_{C, l}) + \log \frac{8}{\delta^2} \nonumber \\ &\leq& 3H_{\max}(\rho) +  \frac{l}{4 \xi} - 4 \log(\delta) + 18  + \log \frac{8}{\delta^2},
\end{eqnarray}
where the first inequality follows from subadditivity of the max-entropy (Lemma \ref{subadditivityHmax}) and the second from Eq. (\ref{boundYc22}).

Using subadditivity of the max-entropy (Lemma \ref{subadditivityHmax}) again, we get
\begin{eqnarray}
H_{\max}^{3\nu + \delta}(A) = H_{\max}^{3\nu + \delta}(BCE) &\leq& H_{\max}(\rho) +  H_{\max}^{\delta}(B) +  H_{\max}^{\nu}(C) + \log \frac{2}{\nu^2} \nonumber \\ &\leq& 4H_{\max}^{\delta}(B) + \log \frac{2}{\nu^2} \nonumber \\ &\leq&  2\overline{l}_0\exp(O(\log(1/\varepsilon)/\varepsilon)),
\end{eqnarray}
where we used Eq. (\ref{BonC22}) and that $H_{\max}(B) \leq 2l \leq \overline{l}_0\exp(O(\log(1/\varepsilon)/\varepsilon))$. Therefore by Eq. (\ref{xregionboundbyA22}),
\begin{equation}
H_{\max}^{6\nu + 2\delta}(X) \leq 2\overline{l}_0\exp(O(\log(1/\varepsilon)/\varepsilon)) = O(l_0)\exp(O(\log(1/\varepsilon)/\varepsilon)).
\end{equation}

This is already an area law for $X$, although with a fixed error (recall that we fixed $\delta=10^{-8}$ and $\nu=0.01$). To finish the proof we show how applying exponential decay of correlations once more we can reduce the error. Let $Z$ be a region of size $2l$ separating $X$ from the remaining sites by a distance $l$, and denote by $U$ the complementary region to $XZ$ (see Fig. \ref{fig2} (d)). By exponential decay of correlations we have $\text{Cor}(X:U) \leq 2^{-l/\xi}$. Using part (3) of Lemma \ref{bigentropiesimplycorrelations} we then get
\begin{equation}
H_{\max}^{2^{-l/(4\xi)}}(X)  \leq H_{\max}^{2^{-l/(2\xi) + 1}}(X) \leq H_{\max}^{6\nu + 2\delta}(X) + 2l  \leq O(l_0) \exp(O(\log(1/\varepsilon)/\varepsilon)) + 2l,
\end{equation}
where the first inequality follows assuming $l \geq 4\xi$. 
\end{proof}

\section{Proofs of Corollaries \ref{vonNeumannarealaw} and \ref{MPS}}

\noindent \textit{Proof Corollary \ref{vonNeumannarealaw}:}

The proof follows from Lemma 2 of \cite{Has07}. Let $\{ \lambda_i \}$ be the eigenvalues of $\rho_X := \tr_{\backslash X} (\ket{\psi} \bra{\psi})$ in decreasing order. Theorem \ref{maintheorempure} gives that for every $l$,
\be \label{constraints}
\sum_{i \hspace{0.01 cm} : \hspace{0.01 cm}  i \geq 2^{c' l_0 \exp(c \log(\xi) \xi) + l}} \lambda_i \leq 2^{- l / 8 \xi}. 
\ee
Eq. (\ref{vnbound}) follows by maximizing $- \sum_i \lambda_i \log(\lambda_i)$ subject to the constraints of Eq. (\ref{constraints}) (see Lemma 2 of \cite{Has07}). 

\noindent \textit{Proof Corollary \ref{MPS}:}

The corollary follows from the argument of \cite{Has07} (section "Matrix Product States").

\section{Conclusions and Open Questions}

In this work we proved that for one-dimensional quantum states an area law for their entanglement entropy follows merely from the fact that the state has exponential decay of correlations. While intuitively very natural, the relation of exponential decay of correlations and area law was put into question by the peculiar kind of correlations embodied in the so-called quantum data hiding states. The results of the paper thus shows that, despite the difficulties caused by these type of correlations, the physically motivated intuition is nonetheless correct. 

In a sense the obstruction provided by ideas from quantum information theory, namely the concept of data hiding states, can be overcome by considering the problem \textit{also} from the perceptive of quantum information theory. In particular we employed the central idea in quantum Shannon theory of \textit{decoupling} two quantum systems by performing a random measurement in one of them -- an idea that has been used to derive the best known protocols for a variety of information theoretic tasks \cite{ADHW09, HOW05, HOW07, DBWR10, BCR11, BDHSW09, HHYW08}) -- as well as recent developments in the framework of single-shot quantum information theory \cite{Ren05, Tom12}. In this respect the results of this paper represent an interesting application of this framework to a problem \textit{outside} of information theory, giving further evidence of its utility (see e.g. Refs. \cite{dRARDV11, HO11} for other examples).

We now list a few open problems:

\begin{enumerate}
\item Can we improve the dependency of correlation length in the entanglement entropy bound? The exponential dependency found here seems hardly optimal, and for groundstates of 1D gapped local Hamiltonians -- an important class of states with exponential decay of correlations -- the recent result of Arad, Kitaev, Landau and Vazirani \cite{AKLV12} shows that such an improvement is indeed feasible. A possible direction to get a sharper bound would be to improve the result about saturation of mutual information, getting a better bound on the size of the region one must vary in order to get small mutual information, perhaps exploiting the fact that the state under consideration has exponential decay of correlations.

\item How small can we choose the constant the smoothing term in Eq. (\ref{entropybound}) of Theorem \ref{maintheorempure}? We believe this is an interesting question because an improvement from $l/8\xi$ to any number strictly smaller than one would imply a matrix product representation of \textit{sublinear} bond dimension, which by the methods of Refs. \cite{AAI10, SC10} would lead to a subexponential-time algorithm for obtaining the ground state of gapped 1D Hamiltonians. 

\item Can we prove an extension of Theorem \ref{saturationmaxmutualinfo} for general states, without the assumption of exponential decay of correlations? Hastings' result about the saturation of mutual information \cite{Has07} (see Lemma \ref{saturationmutualvonNeumann}) and its single-shot counterpart given by Lemma \ref{saturationmaxmutualinfo} appear to be powerful results about the distribution of correlations on different length-scales. Are there more applications of them?

\item Can we extend the result to mixed states with exponential decay of correlations? Theorem \ref{maintheoremmixed} gives a first step in this direction, however the result is not completely satisfactory as the statement is only meaningful for states of low entropy. In turn, it would be very interesting to explore whether for general mixed states exponential decay of correlations implies a bound on the \textit{mutual information} of an arbitrary region with its complementary region (of the same flavour as the area law for thermal states proved in \cite{WFHC08}). We believe the techniques developed in this paper might be useful in addressing this problem, although a new idea appears to be needed. 


\item Can we prove area law from exponential decay of correlations for states of an infinite number of particles? The challenge here is how to generalize the information-theoretic tools employed in the proof, in particular the entanglement distillation protocol, to the setting of von Neumann algebras. In this respect the recent results of Ref. \cite{BFS11} might be useful. 

\item Finally can the approach of this paper be used to prove an area law from exponential decay of correlations in higher dimensions, or even \textit{just} an area law for groundstates of gapped local Hamiltonians in 2D? The latter is one of the most important open problems in the field of quantum Hamiltonian complexity \cite{Osb11} and we hope this work will drawn the attention of quantum information theorists to it. Again a new idea appears to be needed, the main difficulty being that in higher dimensions the separation distance between two regions is not of the same order as the number of sites in the separating region, a feature that was crucially explored in this work. 

\end{enumerate}

\section{Acknowledgement}

We would like to thank Dorit Aharonov, Itai Arad, and Aram Harrow for interesting discussions on area laws and related subjects, Matt Hastings for useful correspondence and Milan Holz\"apfel for useful comments on an earlier version of the paper. FB acknowledges support from the Swiss National Science Foundation, via the National Centre of Competence in Research QSIT. MH thanks the support of  EC IP QESSENCE,  ERC QOLAPS, and National Science Centre, grant no. DEC-2011/02/A/ST2/00305. Part of this work was done at National Quantum Information Centre of Gdansk. F.B. and M.H. thank the hospitality of Institute Mittag Leﬄer within the program ”Quantum Information Science” (2010), where part of this work was done. 

\appendix

\section{Purified Distance} \label{purifieddistance}

Let ${\cal H}$ be a finite dimensional vector space and ${\cal B}({\cal H})$ the set of linear operators on ${\cal H}$. We define set of sunormalized states ${\cal D}_{\leq}({\cal H}) = \{ \rho \in {\cal B}({\cal H}) : \tr(\rho) \leq 1\}$ and the set of normalized states ${\cal D}({\cal H}) = \{ \rho \in {\cal B}({\cal H}) : \tr(\rho) = 1\}$. 

Let $F(\rho, \sigma) = \tr(\sqrt{\sigma^{1/2}\rho \sigma^{1/2}})$ be the fidelity of $\rho$ and $\sigma$. We quantify the distance of quantum states by the purified distance \cite{TCR09}:
\begin{definition}
Let $\rho, \sigma \in {\cal D}_{\leq}({\cal H})$. The purified distance between $\rho$ and $\sigma$ is defined as
\begin{equation}
D(\rho, \sigma) = \sqrt{1 - \overline{F}(\rho, \sigma)^{2}},
\end{equation}
where $\overline{F}(\rho, \sigma) = F(\rho, \sigma) + \sqrt{(1 - \tr(\rho))(1 - \tr(\sigma))}$ denotes the generalized fidelity. 
\end{definition}
We define the $\varepsilon$-ball around $\rho$ as 
\begin{equation}
{\cal B}^{\varepsilon}(\rho) = \{ \rho' \in {\cal D}_{\leq}({\cal H}) : D(\rho, \rho') \leq \varepsilon \}.
\end{equation}

The next lemma is a slight variant of Uhlmann's theorem for the fidelity. 

\begin{lemma}[Uhlmann's Theorem for Purified Distance \cite{TCR09}] \label{purifieddistancepurification}
Let $\rho, \sigma \in {\cal D}_{\leq}({\cal H})$. Then
\begin{equation}
D(\rho, \sigma) = \min_{\ket{\psi}, \ket{\phi}} D(\ket{\psi}\bra{\psi}, \ket{\phi}\bra{\phi}),
\end{equation}
where the minimum is taken over purifications $\ket{\psi}, \ket{\phi}$ of $\rho$ and $\sigma$, respectively.
\end{lemma}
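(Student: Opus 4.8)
The plan is to reduce the claim to the ordinary Uhlmann theorem for the fidelity. Since $D(\rho,\sigma)=\sqrt{1-\overline{F}(\rho,\sigma)^2}$ and the map $x\mapsto\sqrt{1-x^2}$ is strictly decreasing on $[0,1]$, minimizing the purified distance over purifications is the same as maximizing the generalized fidelity. Thus it suffices to prove
\[
\overline{F}(\rho,\sigma)=\max_{\ket{\psi},\ket{\phi}}\overline{F}(\ket{\psi}\bra{\psi},\ket{\phi}\bra{\phi}),
\]
where the maximum ranges over purifications $\ket{\psi},\ket{\phi}$ of $\rho$ and $\sigma$ on a common space.

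The key simplification is to dispose of the deficit term. For two subnormalized pure states a one-line computation gives $F(\ket{\psi}\bra{\psi},\ket{\phi}\bra{\phi})=|\braket{\psi}{\phi}|$, whence
\[
\overline{F}(\ket{\psi}\bra{\psi},\ket{\phi}\bra{\phi})=|\braket{\psi}{\phi}|+\sqrt{(1-\braket{\psi}{\psi})(1-\braket{\phi}{\phi})}.
\]
Since any purification preserves the trace, $\braket{\psi}{\psi}=\tr(\rho)$ and $\braket{\phi}{\phi}=\tr(\sigma)$ for every purification, so the second term is constant and equals exactly the deficit term appearing in $\overline{F}(\rho,\sigma)$. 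The problem therefore collapses to the ordinary Uhlmann identity $\max_{\ket{\psi},\ket{\phi}}|\braket{\psi}{\phi}|=F(\rho,\sigma)$, now in the subnormalized setting.

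I would establish this last identity by two matching bounds. For the upper bound, monotonicity of the fidelity under the partial-trace channel gives $|\braket{\psi}{\phi}|=F(\ket{\psi}\bra{\psi},\ket{\phi}\bra{\phi})\leq F(\rho,\sigma)$ for all purifications. For achievability, write $\rho=\tr(\rho)\rho'$ and $\sigma=\tr(\sigma)\sigma'$ with $\rho',\sigma'$ normalized, apply the standard (normalized) Uhlmann theorem to obtain normalized purifications $\ket{\psi'},\ket{\phi'}$ with $|\braket{\psi'}{\phi'}|=F(\rho',\sigma')$, and rescale by setting $\ket{\psi}:=\sqrt{\tr(\rho)}\,\ket{\psi'}$ and $\ket{\phi}:=\sqrt{\tr(\sigma)}\,\ket{\phi'}$. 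These are purifications of $\rho$ and $\sigma$, and using $F(\rho,\sigma)=\sqrt{\tr(\rho)\tr(\sigma)}\,F(\rho',\sigma')$ they satisfy $|\braket{\psi}{\phi}|=F(\rho,\sigma)$. A global phase on $\ket{\psi}$ can always be chosen so that $\braket{\psi}{\phi}\geq 0$, so the two nonnegative terms add constructively and the optimum of $\overline{F}$ is attained.

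The only genuinely delicate point is the treatment of the deficit term $\sqrt{(1-\tr(\rho))(1-\tr(\sigma))}$, which is what distinguishes the subnormalized statement from the textbook one. Once one observes that purification leaves the trace, and hence this term, invariant, the whole content reduces to ordinary Uhlmann. An equivalent route would embed $\rho,\sigma$ into normalized states $\hat\rho=\rho\oplus(1-\tr(\rho))$ and $\hat\sigma=\sigma\oplus(1-\tr(\sigma))$ on ${\cal H}\oplus\bbC$, for which $F(\hat\rho,\hat\sigma)=\overline{F}(\rho,\sigma)$, and then invoke normalized Uhlmann directly; the rescaling argument above avoids the small bookkeeping of matching purifications of the completed form.
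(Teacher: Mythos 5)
Your proof is correct, but note that the paper itself never proves this lemma: it is imported directly from Ref.~\cite{TCR09}, so the relevant comparison is with the argument given there. Your main route is sound at every step: $F(\ket{\psi}\bra{\psi},\ket{\phi}\bra{\phi})=|\braket{\psi}{\phi}|$ holds for subnormalized vectors; every purification satisfies $\braket{\psi}{\psi}=\tr(\rho)$, so the deficit term $\sqrt{(1-\tr(\rho))(1-\tr(\sigma))}$ is the same constant for the purifications as for $\rho,\sigma$ themselves, collapsing the problem to $\max|\braket{\psi}{\phi}|=F(\rho,\sigma)$; the upper bound via monotonicity of $F$ under partial trace extends to subnormalized operators by homogeneity; and the rescaled vectors $\sqrt{\tr(\rho)}\,\ket{\psi'}$, $\sqrt{\tr(\sigma)}\,\ket{\phi'}$ are genuine purifications achieving equality. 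The proof in \cite{TCR09} is essentially your ``alternative route'': embed $\rho\mapsto\hat\rho=\rho\oplus(1-\tr(\rho))$ into normalized states on ${\cal H}\oplus\mathbb{C}$, observe $\overline{F}(\rho,\sigma)=F(\hat\rho,\hat\sigma)$, and invoke normalized Uhlmann there. That version handles uniformly the degenerate case $\tr(\rho)=0$, where your normalization $\rho'=\rho/\tr(\rho)$ is undefined (though the claim is then trivial, since the only purification of the zero operator is the zero vector, and both sides equal $\sqrt{1-(1-\tr(\sigma))}$ under the map $x\mapsto\sqrt{1-x^2}$). Two cosmetic remarks: your phase-alignment step is superfluous, since the fidelity of the two pure states is already the modulus $|\braket{\psi}{\phi}|$ and the deficit term is nonnegative regardless; and strictly speaking one should note that both purifications can be taken on a common purifying space of sufficient dimension, which is exactly what the normalized Uhlmann theorem supplies.
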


For $\rho, \sigma \in {\cal D}_{\leq}({\cal H})$, define 
\begin{equation} \label{tracenormdef}
D_{1}(\rho, \sigma) := \frac{1}{2}\Vert \rho - \sigma \Vert_1  +  \frac{1}{2}| \tr(\rho) - \tr(\sigma) |.
\end{equation}
In Ref. \cite{TCR09} it is shown that
\begin{equation}  \label{variationallytracenorm}
D_{1}(\rho, \sigma) = \max_{0 \leq M \leq \id} |\tr(M(\rho - \sigma))|.
\end{equation}

\begin{lemma}[Purified Distance Versus Trace Norm \cite{TCR09}] \label{purifiedvstracenorm}
For $\rho, \sigma \in {\cal D}_{\leq}({\cal H})$,
\begin{equation}
D_{1}(\rho, \sigma) \leq D(\rho, \sigma) \leq \sqrt{2  D_{1}(\rho, \sigma) }.
\end{equation}
\end{lemma}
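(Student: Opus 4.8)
The plan is to reduce the claim to the corresponding statement for \emph{normalized} states, where the two inequalities are exactly the Fuchs--van de Graaf inequalities. I would embed the subnormalized states into genuine density operators on a slightly enlarged space: adjoining a one-dimensional system, set $\widehat\rho := \rho \oplus (1 - \tr(\rho))$ and $\widehat\sigma := \sigma \oplus (1-\tr(\sigma))$ as block-diagonal operators on ${\cal H}\oplus\mathbb{C}$. The subnormalization $\tr(\rho),\tr(\sigma)\leq 1$ guarantees that the adjoined scalars are nonnegative, and by construction $\tr(\widehat\rho)=\tr(\widehat\sigma)=1$, so both are normalized states.

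The two identities that make the reduction work are that this embedding converts the generalized quantities into the ordinary ones. First, because $\widehat\rho$ and $\widehat\sigma$ are block diagonal along ${\cal H}\oplus\mathbb{C}$, the operator $\widehat\sigma^{1/2}\widehat\rho\,\widehat\sigma^{1/2}$ and hence its square root decompose blockwise, so a direct computation gives $F(\widehat\rho,\widehat\sigma)=\tr\sqrt{\sigma^{1/2}\rho\,\sigma^{1/2}}+\sqrt{(1-\tr(\rho))(1-\tr(\sigma))}=\overline{F}(\rho,\sigma)$. Therefore $D(\rho,\sigma)=\sqrt{1-\overline{F}(\rho,\sigma)^2}$ equals the purified distance between the normalized pair $\widehat\rho,\widehat\sigma$. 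Second, since $\widehat\rho-\widehat\sigma=(\rho-\sigma)\oplus(\tr(\sigma)-\tr(\rho))$ is also block diagonal, its trace norm adds across the blocks, yielding $\frac{1}{2}\Vert\widehat\rho-\widehat\sigma\Vert_1=\frac{1}{2}\Vert\rho-\sigma\Vert_1+\frac{1}{2}|\tr(\rho)-\tr(\sigma)|=D_1(\rho,\sigma)$.

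With these identifications it remains only to invoke the Fuchs--van de Graaf inequalities $1-F(\widehat\rho,\widehat\sigma)\leq \frac{1}{2}\Vert\widehat\rho-\widehat\sigma\Vert_1\leq\sqrt{1-F(\widehat\rho,\widehat\sigma)^2}$ for the normalized pair. The right-hand inequality reads $D_1(\rho,\sigma)\leq D(\rho,\sigma)$, the first claimed bound. For the second, the left-hand inequality gives $1-F(\widehat\rho,\widehat\sigma)\leq D_1(\rho,\sigma)$, so that $D(\rho,\sigma)^2=1-\overline{F}(\rho,\sigma)^2=(1-F(\widehat\rho,\widehat\sigma))(1+F(\widehat\rho,\widehat\sigma))\leq 2D_1(\rho,\sigma)$, using $1+F\leq 2$; taking square roots gives $D(\rho,\sigma)\leq\sqrt{2D_1(\rho,\sigma)}$.

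I do not expect a genuine obstacle, since the argument is routine once the embedding is in place; the only point requiring care is the block-diagonal factorization of the fidelity, where one must justify that the square root of a direct sum of positive operators supported on orthogonal subspaces is the direct sum of their square roots. The role of the embedding step is precisely to legitimize the Fuchs--van de Graaf inequalities, which are ordinarily stated only for normalized states.
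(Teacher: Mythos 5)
Your proposal is correct: the block-diagonal dilation $\widehat\rho=\rho\oplus(1-\tr(\rho))$, $\widehat\sigma=\sigma\oplus(1-\tr(\sigma))$ does convert the generalized fidelity and $D_1$ into their ordinary normalized counterparts, and the two Fuchs--van de Graaf inequalities then yield both bounds exactly as you compute (including the factor-of-two step via $1+F\leq 2$). The paper itself offers no proof of this lemma --- it is imported verbatim from \cite{TCR09} --- and your argument is essentially the proof given in that reference, so there is nothing further to reconcile.
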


One of the main reasons why it is useful to consider the purified distance is the following:

\begin{lemma} [Purified Distance of Extensions \cite{TCR09}] \label{distanceextension}
Let $\rho, \sigma \in {\cal D}_{\leq}({\cal H})$ and $\tilde \rho \in  {\cal D}_{\leq}({\cal H} \otimes {\cal H}')$ be an extension of $\rho$. Then there exists an extension $\tilde \sigma \in  {\cal D}_{\leq}({\cal H} \otimes {\cal H}')$ of sigma satisfying $D(\rho, \sigma) = D(\tilde \rho, \tilde \sigma)$.
\end{lemma}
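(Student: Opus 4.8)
The plan is to reduce the entire statement to Uhlmann's theorem for the purified distance (Lemma \ref{purifieddistancepurification}), which characterizes $D(\rho,\sigma)$ as the minimum of $D(\ket{\psi}\bra{\psi},\ket{\phi}\bra{\phi})$ over all purifications $\ket{\psi}$ of $\rho$ and $\ket{\phi}$ of $\sigma$ on a common space, and moreover (in its usual form) allows one purification to be fixed while the other is optimized. The single observation that makes the proof work is that \emph{any purification of the extension $\tilde\rho$ is automatically a purification of the marginal $\rho$}: if $\ket{\Psi}$ purifies $\tilde\rho$ on ${\cal H}\otimes{\cal H}'\otimes{\cal K}$, then tracing out ${\cal H}'\otimes{\cal K}$ gives $\tr_{{\cal H}'}\tilde\rho=\rho$, so $\ket{\Psi}$ also purifies $\rho$ with purifying system ${\cal H}'\otimes{\cal K}$.

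Concretely, I would first fix a purifying system ${\cal K}$ with $\dim{\cal K}\geq\dim({\cal H}\otimes{\cal H}')$ and choose any purification $\ket{\Psi}\in{\cal H}\otimes{\cal H}'\otimes{\cal K}$ of $\tilde\rho$. By the remark above, $\ket{\Psi}$ is a purification of $\rho$, so applying Uhlmann's theorem (Lemma \ref{purifieddistancepurification}) to the pair $\rho,\sigma$ with $\ket{\Psi}$ held fixed produces a purification $\ket{\Phi}\in{\cal H}\otimes{\cal H}'\otimes{\cal K}$ of $\sigma$ with $D(\ket{\Psi}\bra{\Psi},\ket{\Phi}\bra{\Phi})=D(\rho,\sigma)$. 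I would then simply \emph{define} the candidate extension by tracing out the purifying register, $\tilde\sigma:=\tr_{{\cal K}}(\ket{\Phi}\bra{\Phi})\in{\cal D}_{\leq}({\cal H}\otimes{\cal H}')$. This $\tilde\sigma$ is an extension of $\sigma$ as required, since $\tr_{{\cal H}'}\tilde\sigma=\tr_{{\cal H}'{\cal K}}\ket{\Phi}\bra{\Phi}=\sigma$.

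It then remains to check the equality $D(\tilde\rho,\tilde\sigma)=D(\rho,\sigma)$, which splits into two one-line bounds. For the upper bound, $\ket{\Psi}$ and $\ket{\Phi}$ are purifications of $\tilde\rho$ and $\tilde\sigma$ (now viewing ${\cal K}$ as the purifying system), so the minimum in Lemma \ref{purifieddistancepurification} gives $D(\tilde\rho,\tilde\sigma)\leq D(\ket{\Psi}\bra{\Psi},\ket{\Phi}\bra{\Phi})=D(\rho,\sigma)$. For the reverse bound, every purification of $\tilde\rho$ is a purification of $\rho$ and every purification of $\tilde\sigma$ is a purification of $\sigma$; hence the set of purification-pairs minimized over in the definition of $D(\tilde\rho,\tilde\sigma)$ is a subset of that for $D(\rho,\sigma)$, forcing $D(\rho,\sigma)\leq D(\tilde\rho,\tilde\sigma)$. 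Combining the two inequalities closes the argument. (The reverse bound is just monotonicity of the purified distance under the partial trace $\tr_{{\cal H}'}$, which one could alternatively invoke directly.)

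I do not expect a genuine obstacle here: the content is entirely carried by Lemma \ref{purifieddistancepurification}. The only points requiring care are bookkeeping ones, namely ensuring the purifying system ${\cal K}$ is large enough that Uhlmann's minimum is attained by purifications living on a single common space, and the handling of \emph{subnormalized} states, which is already built into the definition of the generalized fidelity and hence into the statement of Lemma \ref{purifieddistancepurification}.
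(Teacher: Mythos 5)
Your proposal is correct, and there is nothing in the paper to compare it against: the paper does not prove Lemma \ref{distanceextension} at all, but imports it from \cite{TCR09}. Your argument --- purify the extension $\tilde\rho$, observe that this purification also purifies $\rho$, invoke Uhlmann with that purification held fixed to get a purification $\ket{\Phi}$ of $\sigma$ at distance exactly $D(\rho,\sigma)$, define $\tilde\sigma := \tr_{\cal K}(\ket{\Phi}\bra{\Phi})$, and sandwich $D(\tilde\rho,\tilde\sigma)$ between the two one-line bounds --- is precisely the standard proof given in that reference, so in effect you have reconstructed the intended argument.

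The only step deserving explicit care is the one you yourself flag: Lemma \ref{purifieddistancepurification} as stated in the paper only asserts that $D(\rho,\sigma)$ is the \emph{minimum over both} purifications, whereas you use the stronger ``one side held fixed'' form. This stronger form does follow from the stated one: take a minimizing pair $(\ket{\psi_0},\ket{\phi_0})$ on some purifying space of dimension at most $\dim{\cal H}+1$ (the $+1$ accounting for the subnormalization completion), and use the isometric freedom of purifications --- since your purifying system ${\cal H}'\otimes{\cal K}$ is large enough, there is an isometry $V$ on the purifying system with $(\id\otimes V)\ket{\psi_0}=\ket{\Psi}$, and $\ket{\Phi}:=(\id\otimes V)\ket{\phi_0}$ then purifies $\sigma$ with $\braket{\Psi}{\Phi}=\braket{\psi_0}{\phi_0}$ and unchanged norms, hence unchanged generalized fidelity. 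With that half-line of justification inserted (or by citing Uhlmann's theorem in the form actually proved in \cite{TCR09}), your proof is complete.
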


\section{Properties of Entropies} \label{entropies}

\begin{lemma}
\label{lem:H_vs_P}
Let $ \rho \in {\cal D}_{\leq}({\cal H})$. Then,
\begin{enumerate}
\item There exists a projector $P$ such that $\log |P| = H_{\max}^\delta(\rho)$ and $\tr(P\rho) \geq 1 - 2\delta$. 
\item If a projector $P$ satisfies $\tr(P\rho) \geq 1-\delta$, then $H^{\sqrt{\delta}}_{\max}(\rho)\leq \log |P|$.
\item Let $\{ \lambda_k \}$ be the eigenvectors of $\rho$ in decreasing order. Then $\sum_{k=1}^{2^{H_{\max}^\delta(\rho)}}\lambda_k \geq 1 - 2 \delta$
\end{enumerate}
\end{lemma}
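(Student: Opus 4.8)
The plan is to treat all three parts through a single underlying object: the projector $P_r$ onto the $r$ largest eigenvectors of $\rho$ and the associated truncated state. Throughout I would assume $\rho$ is normalized, writing its eigenvalues as $\lambda_1 \geq \lambda_2 \geq \cdots$; the subnormalized case allowed by $\rho \in {\cal D}_{\leq}({\cal H})$ follows from the same computations once one keeps track of the extra term in the generalized fidelity $\overline{F}$.

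I would prove part 3 first, since part 1 is an immediate consequence. Let $r^\ast := 2^{H_{\max}^\delta(\rho)}$ and let $\tilde\rho$ be a minimizer in the definition of $H_{\max}^\delta$, so that $\rank(\tilde\rho) = r^\ast$ and $D(\rho,\tilde\rho) \leq \delta$. Let $P$ be the projector onto the support of $\tilde\rho$. The key estimate is the fidelity bound $F(\rho,\tilde\rho) \leq \sqrt{\tr(P\rho)}$, which I would obtain from $\sqrt{\tilde\rho} = P\sqrt{\tilde\rho}$ together with the Hölder inequality $\Vert \sqrt\rho\, P \sqrt{\tilde\rho} \Vert_1 \leq \Vert \sqrt\rho\, P \Vert_2 \, \Vert \sqrt{\tilde\rho} \Vert_2$ and the identities $\Vert \sqrt\rho\, P \Vert_2^2 = \tr(P\rho)$, $\Vert \sqrt{\tilde\rho} \Vert_2^2 = \tr(\tilde\rho) \leq 1$. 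Since $P$ has rank $r^\ast$, the Ky Fan maximum principle gives $\tr(P\rho) \leq \sum_{k=1}^{r^\ast}\lambda_k$. In the other direction, $D(\rho,\tilde\rho) \leq \delta$ forces $\overline{F}(\rho,\tilde\rho) \geq \sqrt{1-\delta^2}$, and for normalized $\rho$ this equals $F(\rho,\tilde\rho) \geq \sqrt{1-\delta^2}$. Chaining the two directions yields $\sum_{k=1}^{r^\ast}\lambda_k \geq \tr(P\rho) \geq F(\rho,\tilde\rho)^2 \geq 1-\delta^2 \geq 1-2\delta$, which is part 3. Part 1 then follows by taking $P := P_{r^\ast}$, the projector onto the top $r^\ast$ eigenvectors: it satisfies $\log|P| = \log r^\ast = H_{\max}^\delta(\rho)$ and $\tr(P_{r^\ast}\rho) = \sum_{k=1}^{r^\ast}\lambda_k \geq 1-2\delta$.

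For part 2 I would argue in the opposite direction by exhibiting an explicit $\sqrt\delta$-close low-rank state. Given a projector $P$ with $q := \tr(P\rho) \geq 1-\delta$, set $\tilde\rho := P\rho P/q$, which has rank at most $|P|$. To control the purified distance I would use Uhlmann's theorem (Lemma \ref{purifieddistancepurification}): fix a purification $\ket{\psi}$ of $\rho$, observe that $(P \otimes \id)\ket{\psi}$ purifies $P\rho P$, so $\ket{\phi} := q^{-1/2}(P \otimes \id)\ket{\psi}$ is a normalized purification of $\tilde\rho$, whence $F(\rho,\tilde\rho) \geq |\braket{\psi}{\phi}| = q^{-1/2}\bra{\psi}(P \otimes \id)\ket{\psi} = q^{-1/2}\tr(P\rho) = \sqrt{q}$. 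Therefore $D(\rho,\tilde\rho) = \sqrt{1 - F(\rho,\tilde\rho)^2} \leq \sqrt{1-q} \leq \sqrt\delta$, so $\tilde\rho \in {\cal B}_{\sqrt\delta}(\rho)$ and $H_{\max}^{\sqrt\delta}(\rho) \leq \log \rank(\tilde\rho) \leq \log|P|$.

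The main obstacle is not any single step but the purified-distance bookkeeping: unlike the trace distance, the purified distance is governed by the fidelity and does not tolerate loose constants, so the conversions must be done tightly to land on exactly $1-2\delta$ and $\sqrt\delta$. Two points in particular require care. In part 3 I must use the support projector of the actual minimizer, not merely some rank-$r^\ast$ projector, so that the Ky Fan bound and the fidelity bound compose in the correct direction. In part 2 it is essential to normalize the truncated state as $P\rho P/q$ rather than use the subnormalized $P\rho P$: the latter gives $F \geq q$ and hence only $D \leq \sqrt{2\delta}$, whereas dividing by $\sqrt{q}$ improves the fidelity to $\sqrt{q}$ and produces precisely the claimed exponent $\sqrt\delta$. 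A final routine check is that every fidelity and purified-distance identity invoked for normalized $\rho$ extends, via the generalized fidelity $\overline{F}$, to the subnormalized states in ${\cal D}_{\leq}({\cal H})$.
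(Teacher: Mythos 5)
Your proof is correct, and for parts 1 and 3 it takes a genuinely different route from the paper's. The paper proves part 1 first: taking $P$ to be the support projector of the minimizer $\rho_\delta$, it converts purified distance to trace distance via Lemma \ref{purifiedvstracenorm} and applies the variational characterization $D_1(\rho,\sigma)=\max_{0\leq M\leq \id}|\tr(M(\rho-\sigma))|$ with $M=P$ to obtain $\tr(P\rho)\geq 1-2\delta$; part 3 is then deduced from part 1 via the Ky Fan principle. You run the argument in the opposite order and stay entirely at the level of fidelity: the H\"older bound $F(\rho,\tilde\rho)\leq\sqrt{\tr(P\rho)}$, the lower bound $F(\rho,\tilde\rho)\geq\sqrt{1-\delta^2}$ (legitimate, since $\rho$ normalized gives $\overline{F}=F$ even for subnormalized $\tilde\rho$), and Ky Fan together give part 3 with the constant $1-\delta^2$, which is sharper than the required $1-2\delta$ because you never pass through the lossy conversion from purified to trace distance; part 1 then follows using the top-$r^\ast$ eigenprojector instead of the minimizer's support projector. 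For part 2 the two arguments are essentially the same (truncate by $P$ and compute a fidelity), but your insistence on the normalized truncation $P\rho P/q$ is warranted: the paper sets $\rho_\delta:=P\rho P$ and asserts $F(\rho,\rho_\delta)=\sqrt{\tr(P\rho)}$, an identity that holds for the normalized truncation, whereas the unnormalized one satisfies only $F(\rho,P\rho P)=\tr(P\rho)$ and would yield the weaker bound $D\leq\sqrt{2\delta}$; your version is the correct reading of that step. One small caveat: your closing claim that the subnormalized case follows by tracking $\overline{F}$ is too optimistic, since for $\tr(\rho)$ bounded away from $1$ the conclusions $\tr(P\rho)\geq 1-2\delta$ and $\sum_k\lambda_k\geq 1-2\delta$ simply fail; but the paper's own proof also assumes $\tr(\rho)=1$, so restricting to normalized states puts you exactly on par with the paper.
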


\begin{proof}
Let $\rho_{\delta}$ be such that $D(\rho,\sigma)\leq \delta$ and $H_{\max}^\delta(\rho)=H_{\max}(\rho_{\delta})$. Then by Lemma \ref{purifiedvstracenorm}, 
$D_1(\rho,\sigma)\leq \delta$ and since $\tr(\rho) = 1$,  $\tr(\rho_{\delta}) \geq 1-\delta$. Now, let $P$ be the projector onto the support of $\rho_{\delta}$, which satisfies $\log|P| = H_{\max}^\delta(\rho)$. We then obtain 
\be
|\tr (P\rho) - \tr(\sigma)| = |\tr (P(\rho - \sigma))| \leq D_{1}(\rho, \sigma) \leq \delta. 
\ee
Therefore $\tr (P\rho) \geq 1-2 \delta$, proving part 1 of the lemma.

To prove part 2, it is enough to take $\rho_{\delta} := P\rho P$ and note that $F(\rho, \rho_{\delta}) = \sqrt{\tr(P \rho)}$ for normalized $\rho$, and so $D(\rho, P\rho P) \leq \sqrt{1 - \tr (P\rho)} \leq \sqrt{\delta}$.

Finally, the third part is a consequence of the statement of the first part and the relation
\begin{equation}
\max_{Q} \{ \tr(Q\rho) : Q^2 = Q, \rank(Q) = q \} = \sum_{k=1}^q \lambda_k.
\end{equation}
\end{proof}

Following Ref. \cite{Dat09} we define the max-relative entropy as
\begin{equation} \label{maxrelent}
S_{\max}(\rho || \sigma) := \min \{ \lambda : \rho \leq 2^{\lambda} \sigma   \}
\end{equation}
and the $\varepsilon$-smooth relative entropy as
\begin{equation} \label{smoothmaxrelent}
S_{\max}^{\varepsilon}(\rho || \sigma) := \min_{\tilde \rho \in {\cal B}_{\varepsilon}(\rho)} S_{\max}(\tilde \rho || \sigma).
\end{equation}

The next lemma was first proven in Ref. \cite{JJS07} (see also Ref. \cite{JN11}) and is a fundamental piece in the proof of Theorem \ref{saturationmaxmutualinfo}.
\begin{lemma}[Quantum Substate Theorem \cite{JN11}] 
Let $\rho, \sigma \in {\cal D}(\mathbb{C}^d)$ be such that $\text{supp}(\rho) \subseteq \text{supp}(\sigma)$. For any $\varepsilon \in (0, 1)$, 
\begin{equation}
S_{\max}^{\varepsilon}(\rho || \sigma) \leq \frac{S(\rho || \sigma)}{\varepsilon} + \log \frac{1}{1 - \varepsilon}.
\end{equation}
\label{thmsubstate}
\end{lemma}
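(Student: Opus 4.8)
The plan is to reduce to the case where $\rho$ and $\sigma$ commute, prove the bound there by an explicit truncation, and then handle the general case by a non-commutative reduction. Throughout I work on $\text{supp}(\sigma)$, which by hypothesis contains $\text{supp}(\rho)$, so that $S(\rho\|\sigma)=:k$ is finite. Suppose first that $\rho$ and $\sigma$ share an eigenbasis, with eigenvalue vectors $p$ and $q$, so $k=\sum_i p_i\log(p_i/q_i)$. Fix a threshold $\lambda$ and let $A_\lambda=\{i:p_i>2^\lambda q_i\}$ be the indices on which the likelihood ratio exceeds $2^\lambda$. On $A_\lambda$ one has $p_i\log(p_i/q_i)>\lambda\,p_i$, while the elementary bound $t\log t\ge(t-1)\log e$ applied with $t=p_i/q_i$ gives $\sum_{i\notin A_\lambda}p_i\log(p_i/q_i)\ge-\log e$; adding the two contributions yields the Markov-type estimate $\lambda\,p(A_\lambda)\le k+\log e$, hence $p(A_\lambda)\le(k+\log e)/\lambda$. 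Choosing $\lambda$ of order $k/\varepsilon$ therefore makes the discarded weight $p(A_\lambda)$ at most $\varepsilon$.

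Now let $\Pi$ be the projector onto the complement of $A_\lambda$ and set $\tilde\rho:=\Pi\rho\Pi/\tr(\Pi\rho)$. By construction $\Pi\rho\Pi\le2^\lambda\,\Pi\sigma\Pi\le2^\lambda\sigma$, and dividing by $\tr(\Pi\rho)=1-p(A_\lambda)\ge1-\varepsilon$ gives $\tilde\rho\le2^{\lambda+\log\frac1{1-\varepsilon}}\sigma$, so that $S_{\max}(\tilde\rho\|\sigma)\le k/\varepsilon+\log\frac1{1-\varepsilon}$; this is precisely where the additive term $\log\frac1{1-\varepsilon}$ of the statement is produced, and the additive $O(1)$ I am sweeping into the choice of $\lambda$ is removed in the sharp argument by working with the observational divergence $D_{\mathrm{obs}}(\rho\|\sigma)\le S(\rho\|\sigma)+O(1)$ in place of the crude Markov bound. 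Finally $\tilde\rho$ is just $\rho$ conditioned on the good subspace, so its distance to $\rho$ is governed by $p(A_\lambda)\le\varepsilon$; one checks $D_1(\rho,\tilde\rho)=p(A_\lambda)\le\varepsilon$ for the trace distance of Eq.~(\ref{tracenormdef}), and it is this trace-distance smoothing that yields the stated $k/\varepsilon$ scaling, whereas insisting on the purified-distance ball of Lemma~\ref{purifiedvstracenorm} would degrade the exponent to order $k/\varepsilon^2$. This settles the commuting case.

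The step I expect to be the genuine obstacle is the passage to non-commuting $\rho$ and $\sigma$, where there is no pointwise likelihood ratio to threshold. The natural operator substitute, $\tilde\rho=\sigma^{1/2}f_\lambda(\sigma^{-1/2}\rho\sigma^{-1/2})\sigma^{1/2}$ with $f_\lambda(x)=\min(x,2^\lambda)$, delivers the operator inequality $\tilde\rho\le2^\lambda\sigma$ for free, but controlling $D(\rho,\tilde\rho)$ by the genuinely quantum $S(\rho\|\sigma)=\tr\rho(\log\rho-\log\sigma)$ fails naively, because $\log\rho-\log\sigma$ is not a function of $\sigma^{-1/2}\rho\sigma^{-1/2}$ when the two states fail to commute. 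The route I would take is to phrase both requirements through measurement statistics: the substate inequality $\tilde\rho\le r\sigma$ is equivalent to $\tr(M\tilde\rho)\le r\,\tr(M\sigma)$ for every $0\le M\le\id$, and the trace distance is itself a supremum over two-outcome measurements via Eq.~(\ref{variationallytracenorm}); a minimax exchange then reduces the construction to the classical estimate above applied to the measured distributions, with the gap between measured and full relative entropy absorbed through the observational divergence. This minimax reduction is exactly the non-commutative content of the quantum substate theorem of \cite{JJS07} and the shorter argument of \cite{JN11}, which I would invoke for the general case rather than reproduce, since the commuting computation already exposes both the $k/\varepsilon$ scaling and the $\log\frac1{1-\varepsilon}$ correction that appear in the statement.
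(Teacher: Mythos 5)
The first thing to say is that there is no proof in the paper to compare against: Lemma \ref{thmsubstate} is imported verbatim from \cite{JJS07,JN11} and never proved. Your structure --- a self-contained proof in the commuting case plus an appeal to those same references for the non-commutative core --- therefore matches the paper's level of rigor and adds genuine content on top of it. Your commuting-case argument is correct: the elementary bound $t\log t\ge (t-1)\log e$ does give $\lambda\,p(A_\lambda)\le S(\rho\|\sigma)+\log e$, the truncation satisfies $\Pi\rho\Pi\le 2^{\lambda}\Pi\sigma\Pi\le 2^{\lambda}\sigma$, and renormalizing by $\tr(\Pi\rho)\ge 1-\varepsilon$ is exactly where $\log\frac{1}{1-\varepsilon}$ enters; the residual additive $(\log e)/\varepsilon$ slack is acknowledged and harmless. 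Your diagnosis of the non-commutative obstacle (no pointwise likelihood ratio; $\log\rho-\log\sigma$ is not a function of $\sigma^{-1/2}\rho\sigma^{-1/2}$) and your sketch of the measurement/minimax route of \cite{JJS07} is a fair account of why that step is the actual theorem.

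Your parenthetical about which distance the smoothing ball uses should, however, be promoted to the main point, because under the paper's own conventions the lemma as printed is false, not merely loose. The paper defines $S^{\varepsilon}_{\max}$ in Eq. (\ref{smoothmaxrelent}) via the purified-distance ball of Appendix \ref{purifieddistance}, and your computation already exposes the tension: your truncated state has $D_1(\rho,\tilde\rho)=p(A_\lambda)$ but $D(\rho,\tilde\rho)=\sqrt{p(A_\lambda)}$. To see this is fatal, take $\rho=\mathrm{diag}(1-2\varepsilon^2,\,2\varepsilon^2)$ and $\sigma=\mathrm{diag}(1-2\varepsilon^2 2^{-K},\,2\varepsilon^2 2^{-K})$ with $K$ large. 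Then $S(\rho\|\sigma)\le 2\varepsilon^2 K$, so the claimed bound is $2\varepsilon K+\log\frac{1}{1-\varepsilon}$. But by monotonicity of fidelity under dephasing in the eigenbasis of $\rho$, any subnormalized $\tilde\rho$ with $D(\rho,\tilde\rho)\le\varepsilon$ satisfies $\sqrt{1-\varepsilon^2}\le\sqrt{1-2\varepsilon^2}+\sqrt{2\varepsilon^2\,\tilde\rho_{22}}$, hence $\tilde\rho_{22}\ge\varepsilon^2/8$, and the $(2,2)$ entry of the constraint $2^{\lambda}\sigma-\tilde\rho\ge 0$ then forces $\lambda\ge K-4$: a contradiction for, say, $\varepsilon=1/4$ and $K=100$. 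So the $S/\varepsilon$ exponent is tied to trace-distance smoothing (your classical construction), while with the purified ball one must either smooth at scale $\sqrt{\varepsilon}$ or accept an exponent of order $S/\varepsilon^2$. Note moreover that even the trace-distance reading of the \emph{quantum} statement does not follow from the citations: \cite{JN11} delivers fidelity-type closeness $F\ge 1-\varepsilon$, i.e. both distances only $O(\sqrt{\varepsilon})$, at exponent $(S(\rho\|\sigma)+1)/\varepsilon$ (the $+1$ is also dropped in the paper's statement), so your clean trace-$\varepsilon$ versus $S/\varepsilon$ trade-off is, as far as the cited results go, special to the commuting case. None of this endangers the paper: the lemma is invoked only in the proof of Lemma \ref{saturationmaxmutualinfo}, at constant smoothing $\delta$, so replacing $2/\delta$ by $O(1/\delta^2)$ there merely changes the universal constants in Lemma \ref{saturationmaxmutualinfo} and Theorems \ref{maintheorempure} and \ref{maintheoremmixed}.
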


A version of next lemma first appeared in \cite{Has07} (see also \cite{AALV10}) where it was used to prove an area law for groundstates of 1D gapped Hamiltonians.
\begin{lemma}[Saturation of Mutual Information] \label{saturationmutualvonNeumann}
Let $\rho_{1,...,n} \in {\cal D}((\mathbb{C}^2)^{\otimes n})$ and $s \in [n]$ be a particular site. Then for all $ \varepsilon > 0$ and $l_0 > 0$ there is a $l$ satisfying $1 \leq l/l_0 \leq \exp(O(1/\varepsilon))$ and a connected region $X_{2l} := X_{L, l/2} X_{C, l} X_{R, l/2}$ of $2l$ sites (the borders $X_{L, l/2}$ and $X_{R, l/2}$ with $l/2$ sites each, and the central region $X_{C, l}$ with $l$ sites) centred at most $l_0\exp(O(1/\varepsilon))$ sites away from $s$ (see Fig. \ref{sat}) such that
\begin{equation} \label{sublinearmutualvonNeumann}
I(X_{C, l}: X_{L, l/2}X_{R, l/2}) \leq \varepsilon l.
\end{equation}
\end{lemma}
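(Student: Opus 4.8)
The plan is to argue by contradiction, in the spirit of Hastings' original argument, reducing the statement to a purely entropic recursion that forces the entropy of a large region to become negative. Fix the constant $c=2$ and work inside the window $W$ given by the interval of radius $l_0\exp(O(1/\varepsilon))$ centred at $s$; for each $m$ let $M(m)$ denote the maximum of $S(\rho_J)$ over all intervals $J\subseteq W$ of $m$ sites. I would first record the two elementary facts used throughout: $0\le M(m)\le m$ (each site is a qubit, so entropy is bounded by the number of sites), and subadditivity $S(\rho_{UV})\le S(\rho_U)+S(\rho_V)$.

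Suppose, for contradiction, that the conclusion fails at every admissible scale, i.e. that for every $l$ with $l_0\le l\le l_0\exp(O(1/\varepsilon))$ and every three-part region $X_{2l}=X_{L,l/2}X_{C,l}X_{R,l/2}\subseteq W$ one has $I(X_{C,l}:X_{L,l/2}X_{R,l/2})>\varepsilon l$. Writing $I=S(X_{C,l})+S(X_{L,l/2}X_{R,l/2})-S(X_{2l})$, this is equivalent to
\[
S(X_{2l}) < S(X_{C,l}) + S(X_{L,l/2}X_{R,l/2}) - \varepsilon l .
\]
Applying subadditivity \emph{separately} inside the central block and inside the two borders, I would tile $X_{2l}$ by $2c=4$ sub-blocks of $l/c=l/2$ sites each and bound the right-hand side by $4\,M(l/2)-\varepsilon l$. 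Taking $X_{2l}$ to be the size-$2l$ interval in $W$ achieving $M(2l)$ (any such interval has the required three-part form), this yields the recursion $M(2l)<4\,M(l/2)-\varepsilon l$, i.e. with $m=2l$,
\[
M(m) < 4\,M(m/4) - \tfrac{\varepsilon}{2}\,m .
\]

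I would then iterate along the geometric sequence $m_k=4^k m_0$ with $m_0=2l_0$. Dividing by $4^k$ makes the recursion telescope: setting $h_k:=M(m_k)/4^k$ gives $h_k< h_{k-1}-\tfrac{\varepsilon}{2}m_0$, hence $h_k< h_0-k\tfrac{\varepsilon}{2}m_0\le m_0(1-k\varepsilon/2)$ using $h_0=M(m_0)\le m_0$. Choosing $k=K\approx 2/\varepsilon$ forces $M(m_K)<0$, contradicting $M\ge 0$. Since the largest scale reached is $m_K=4^{2/\varepsilon}m_0=l_0\exp(O(1/\varepsilon))$ and every interval and sub-block involved sits inside $W$, each step of the recursion is legitimate; the contradiction shows that at some admissible scale $l=4^k l_0\in[l_0,l_0\exp(O(1/\varepsilon))]$ a region with $I(X_{C,l}:X_{L,l/2}X_{R,l/2})\le\varepsilon l$ and centre within $l_0\exp(O(1/\varepsilon))$ of $s$ must exist, as claimed.

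I expect no serious obstacle here — this is the ``straightforward'' von Neumann statement the text alludes to — but two points require care: (i) applying subadditivity region-by-region rather than regrouping a border sub-block with a central sub-block, since subadditivity only bounds the entropy of a union \emph{from above} and the naive regrouping would be invalid; and (ii) phrasing the whole argument through the position-maximum $M(m)$ rather than a single nested family centred at $s$, so that the off-centre sub-blocks produced by the tiling are still controlled and the recursion closes. The genuinely new difficulties — replacing one von Neumann entropy by a smooth max-entropy and bridging the resulting min/max gap — only surface in the single-shot counterpart (Lemma \ref{saturationmaxmutualinfo}), where the quantum substate theorem and the equipartition property become necessary.
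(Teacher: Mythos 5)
Your proof is correct and follows essentially the same route as the paper's: negate the conclusion at every admissible scale, use subadditivity to split the resulting entropy inequality into four blocks of $l/2$ sites, and iterate the recursion roughly $\log_4(l/l_0)\approx 1/\varepsilon$ times until the entropy of the largest region is forced below zero. Your explicit bookkeeping via the position-maximum $M(m)$ is merely a cleaner formalization of what the paper does implicitly when it writes $H(X_{2l}) \leq 4 H(X_{l/2}) - \varepsilon l$ and recurses.
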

\begin{proof}
We prove Eq. (\ref{sublinearmutualvonNeumann}) by contradiction. Suppose that for all $ l_0 \leq l \leq l_0 \exp(O(\log(1/\varepsilon))$ 
\begin{equation} \label{hybridinter}
H(X_{2l}) \leq H(X_{C, l}) + H(X_{L,l/2}X_{R,l/2}) - \varepsilon l .
\end{equation}
Then we show that this leads to the entropy $H(X_{2l})$ being negative for $l = l_0\exp(O(1/\varepsilon))$. Using subadditivity of entropy 
\begin{equation} \label{recur1}
H(X_{2l}) \leq H(X_{C, L, l/2}) + H(X_{C, R, l/2}) + H(X_{L,l/2}) + H(X_{R,l/2}) - \varepsilon l,
\end{equation}
where $X_{C, L, l/2}$ and $X_{C, R, l/2}$ are two regions such that $X_{C, l} = X_{C, L, l/2}X_{C, R, l/2}$. Thus we have 
\be
H(X_{2l}) \leq 4 H(X_{l/2}) -\epsilon l
\ee
We can now apply Eq. (\ref{hybridinter}) to $H(X_{l/2}) $. Doing so recursively, until we reach regions of size $l_0$, we get
\begin{equation}
H(X_{2l}) \leq l - \log_4 \left( l/ l_0 \right) \varepsilon l, 
\end{equation}
which leads to a contradiction choosing $l = l_0\exp(O(1/\varepsilon))$. 
\end{proof}
The difference of Lemma \ref{saturationmutualvonNeumann} with the result of \cite{Has07, AALV10} is that the latter is concerned with the mutual information between two equally sized neighbouring regions $X_{L, l}X_{R, l}$. Note however that the proofs are very similar in both cases. 

The next lemma is a version of Fannes inequality \cite{Fannes} for the purified distance, which follows directly from the version of the inequality proved in \cite{Aud06} and Lemma \ref{purifiedvstracenorm}.
\begin{lemma}[Fannes-type Inequality \cite{Aud06}] \label{fanaudineq}
Let $\rho, \sigma \in {\cal D}(\mathbb{C}^d)$. Then
\begin{eqnarray}
|H(\rho) - H(\sigma)| &\leq& \log(d-1) D(\rho, \sigma) + h( D(\rho, \sigma) )
\end{eqnarray}
with $h(\varepsilon) := - \varepsilon \log(\varepsilon) - (1 - \varepsilon)\log(1 - \varepsilon)$ the binary entropy.
\end{lemma}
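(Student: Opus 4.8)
The plan is to reduce the statement to the sharp, dimension-dependent refinement of the Fannes inequality due to Audenaert \cite{Aud06}, which is phrased in terms of the trace distance, and then to trade the trace distance for the purified distance using Lemma \ref{purifiedvstracenorm}. Audenaert's result asserts that for normalized $\rho, \sigma \in {\cal D}(\mathbb{C}^d)$, writing $T := \frac{1}{2}\Vert \rho - \sigma \Vert_1$ for the trace distance,
\[
|H(\rho) - H(\sigma)| \leq T \log(d-1) + h(T),
\]
with $h$ the binary entropy. Since $\rho$ and $\sigma$ are normalized, $\tr(\rho) = \tr(\sigma)$ and the quantity $D_1(\rho,\sigma)$ of Eq. (\ref{tracenormdef}) reduces to $T$, so this is exactly the starting bound expressed in the paper's notation.

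Next I would invoke the left-hand inequality of Lemma \ref{purifiedvstracenorm}, namely $T = D_1(\rho,\sigma) \leq D(\rho,\sigma)$, so that the purified distance dominates the trace distance. It then remains to verify that enlarging the argument from $T$ to $D := D(\rho,\sigma)$ on the right-hand side does not decrease it. For this I would study $f(x) := x \log(d-1) + h(x)$ on $[0,1]$: a short computation gives $f'(x) = \log\frac{(d-1)(1-x)}{x}$, which is nonnegative exactly when $x \leq 1 - 1/d$. Hence $f$ is nondecreasing on $[0, 1-1/d]$, and in the regime relevant to the applications, where the purified distance is small (indeed exponentially small in Lemma \ref{saturationmaxmutualinfo}), we have $T \leq D \leq 1-1/d$ and therefore $f(T) \leq f(D)$. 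Combining the two displays yields
\[
|H(\rho) - H(\sigma)| \leq \log(d-1)\, D(\rho,\sigma) + h\big(D(\rho,\sigma)\big),
\]
which is the claim, with $\varepsilon$ identified as the purified distance $D(\rho,\sigma)$.

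The proof is essentially bookkeeping, so I do not anticipate a real obstacle; the only steps warranting care are confirming that $D_1$ collapses to the ordinary trace distance for the normalized states at hand, so that \cite{Aud06} applies verbatim, and the monotonicity check, which licenses the substitution $T \mapsto D$. One should note that this substitution is only guaranteed valid on $[0, 1-1/d]$; the restriction is harmless here because the lemma is invoked with exponentially small purified distances, but if a fully general version were desired one would simply retain the sharper estimate $T\log(d-1)+h(T)$ and relax it only where monotonicity is assured.
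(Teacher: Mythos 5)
Your argument is precisely the paper's own: the paper offers no proof beyond the remark that the lemma ``follows directly from the version of the inequality proved in \cite{Aud06} and Lemma \ref{purifiedvstracenorm}'', which is exactly your two steps (Audenaert's bound in trace distance, then $D_1 \leq D$). The step you add --- the monotonicity analysis of $f(x) = x\log(d-1) + h(x)$ --- is the step the paper's ``follows directly'' glosses over, and your caution about it is not pedantry: the substitution of the purified distance for the trace distance is genuinely invalid outside $[0,\, 1-1/d]$, and the lemma as literally stated (reading the $\varepsilon$ in the display as $D(\rho,\sigma)$, the only sensible interpretation) is \emph{false} without a restriction on the distance. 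Concretely, take $d=2$, $\rho = \mathbb{I}/2$ the maximally mixed qubit state and $\sigma = \ket{0}\bra{0}$: then $|H(\rho)-H(\sigma)| = 1$, while $D(\rho,\sigma) = 1/\sqrt{2}$ and the claimed bound reads $1 \leq 0\cdot \tfrac{1}{\sqrt{2}} + h(1/\sqrt{2}) \approx 0.87$. (Audenaert's inequality is saturated by this pair at trace distance $T=1/2$, so there is no slack to absorb pushing the argument into the region where $f$ decreases.) Your closing remark is therefore the correct resolution: either read the lemma with the implicit restriction $D(\rho,\sigma) \leq 1 - 1/d$, or keep $h$ evaluated at the trace distance. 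Since the paper's sole use of this lemma, in the proof of Lemma \ref{saturationmaxmutualinfo}, involves purified distances far below that threshold (there $\varepsilon$ is tiny and the relevant dimension is $2^{2l}$), the restricted version you prove is all that is actually needed, and your proposal is sound.
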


In proof of the main theorems we also make repeatedly use of the following version of the subadditivity inequality for max-entropy, which follows directly from a chain rule relation of Ref. \cite{Tom12}. 
\begin{lemma}[Subadditivity Smooth max-Entropy] \label{subadditivityHmax}
Given a state $\rho_{AB}$ and $\varepsilon, \varepsilon', \varepsilon'' > 0$,
\begin{equation} 
H_{\max}^{\varepsilon + \varepsilon'+ 2 \varepsilon''}(AB) \leq H_{\max}^{\varepsilon'}(A) + H_{\max}^{\varepsilon''}(B) + \log \frac{2}{\varepsilon^2(1 - \varepsilon - \varepsilon' - 2\varepsilon'')}.
\label{subad-1}
\end{equation}
Moreover we have
\be
H_{\max}(AB) \leq H_{\max}(A) + H_{\max}(B)
\label{subad-2}
\ee
\label{subHmax}
\end{lemma}
\begin{proof} We first prove the formula \eqref{subad-1}. Eq. (5.12) of Ref. \cite{Tom12} with a trivial $C$ system gives 
\begin{equation}
H_{1/2}^{\varepsilon + \varepsilon' + 2 \varepsilon''}(AB) \leq H_{1/2}^{\varepsilon'}(A|B) + H_{1/2}^{\varepsilon''}(B) + \log \frac{2}{\varepsilon^2}.
\end{equation}
Using the inequality
\be
H_{1/2}^{\varepsilon'}(A|B) \leq H_{1/2}^{\varepsilon'}(A).
\ee
we obtain 
\begin{equation}
H_{1/2}^{\varepsilon + \varepsilon' + 2 \varepsilon''}(AB) \leq H_{1/2}^{\varepsilon'}(A) + H_{1/2}^{\varepsilon''}(B) + \log \frac{2}{\varepsilon^2}.
\end{equation}

In the case of zero smoothing $\varepsilon' =  0$, this inequality follows directly from the relation $H_{1/2}(A|B) = \max_{\sigma} \log F(\rho_{AB}, \id_A \otimes \sigma_B)^2$ with $\id_A$ the identity onto the support of $\rho_A$. For $\varepsilon' > 0$, let $\tilde \rho_{AB} \in {\cal B}_{\varepsilon'}(\rho_{AB})$ be such that $\text{rank}(\tilde \rho_{A}) = 2^{H_{1/2}^{\varepsilon'}(A)}$. Then
\begin{equation}
H_{1/2}^{\varepsilon'}(A|B)_{\rho} = H_{1/2}(A|B)_{\tilde \rho} \leq H_{1/2}(\tilde \rho_A) = H_{1/2}^{\varepsilon'}(A). 
\end{equation}

Eq. (\ref{subad-1}) then follows from the relation
\begin{equation}
H_{\max}^{\varepsilon}(X) \leq H_{1/2}^{\varepsilon}(X) - \log(1 - \varepsilon).
\end{equation}
from Lemma 4.1 of \cite{RW04g}.

The formula \eqref{subad-1} just says that the rank of a state is no greater than the product of ranks 
of its subsystems.
\end{proof}

The next lemma is from Refs. \cite{TCR09, Tom12}  is used in the proof of Lemma \ref{saturationmaxmutualinfo} to relate the max-entropy and the von Neumann entropy. We actually need a slight generalization of the original formulation in which the systems are not necessarily identical. The original proof however carries through to this case without any modification.
\begin{lemma}[Quantum Equipartition Property \cite{TCR09, Tom12}]
Let $\pi_{1,...,n} = \pi_{1} \otimes ... \otimes \pi_{n}$, with $\pi_{k} \in {\cal D}(\mathbb{C}^{d})$ for all $k \in [n]$. Then
\begin{equation}
\frac{1}{n}H_{\max}^{\varepsilon}(\pi_{1,..., n}) \leq \frac{1}{n} H(\pi_{1,...,n}) + 4d \sqrt{\frac{\log(2/\varepsilon^2)}{n}}.
\end{equation}
\label{QEP}
\end{lemma}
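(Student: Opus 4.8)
The plan is to reduce the statement to a classical asymptotic-equipartition estimate for the product distribution of eigenvalues, and then convert the resulting typical-subspace bound into a smooth-max-entropy bound through part 2 of Lemma \ref{lem:H_vs_P}. Concretely, first I would diagonalize each factor, $\pi_k = \sum_x p_k(x)\proj{e^{(k)}_x}$, so that $\pi_{1,\ldots,n}$ has eigenvectors $\ket{e^{(1)}_{x_1}}\otimes\cdots\otimes\ket{e^{(n)}_{x_n}}$ with eigenvalues $\prod_k p_k(x_k)$. Introducing the surprise random variables $W_k := -\log p_k(X_k)$, where $X_k$ is drawn with law $p_k$, the total surprise $W := \sum_k W_k$ is a sum of independent (though not identically distributed) variables with $\mathbb{E}[W] = \sum_k H(\pi_k) = H(\pi_{1,\ldots,n})$, the last equality because the state is a product. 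The goal is to show that the projector $P$ onto the eigenvectors with surprise at most $H(\pi_{1,\ldots,n}) + t$ satisfies $\log|P| \leq H(\pi_{1,\ldots,n}) + t$ and $\tr(P\,\pi_{1,\ldots,n}) \geq 1 - \varepsilon^2$, for a suitable window $t$; the first bound is immediate since each retained eigenvalue is at least $2^{-(H+t)}$ while they sum to at most one, and the second is exactly the upper-tail statement $\Pr[W > H + t] \leq \varepsilon^2$. Feeding this into part 2 of Lemma \ref{lem:H_vs_P} with $\delta = \varepsilon^2$ yields $H_{\max}^{\varepsilon}(\pi_{1,\ldots,n}) \leq \log|P| \leq H(\pi_{1,\ldots,n}) + t$, so everything hinges on controlling $t$.

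For the tail bound I would use a Chernoff estimate rather than Chebyshev, since the target error term carries $\sqrt{\log(2/\varepsilon^2)}$ rather than $1/\varepsilon$. Writing the base-two cumulant generating function $g_k(s) := \log\sum_x p_k(x)^{1-s}$ (equivalently $s\,H_{1-s}(\pi_k)$ in terms of the R\'enyi entropy), one has $g_k(0)=0$, $g_k'(0) = H(\pi_k)$, and a second derivative governed by the varentropy of $p_k$. A Taylor bound $g_k(s) \leq s\,H(\pi_k) + \tfrac{s^2}{2}M_k$ valid on a small interval $s\in[0,s_0]$ then gives, by Markov's inequality applied to $2^{sW}$,
\begin{equation}
\Pr\big[\,W > H(\pi_{1,\ldots,n}) + t\,\big] \leq 2^{-s t + \frac{s^2}{2}\sum_k M_k},
\end{equation}
and optimizing $s = t/M$ with $M := \sum_k M_k$ yields $\Pr \leq 2^{-t^2/(2M)}$. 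Setting this equal to $\varepsilon^2$ gives $t = \sqrt{2M\log(1/\varepsilon^2)}$ up to the additive constant in the smoothing, whence $t/n = \sqrt{2M}\,\sqrt{\log(2/\varepsilon^2)}/n$. The remaining task is to bound the curvature $M_k$: using $M_k = O\big(\sum_x p_k(x)(\log p_k(x))^2\big)$ and the elementary inequality $\sum_x p(\log p)^2 = O(d)$ for a distribution on $d$ symbols, one gets $M \leq O(dn)$, so that $t/n \leq 4d\sqrt{\log(2/\varepsilon^2)/n}$ after absorbing the square-root constant into the generous prefactor $4d$. Dividing the max-entropy bound by $n$ then produces exactly the claimed inequality.

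The main obstacle is the unboundedness of the surprise variables $W_k$: because $-\log p_k(x)$ diverges as $p_k(x)\to 0$, one cannot invoke Hoeffding directly, and the Chernoff step above requires a uniform bound on $g_k''$ — equivalently, on the tilted second moments $\sum_x p_k(x)^{1-s}(\log p_k(x))^2$ — over the whole optimization interval, not merely at $s=0$. Since $d$ is finite these quantities are finite for every $s<1$, but extracting a clean curvature bound that is linear in $d$ and uniform in $s$ is the delicate point; this is precisely where the argument of Refs. \cite{TCR09, Tom12} does the real work, and where the loose constant $4d$ (in place of the tighter $O(\sqrt{d})$ or $O(\log d)$ that a careful varentropy estimate would give) is spent. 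I would finally remark that the non-identical nature of the factors causes no additional difficulty, since the argument uses only additivity of both the mean surprise and the cumulant generating functions over the tensor factors — which is exactly why the i.i.d.\ proof carries through without modification.
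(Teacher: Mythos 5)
Your argument is sound, but note that the paper does not actually prove Lemma \ref{QEP}: it imports it from \cite{TCR09, Tom12}, remarking only that the original proof carries over without modification to non-identical factors. Your proposal is therefore a genuinely different, and more elementary, route than the one the paper leans on. The cited quantum asymptotic equipartition property is established at the level of quantum (conditional) R\'enyi and min/max entropies, with operator-inequality machinery needed because in the conditional setting nothing commutes; in the unconditional product case at hand the state is diagonal in a fixed product eigenbasis, so your reduction to a classical large-deviation statement about the total surprise $W=\sum_k\bigl(-\log p_k(X_k)\bigr)$ is legitimate and loses nothing. Your chain of steps is correct: the cut-off projector $P$ onto eigenvalues at least $2^{-(H+t)}$ satisfies $\log|P|\leq H+t$; the tail bound $\Pr[W>H+t]\leq\varepsilon^2$ gives $\tr(P\,\pi_{1,\ldots,n})\geq 1-\varepsilon^2$; and part 2 of Lemma \ref{lem:H_vs_P} with $\delta=\varepsilon^2$ converts this into $H_{\max}^{\varepsilon}\leq H+t$. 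Your identity $g_k(s)=s\,H_{1-s}(\pi_k)$ is exactly the R\'enyi-entropy interpolation that drives the proof in \cite{TCR09}, so the two arguments are close cousins, yours being the commuting specialization. Two points you defer or omit are both fillable. First, the uniform curvature bound is elementary: for $s\in[0,1/2]$ one has $\sum_x p_x^{1-s}\geq \sum_x p_x=1$, so the tilted distribution $q_s(x)=p_x^{1-s}/\sum_y p_y^{1-s}$ satisfies $q_s(x)\leq p_x^{1-s}$, whence $g_k''(s)$ is bounded (up to a $\ln 2$ factor) by $\sum_x p_x^{1-s}(\log p_x)^2\leq d\,\max_{p\in(0,1]}p^{1-s}(\log p)^2=O(d)$, the maximum being attained at $p=e^{-2/(1-s)}$ with value $e^{-2}\bigl(2/((1-s)\ln 2)\bigr)^2\leq 5$ on this interval. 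Second, your optimizer $s^\ast=t/M$ must be checked to lie in the interval where the Taylor bound holds; when it does not, one is in the regime $\log(2/\varepsilon^2)=\Omega(dn)$, where the error term $4d\sqrt{\log(2/\varepsilon^2)/n}$ already exceeds $\log d\geq \frac{1}{n}H_{\max}^{\varepsilon}(\pi_{1,\ldots,n})$ and the claim is vacuous, so no work is needed. With these observations your proof is complete, and it also confirms the paper's remark that non-identical factors cost nothing, since both the mean surprise and the cumulant generating functions simply add over tensor factors.
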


\section{Bound on Data Hiding} 

The next lemma gives a limit to quantum data hidding in terms of the minimal local dimension of a bipartite operator. 
\begin{lemma} \label{bounddatahiding}
For every $L \in \mathbb{B}(\mathbb{C}^{d} \otimes \mathbb{C}^{D})$, with $d \leq D$,
\begin{equation}
\Vert L \Vert_{1} \leq d^2\max_{\Vert X \Vert \leq 1, \Vert Y \Vert \leq 1} | \tr( (X \otimes Y) L ) |.
\end{equation}
\end{lemma}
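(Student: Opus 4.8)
The plan is to exploit the tensor structure directly, by slicing $L$ into $d^2$ blocks indexed by matrix units on the small ($d$-dimensional) factor and then applying the standard trace-norm/operator-norm duality to each block separately. The factor $d^2$ will appear precisely as the number of blocks.

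First I would fix the matrix units $\{E_{ij}\}_{i,j=1}^{d}$ on $\mathbb{C}^{d}$ (so $E_{ij}$ has a single $1$ in position $(i,j)$) and write the block decomposition
\begin{equation}
L = \sum_{i,j=1}^{d} E_{ij}\otimes L_{ij}, \qquad L_{ij}:=\tr_{1}\!\left((E_{ji}\otimes \id)L\right)\in \mathbb{B}(\mathbb{C}^{D}),
\end{equation}
which is verified from the orthogonality relation $\tr(E_{ji}E_{kl})=\delta_{ik}\delta_{jl}$. By the triangle inequality together with multiplicativity of the trace norm under tensor products, and since $\Vert E_{ij}\Vert_{1}=1$, this gives
\begin{equation}
\Vert L\Vert_{1}\leq \sum_{i,j=1}^{d}\Vert E_{ij}\otimes L_{ij}\Vert_{1}=\sum_{i,j=1}^{d}\Vert L_{ij}\Vert_{1}.
\end{equation}

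Next I would bound each block using the duality $\Vert L_{ij}\Vert_{1}=\max_{\Vert Y\Vert\leq 1}|\tr(Y L_{ij})|$: pick $Y_{ij}$ with $\Vert Y_{ij}\Vert\leq 1$ attaining this maximum. Since $L_{ij}=\tr_{1}((E_{ji}\otimes \id)L)$, a direct computation with the block expansion yields $\tr\!\left((E_{ji}\otimes Y_{ij})L\right)=\tr(Y_{ij}L_{ij})$, so that $\Vert L_{ij}\Vert_{1}=|\tr((E_{ji}\otimes Y_{ij})L)|$. As $E_{ji}$ is a partial isometry with $\Vert E_{ji}\Vert\leq 1$ and $\Vert Y_{ij}\Vert\leq 1$, the pair $(E_{ji},Y_{ij})$ is admissible in the maximization on the right-hand side, whence each block satisfies $\Vert L_{ij}\Vert_{1}\leq \max_{\Vert X\Vert\leq 1,\Vert Y\Vert\leq 1}|\tr((X\otimes Y)L)|$. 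Summing the $d^{2}$ blocks gives the claim.

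The argument is elementary and I do not expect a genuine obstacle; the only points that require care are the correct identification of $L_{ij}$ as $\tr_{1}((E_{ji}\otimes \id)L)$ and the observation that the extracting operators $E_{ji}$ have unit operator norm (so that they are legitimate choices of $X$). It is worth noting that the bound decomposes the $d$-dimensional factor at full cost; whether the $d^{2}$ can be sharpened is not needed here, since for the application the relevant factor is $2^{3l/c}$ with $d=2^{l/c}$, and the crude $d^{2}$ suffices.
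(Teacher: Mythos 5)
Your proof is correct, but it is organized dually to the paper's. The paper first invokes trace-norm duality \emph{globally}: it picks a single contraction $M$ with $\Vert L\Vert_1 = \tr(ML)$, writes $M = \sum_{a,b}|a\rangle\langle b|\otimes M_{a,b}$ in block form with respect to the small factor, and then proves the key fact that every block of a contraction is itself a contraction, $\Vert M_{a,b}\Vert \leq \Vert M\Vert \leq 1$ (via a variational argument over product vectors $|a,\psi'\rangle$, $|b,\phi'\rangle$); the $d^2$ counts the blocks of the witness, and duality is used only once. You instead decompose $L$ itself as $\sum_{i,j}E_{ij}\otimes L_{ij}$, pay the triangle inequality (together with multiplicativity of $\Vert\cdot\Vert_1$ under tensor products) upfront, and then apply trace-norm duality separately to each of the $d^2$ blocks $L_{ij}$, using the matrix unit $E_{ji}$ as the admissible $X$. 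The two arguments are mirror images of one another: yours trades the paper's operator-norm fact about blocks of a contraction for per-block duality, and both land on exactly the same constant $d^2$. Your version has the mild advantage of avoiding the block-norm lemma entirely; the paper's has the advantage of invoking duality a single time rather than $d^2$ times. Either argument is complete and correct.
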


\begin{proof}
From the variational characterization of the trace norm there is a $M$ with $\Vert M \Vert \leq 1$ such that $\Vert L \Vert_{1} = \tr(ML)$. As will be shown below, we can write
\begin{equation}
M = \sum_{k=1}^{d^2} X_k \otimes Y_k,
\label{eq:MXY}
\end{equation}
with $\Vert X_k \Vert \leq 1$, $\Vert Y_k \Vert \leq 1$. Then we have 
\begin{equation}
\Vert L \Vert_{1} = \tr(ML) = \sum_{k=1}^{d^2} \tr((X_k \otimes Y_k) L) \leq d^2\max_{\Vert X \Vert \leq 1, \Vert Y \Vert \leq 1} | \tr( (X \otimes Y) L ) |.
\end{equation}
We now show  \eqref{eq:MXY}. To this end
we write $M$ in block form:
\be
M = \sum_{a=1}^d \sum_{b=1}^d |a\>\<b| \otimes M_{a, b}
\ee
where $M_{a,b}$ are $D\times D$ matrices.
It suffices to prove that $|| M_{a, b} || \leq 1$. For this
we use that for every operator $X$,
\be
||X|| = \max_{|\psi\>, |\phi\>} \<\psi| X |\phi\>.
\ee
Then
\ben
&& 1 \geq ||M|| =\max_{|\psi\>, |\phi\>} \<\psi| M |\phi\> 
\geq \max_{ |a, \psi'\>, |b, \phi'\>} \<a, \psi'| M |b, \phi'\> \nonumber \\
&& = \max_{a,b}\max_{ |\psi'\>, |\phi'\> } \<\psi'| M_{a, b} |\phi'\> = \max_{a,b}|| M_{a, b} ||.
\een
\end{proof}

\section{Decoupling in Haar Random States} \label{decouplingHaar}

The objective of this section is to state and prove the following well-known result (see e.g. \cite{HLW06}):

\begin{lemma}
Let $\ket{\psi}_{AB}$ be a Haar random state in ${\cal H}_A \otimes {\cal H}_B$ with $|A| \geq |B|$. Then
\begin{equation}
\mathbb{E} \left( D(\rho_B, \tau_B) \right) \leq \left( 2\frac{|B|}{|A|} \right)^{1/4},
\end{equation}
where $\rho_B$ is the $B$ reduced density matrix of $\ket{\psi}_{AB}$ and $\tau_B$ is the maximally mixed state on ${\cal H}_B$.
\end{lemma}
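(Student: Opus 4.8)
The plan is to reduce everything to a second-moment (purity) computation for the reduced state of a Haar random vector. First I would pass from the purified distance to the trace distance using Lemma \ref{purifiedvstracenorm}, which gives $D(\rho_B,\tau_B) \leq \sqrt{2\,D_1(\rho_B,\tau_B)}$, and then bound the trace norm by the Hilbert--Schmidt norm via Cauchy--Schwarz: since $\rho_B - \tau_B$ acts on a space of dimension $|B|$, one has $\Vert \rho_B - \tau_B\Vert_1 \leq \sqrt{|B|}\,\Vert \rho_B - \tau_B\Vert_2$, hence $D_1(\rho_B,\tau_B) \leq \tfrac12\sqrt{|B|}\,\Vert\rho_B-\tau_B\Vert_2$. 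Two applications of Jensen's inequality (concavity of the square root) then reduce the problem to estimating the single scalar quantity $\mathbb{E}\,\Vert \rho_B-\tau_B\Vert_2^2$.

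Next I would expand this Hilbert--Schmidt norm explicitly. Writing $\tau_B = \mathbb{I}_B/|B|$ and using $\tr\rho_B = 1$, a direct computation gives $\Vert \rho_B - \tau_B\Vert_2^2 = \tr(\rho_B^2) - 1/|B|$, so the only nontrivial input is the expected purity $\mathbb{E}\,\tr(\rho_B^2)$.

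The heart of the argument, and the step I expect to be the main obstacle, is evaluating $\mathbb{E}\,\tr(\rho_B^2)$ over the Haar measure. I would use the swap trick: $\tr(\rho_B^2) = \tr\big((\rho_B\otimes\rho_B)\,\mathbb{F}_B\big)$ with $\mathbb{F}_B$ the swap on two copies of $B$, together with the standard identity $\mathbb{E}\big[(\ket{\psi}\bra{\psi})^{\otimes 2}\big] = (\mathbb{I}+\mathbb{F})/\big(|A||B|(|A||B|+1)\big)$, where $\mathbb{F}$ is the swap on the two full copies and the right-hand side is the normalized projector onto the symmetric subspace. Tracing out both copies of $A$ and contracting against $\mathbb{F}_B$, using $\tr\mathbb{F}_A = |A|$, $\tr\mathbb{F}_B = |B|$ and $\mathbb{F}_B^2 = \mathbb{I}_B$, yields the Lubkin formula $\mathbb{E}\,\tr(\rho_B^2) = (|A|+|B|)/(|A||B|+1)$.

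Finally I would assemble the estimate. From the Lubkin formula, $\mathbb{E}\,\Vert\rho_B-\tau_B\Vert_2^2 = (|A|+|B|)/(|A||B|+1) - 1/|B| \leq 1/|A|$, where the bound uses $(|A|+|B|)/(|A||B|+1)\leq 1/|B|+1/|A|$. Combining with the chain above gives $\mathbb{E}\,D_1(\rho_B,\tau_B) \leq \tfrac12\sqrt{|B|/|A|}$ and therefore $\mathbb{E}\,D(\rho_B,\tau_B)\leq (|B|/|A|)^{1/4}$, which is in fact stronger than the claimed bound $(2|B|/|A|)^{1/4}$; the extra factor of $2$ leaves comfortable slack and would absorb any looser constant in the purity computation.
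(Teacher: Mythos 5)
Your proposal is correct, and its skeleton is the same as the paper's: bound the Hilbert--Schmidt distance $\mathbb{E}\,\Vert\rho_B-\tau_B\Vert_2$ by $|A|^{-1/2}$, convert to the trace norm at the cost of $\sqrt{|B|}$, and finish with Lemma \ref{purifiedvstracenorm}; your constants all check out, and your final bound $(|B|/|A|)^{1/4}$ is indeed slightly stronger than the stated $(2|B|/|A|)^{1/4}$. The one genuine difference is at the key input: the paper simply cites Lemma V.3 of Ref.~\cite{BH09} for the estimate $\mathbb{E}\,\Vert\rho_B-\tau_B\Vert_2\leq |A|^{-1/2}$, whereas you derive it from scratch via the swap trick and the symmetric-subspace identity, arriving at the exact Lubkin formula $\mathbb{E}\,\tr(\rho_B^2)=(|A|+|B|)/(|A||B|+1)$ and then applying Jensen. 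Your route is therefore self-contained where the paper's is not, at the price of a slightly longer argument; the substance of what you prove in that step is precisely the content of the cited lemma.
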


\begin{proof}
Lemma V.3 of Ref. \cite{BH09} gives 
\begin{equation}
\mathbb{E} \left( \Vert \rho_B - \tau_B  \Vert_2   \right) \leq \frac{1}{|A|^{1/2}},
\end{equation}
and since for $d \times d$ matrix $X$, $\Vert X \Vert_1 \leq \sqrt{d} \Vert X \Vert_2$,
\begin{equation}
\mathbb{E} \left( \Vert \rho_B - \tau_B  \Vert_1   \right) \leq \sqrt{\frac{|B|}{|A|}}.
\end{equation}
The statement then follows from Lemma \ref{purifiedvstracenorm}.
\end{proof}

\section{Decay of Correlations in MPS} \label{DecayMPS}

Given a quantum channel $\Lambda(\rho) = \sum_{k} A_k \rho A_k^{\cal y}$ we define the following associated operator
\begin{equation}
\pi_{\Lambda} := \sum_{k} A_k \otimes \overline{A}_k.
\end{equation}
The eigenvalues and eigenvectors of $\Lambda$ are given by the pairs $(\lambda_X, X)$ such that $\Lambda(X) = \lambda_{X} X$. The operator $\pi_{\Lambda}$ is useful because it has the same eigenvalues as $\Lambda$ and its eigenvectors are in one-to-one correspondence with the eigenvectors of $\Lambda$: If $X$ is an eigenvector of $\Lambda$, then $(\id \otimes X)\ket{\Phi}$ is an eigenvector of $\pi_{\Lambda}$, with $\ket{\Phi}$ the maximally entangled state.

The following lemma was first given in Ref. \cite{FNW92} and we present here a proof in order to flesh out the dependency on the bond dimension of the state.
\begin{lemma}
Let $\Lambda : {\cal D}(\mathbb{C}^D) \rightarrow  {\cal D}(\mathbb{C}^D)$ be a unital channel with Kraus decomposition $\Lambda(\rho) = \sum_{k}^d A_k \rho A_k^{\cal y}$ and second largest eigenvalue $\eta$. Consider the matrix product state:
\begin{equation}
\ket{\psi}_{1, ..., n} := \sum_{i_1=1}^d...\sum_{i_n=1}^d \tr(A_{i_1}...A_{i_n}) \ket{i_1, ..., i_n}.
\end{equation}
Then if $A = [1, r]$, $B = [r+1, r+l+1]$, and $C = [r+l+2, n]$,
\begin{equation}
\text{Cor} (A:C) \leq \Vert \rho_{AC} - \rho_A \otimes \rho_C \Vert_1 \leq D \eta^{l}.
\end{equation}
\end{lemma}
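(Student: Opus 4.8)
The plan is to treat the two inequalities separately. The first, $\text{Cor}(A:C)\leq \Vert\rho_{AC}-\rho_A\otimes\rho_C\Vert_1$, is immediate from the definition of the correlation function: for any $M,N$ with $\Vert M\Vert,\Vert N\Vert\leq 1$ we have $\Vert M\otimes N\Vert\leq 1$, so by H\"older's inequality $|\tr((M\otimes N)(\rho_{AC}-\rho_A\otimes\rho_C))|\leq \Vert M\otimes N\Vert\,\Vert\rho_{AC}-\rho_A\otimes\rho_C\Vert_1\leq \Vert\rho_{AC}-\rho_A\otimes\rho_C\Vert_1$; maximizing over $M,N$ gives the claim. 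Hence the entire content is the bond-dimension-dependent bound $\Vert\rho_{AC}-\rho_A\otimes\rho_C\Vert_1\leq D\eta^l$.

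For this I would use the transfer-operator formalism organized around $\pi_\Lambda=\sum_k A_k\otimes\overline{A}_k$ introduced just before the lemma. Contracting the physical indices of the trace-form MPS, every reduced density matrix and expectation value becomes a cyclic trace of products of operators on the doubled bond space $\mathbb{C}^D\otimes\mathbb{C}^D$, in which a block of $m$ consecutive sites carrying the identity contributes exactly $\pi_\Lambda^{m}$. In particular the separating region $B$, of length $l+1$, contributes a factor $\pi_\Lambda^{\,l+1}$ in the expression for $\rho_{AC}$. The key structural input is the spectrum of $\pi_\Lambda$: by the eigenvector correspondence stated before the lemma it coincides with that of $\Lambda$, so the top eigenvalue is $1$ and all others have modulus at most $\eta$. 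Unitality pins down the leading eigenvectors, since $\Lambda(\I)=\I$ gives $\pi_\Lambda\ket{\Phi}=\ket{\Phi}$ for the (unnormalized) maximally entangled vector $\ket{\Phi}=\sum_a\ket{aa}$, and trace preservation gives $\bra{\Phi}\pi_\Lambda=\bra{\Phi}$; thus the spectral projector onto the top eigenvalue is $P_\infty=\ket{\Phi}\bra{\Phi}/D$.

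The heart of the argument is then the splitting $\pi_\Lambda^{\,l+1}=P_\infty+(\pi_\Lambda^{\,l+1}-P_\infty)$ inside the contraction defining $\rho_{AC}$. I would show that the $P_\infty$ term reproduces exactly the product $\rho_A\otimes\rho_C$: being the rank-one projector onto $\ket{\Phi}$, inserting $P_\infty$ severs the bond joining $A$ and $C$ through $B$, and unitality makes the far, complementary contractions collapse onto $\ket{\Phi}$ as well, so the two sides recombine into $\rho_A$ and $\rho_C$ respectively. What remains, $\rho_{AC}-\rho_A\otimes\rho_C$, is the same contraction with $\pi_\Lambda^{\,l+1}$ replaced by the gapped part $R:=\pi_\Lambda^{\,l+1}-P_\infty$, for which the spectral gap gives $\Vert R\Vert\leq \eta^{\,l+1}\leq\eta^{l}$. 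Finally I would convert this operator-norm estimate on the doubled bond space into a trace-norm estimate on the physical reduced operator; since $R$ is sandwiched between the norm-nonincreasing open contractions on $A$ and $C$ and lives effectively on a $D$-dimensional bond, the conversion costs a single factor of the bond dimension $D$, yielding $\Vert\rho_{AC}-\rho_A\otimes\rho_C\Vert_1\leq D\eta^{l}$.

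The main obstacle I anticipate is precisely the middle step: verifying that the $P_\infty$ piece equals the genuine product $\rho_A\otimes\rho_C$ rather than merely a state factorized across $B$. This requires care with the cyclic structure of the ring, since the complements of $A$ and of $C$ are large and one must argue that those long identity-contractions are themselves already collapsed onto $P_\infty$ up to errors that can be absorbed into the $\eta^{l}$ term. The second delicate point is producing exactly one factor of $D$ with no polynomial-in-$l$ prefactor: if $\pi_\Lambda$ fails to be normal, the bound $\Vert\pi_\Lambda^{\,l+1}-P_\infty\Vert\leq\eta^{\,l+1}$ can pick up a Jordan-block prefactor, so one either invokes diagonalizability of the transfer operator (natural for the Haar-unitary expanders considered here) or absorbs a harmless constant, while tracking the dimension counting so that the trace-norm conversion contributes $D$ and not $D^{2}$.
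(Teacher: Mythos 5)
Your handling of the first inequality is correct, and it is the same trivial H\"older/duality argument that the paper leaves implicit. For the main bound, however, your route differs from the paper's, and the two points you flag as ``obstacles'' are exactly the places where your sketch has no mechanism, while being exactly the places the paper's formulation dispatches automatically. The paper never splits $\pi_\Lambda^{l+1}$ inside a doubled-space contraction: it takes an isometric (Stinespring) extension $V$ of $\Lambda$, writes $\rho_{AC}=\tr_{{\cal H}_1}\left(V^{n_A}\left(\Lambda^{l}\left(V^{n_C}(\tau)\right)\right)\right)$, observes that $\rho_A\otimes\rho_C$ is the \emph{same} expression with $\Lambda^l$ replaced by the completely depolarizing channel $\Phi$ (unitality enters here, via $\Lambda(\tau)=\tau$), and then bounds $\Vert\rho_{AC}-\rho_A\otimes\rho_C\Vert_1\le\Vert\Lambda^{l}-\Phi\Vert_{\Diamond}\le D\,\Vert\Lambda^{l}-\Phi\Vert_{2\to2}=D\,\Vert\pi_\Lambda^{l}-\pi_\Phi\Vert_\infty$, using monotonicity of the trace norm under the CPTP map $\tr_{{\cal H}_1}\circ V^{n_A}$ and the norm inequality $\Vert\cdot\Vert_{\Diamond}\le D\,\Vert\cdot\Vert_{2\to2}$. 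That inequality (which follows from $\Vert Z\Vert_1\le D\Vert Z\Vert_2$ for operators on output tensored with an ancilla of dimension $D$, plus multiplicativity of the $2\to2$ norm) is precisely where the single factor of $D$ comes from. In your route, starting from $\Vert R\Vert_\infty\le\eta^{l+1}$, the natural conversions to a trace-norm bound on the physical operator (an SVD of $R$ with term-by-term estimates, or $\Vert R\Vert_1\le D^2\Vert R\Vert_\infty$ on the doubled bond space) give $D^2\eta^{l}$, not $D\eta^{l}$; asserting that the conversion ``costs a single factor of $D$'' is a genuine gap, and closing it essentially requires importing the paper's diamond-norm inequality or an equivalent.

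The second gap is more serious: your ``middle step'' cannot be repaired the way you propose, because for the stated partition there are no far-side contractions at all. Since $A\cup B\cup C$ exhausts the ring, $A$ and $C$ are adjacent across the bond between sites $n$ and $1$, so inserting $P_\infty$ inside $B$ yields a state factorized across $B$ but still correlated through that wrap-around bond, which is generically \emph{not} $\rho_A\otimes\rho_C$. In fact the identity you need is false as literally stated: take $D=2$, $d=4$, $A_k=\sigma_k/2$, so that $\pi_\Lambda=\ket{\Phi}\bra{\Phi}$ and $\eta=0$; for $n=4$, $A=\{1\}$, $B=\{2,3\}$, $C=\{4\}$, the trace-form state gives $\rho_{AC}\propto\sum \tr\left[\sigma_{i_4}\sigma_{i_1}\sigma_{j_1}\sigma_{j_4}\right]\,\ket{i_1 i_4}\bra{j_1 j_4}$, which has nonvanishing off-diagonal entries (e.g.\ $\tr[\sigma_0\sigma_1\sigma_2\sigma_3]=2i$) even though $\rho_A\otimes\rho_C$ is maximally mixed, so the claimed bound $D\eta^l=0$ fails. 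What is true, and what the main text actually uses, is the statement for regions separated by at least $l$ sites on \emph{both} sides of the ring; the paper's own proof silently operates in this regime, because its dilation formula for $\rho_{AC}$ replaces the wrap-around closure by the fixed point $\tau$ and is exact only for that modified boundary condition. If you restate the lemma with two-sided separation, your splitting does go through, with each separating block contributing its own $P_\infty+R$ and an $\eta$-small error, and only the factor-of-$D$ issue above remains. (Your concern about Jordan blocks when $\pi_\Lambda$ is not normal is legitimate but shared by the paper, which writes $\Vert\pi_\Lambda^{l}-\pi_\Phi\Vert_\infty=\eta^{l}$ as an equality; it is harmless for the Haar-random expanders to which the lemma is applied.)
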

\begin{proof}
Let $V : {\cal H}_1 \rightarrow {\cal H}_1 \otimes {\cal H}_2$ be an isometric extension of $\Lambda$, with ${\cal H}_1 \cong \mathbb{C}^D$ and ${\cal H}_2 \cong \mathbb{C}^d$, i.e. $\Lambda(\rho) = \tr_{ {\cal H}_2 }\left( V \rho V^{\cal y} \right)$. The reduced density matrix $\rho_{AC}$ of $\ket{\psi}_{ABC}$ is given by
\begin{equation}
\rho_{AC} =  \tr_{{\cal H}_1} \left(V^{n_A} \left(\Lambda^{l} \left(V^{n_C}(\tau) \right) \right)\right),
\end{equation}
with $n_A$ and $n_C$ the number of sites in $A$ and $C$, respectively, and $\tau$ the maximally mixed state in ${\cal D}(\mathbb{C}^D)$.  We used the notation
\begin{equation}
V^{k}(\tau) := \left(V \circ ... \circ V\right) \rho \left(V^{\cal y} \circ ... \circ V^{\cal y} \right),
\end{equation}
with the $k$ fold composition of $V$, and likewise for $\Lambda^{l}$. 

Let $\Phi : {\cal D}(\mathbb{C}^D) \rightarrow  {\cal D}(\mathbb{C}^D)$ be the completely depolarized channel: $\Phi (\rho) = \tau$. Then
\begin{equation}
\rho_{A} \otimes \rho_C =  \tr_{{\cal H}_1} \left(V^{n_A} \left(\Phi^{l} \left(V^{n_C}(\tau) \right) \right)\right).
\end{equation}
Therefore
\be
\Vert \rho_{AC} - \rho_A \otimes \rho_C \Vert_1 \leq \Vert  \Lambda^{l} \left(V^{n_C}(\tau) \right)  -   \Phi^{l} \left(V^{n_C}(\tau) \right)  \Vert_1 \leq \Vert \Lambda^{l} - \Phi \Vert_{\Diamond},
\ee
where the first inequality follows from the monotonicity of trace norm under trace preserving CP maps and the second inequality from the definition of the diamond norm. Using the relation
\begin{equation}
\Vert \Lambda_1 - \Lambda_2 \Vert_{\Diamond} \leq D \Vert \Lambda_1 - \Lambda_2 \Vert_{2 \rightarrow 2} = D\Vert \pi_{\Lambda_1} - \pi_{\Lambda_2} \Vert_{\infty},
\end{equation}
valid for every two channels $\Lambda_1, \Lambda_2 : {\cal D}(\mathbb{C}^D) \rightarrow {\cal D}(\mathbb{C}^D)$, we get
\begin{equation}
 \Vert \Lambda^{l} - \Phi \Vert_{\Diamond} \leq  D \Vert \pi_{\Lambda^{l}} - \pi_{\Phi} \Vert_{\infty} =  D \left \Vert \left(\pi_{\Lambda}\right)^{l} - \pi_{\Phi} \right \Vert_{\infty} = D \eta^{l},
\end{equation}
where we used that $\pi_{\Phi}$ is the projector onto the maximum eigenvector of $\pi_{\Lambda}$ (of value one and corresponding eigenvector $\tau$).
\end{proof}

\section{Correlations in Quantum Expander States} \label{CorrelationsExpander}

In this section we show how the results of Ref. \cite{Has07b} imply there are correlations between regions separated by of order $\log(D)/\log(d)$ sites in quantum expander states with matrices given by independent Haar unitaries. We start with the following lemma,

\begin{lemma} \label{aubexpanders}
Let $\Lambda : {\cal D}(\mathbb{C}^D) \rightarrow  {\cal D}(\mathbb{C}^D)$ be a unital channel with Kraus decomposition $\Lambda(\rho) = \sum_{k=1}^d A_k \rho A_k^{\cal y}$. Consider the matrix product state 
\begin{equation} \label{expanderstateappendix}
\ket{\psi}_{1, ..., n} := \sum_{i_1=1}^d...\sum_{i_n=1}^d \tr(A_{i_1}...A_{i_n}) \ket{i_1, ..., i_n}.
\end{equation}
Then the reduced density matrix $\rho_l$ of $l$ sites is such that
\be
\label{eq:tr_rho_E1}
\tr \left( \rho^2_{l} \right)= \frac{1}{D^2} \sum_{i,j}\tr \left( \Lambda^l(|i\>\<j|)\Lambda^l(|j\>\<i|) \right).
\ee
\end{lemma}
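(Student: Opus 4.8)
The plan is to evaluate both the reduced state $\rho_l$ and its purity by the standard transfer-operator contraction of the matrix-product state, using the hypothesis that $\Lambda$ is unital to pin down the boundary contribution. The key object is the transfer operator $\pi_\Lambda = \sum_k A_k \otimes \overline{A_k}$ on the doubled bond space $\mathbb{C}^D \otimes \mathbb{C}^D$: summing a physical index anywhere in $\langle\psi|\psi\rangle$ or in a partial trace over $|\psi\rangle\langle\psi|$ produces exactly one factor of $\pi_\Lambda$, and $\pi_\Lambda$ implements $\Lambda$ under vectorization. First I would note $\langle\psi|\psi\rangle = \tr(\pi_\Lambda^{\,n})$ and, writing $A_{\vec i} := A_{i_1}\cdots A_{i_l}$, observe that tracing out the $n-l$ complementary sites contributes a single factor $\pi_\Lambda^{\,n-l}$ joining the two cuts of the kept block, so that $\rho_l$ is the contraction of the two kept-block tensors (one from $|\psi\rangle$, one from $\langle\psi|$) against $\pi_\Lambda^{\,n-l}$.

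Because $\Lambda$ is unital (hence also trace preserving), $\id$ is a fixed point of both $\Lambda$ and $\Lambda^{\dagger}$; equivalently, the maximally entangled vector $\sum_a \ket{a}\otimes\ket{a}$ spans both the right and the left $1$-eigenspace of $\pi_\Lambda$. Consequently, in the bulk (equivalently, as $n\to\infty$ with $l$ fixed, using the spectral gap $\eta<1$ of the expander to control convergence), $\pi_\Lambda^{\,n-l}\to \ket{\Phi}\bra{\Phi}$, the projector onto $\ket{\Phi}=D^{-1/2}\sum_a\ket{aa}$, while $\langle\psi|\psi\rangle\to 1$. Substituting $(\ket{\Phi}\bra{\Phi})_{(\alpha_l\beta_l),(\alpha_0\beta_0)} = D^{-1}\delta_{\alpha_l\beta_l}\delta_{\alpha_0\beta_0}$ collapses the bra and ket bond loops and yields the clean form
\[
\rho_l = \frac{1}{D}\sum_{\vec i,\vec i'}\tr\!\big(A_{\vec i}A_{\vec i'}^{\dagger}\big)\,\ket{\vec i}\bra{\vec i'}.
\]
As a sanity check, unitality gives $\sum_{\vec i}A_{\vec i}A_{\vec i}^{\dagger} = \Lambda^l(\id) = \id$, whence $\tr\rho_l = D^{-1}\tr(\Lambda^l(\id)) = 1$; this confirms the normalization and shows that unitality is exactly the property doing the work.

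With $\rho_l$ in this form the purity is immediate, $\tr(\rho_l^2) = D^{-2}\sum_{\vec i,\vec i'}\tr(A_{\vec i}A_{\vec i'}^{\dagger})\,\tr(A_{\vec i'}A_{\vec i}^{\dagger})$, and it remains only to recognize this as the right-hand side of the claim. Since the Kraus operators of the $l$-fold composition are the products, $\Lambda^l(X) = \sum_{\vec i}A_{\vec i}X A_{\vec i}^{\dagger}$, cyclicity of the trace applied to $\tr(A_{\vec i}\ket{a}\bra{b}A_{\vec i}^{\dagger}A_{\vec i'}\ket{b}\bra{a}A_{\vec i'}^{\dagger})$ gives
\[
\sum_{a,b}\tr\!\big(\Lambda^l(\ket{a}\bra{b})\,\Lambda^l(\ket{b}\bra{a})\big) = \sum_{\vec i,\vec i'}\sum_{a,b}\big(\bra{a}A_{\vec i'}^{\dagger}A_{\vec i}\ket{a}\big)\big(\bra{b}A_{\vec i}^{\dagger}A_{\vec i'}\ket{b}\big).
\]
Carrying out the sums over $a$ and $b$ turns the two expectation values into $\tr(A_{\vec i'}^{\dagger}A_{\vec i})\tr(A_{\vec i}^{\dagger}A_{\vec i'}) = \tr(A_{\vec i}A_{\vec i'}^{\dagger})\tr(A_{\vec i'}A_{\vec i}^{\dagger})$, which is precisely $D^2\tr(\rho_l^2)$, establishing Eq. (\ref{eq:tr_rho_E1}).

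The main obstacle is the first step rather than the last: the closing identity is routine cyclicity, but correctly reducing the complement to the single projector $\ket{\Phi}\bra{\Phi}$ requires care. One must track the four bond loops (two from each copy of $\rho_l$) and notice that the kept block pairs them differently from the traced block — this mismatch is the replica/swap structure responsible for the $\sum_{a,b}\ket{a}\bra{b}\otimes\ket{b}\bra{a}$ pattern on the right-hand side — and one must justify replacing $\pi_\Lambda^{\,n-l}$ by its top spectral projector, which uses both unitality (to identify the fixed point as maximally entangled and to secure $\tr\rho_l=1$) and the expander gap $\eta<1$ (to control the approach to the bulk limit).
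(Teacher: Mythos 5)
Your proof is correct, and its second half takes a genuinely different route from the paper's. Both arguments pass through the same explicit form of the reduced state, $\rho_l = \frac{1}{D}\sum_{\vec i,\vec j}\tr\left(A_{\vec i}A_{\vec j}^\dagger\right)\ket{\vec i}\bra{\vec j}$: the paper simply writes it down (as $\tr(A_{\vec i}\,\tfrac{\id}{D}\,A_{\vec j}^\dagger)$), whereas you derive it from the transfer operator and justify replacing $\pi_\Lambda^{\,n-l}$ by its top spectral projector --- and being explicit that this step needs unitality plus the spectral gap (equivalently the bulk limit $n\to\infty$) makes your treatment of that point more careful than the paper's, not less. From there the two proofs diverge. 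The paper recognizes $\rho_l$ as the reduction onto $E$ of a tripartite pure state $\ket{\psi}_{ABE}$ whose $AB$ marginal is the Choi state $(\id_A\ot\Lambda^l_B)(\ket{\Phi}\bra{\Phi})$, and invokes equality of purities of complementary reductions of a pure state, $\tr(\rho_E^2)=\tr(\rho_{AB}^2)$, finishing by computing the purity of the Choi state. You instead compute $\tr(\rho_l^2)$ directly and expand the right-hand side of Eq.~(\ref{eq:tr_rho_E1}) in the Kraus form $\Lambda^l(X)=\sum_{\vec i}A_{\vec i}XA_{\vec i}^\dagger$, matching the two expressions by cyclicity of the trace; that computation is correct. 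What each buys: the paper's purification argument is more conceptual --- it exhibits $\rho_l$ as isospectral to the Choi matrix of $\Lambda^l$, a stronger fact than the purity identity, and exactly the structure exploited when Hastings' second-eigenvalue bound is applied in Lemma~\ref{hasentropy} --- while yours is more elementary and self-contained, proving precisely what the lemma asserts with no purification bookkeeping. One cosmetic slip: unitality does not imply trace preservation; $\Lambda$ preserves trace because it is a channel, and unitality is the separate hypothesis $\Lambda(\id)=\id$. Your argument in fact uses both properties (they make the maximally entangled vector a right and a left fixed point of $\pi_\Lambda$, respectively), so nothing breaks.
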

\begin{proof}
We have 
\be
\rho_l=\sum^d_{{i_1\ldots i_l=1\atop j_1\ldots j_l=1}}
\tr \left(A_{i_1}\ldots A_{i_l} {\id\over D} A_{j_1}^\dagger\ldots A_{j_l}^\dagger \right)
|i_1\>\<j_1|\ot \ldots \ot |i_l\>\<j_l|.
\ee

Note that this state is reduced density matrix $\rho_E$ of a tripartite state $\ket{\psi}_{ABE}$, with $|A|=|B|=D, |E|=d^l$, defined as
\be
\ket{\psi}_{ABE} := \frac{1}{d^{l/2}}\sum^d_{i_1, \ldots, i_l=1}(\id_A \ot A_{i_1}\ldots A_{i_l})|\Phi\>_{AB}|i_1,\ldots, i_l\>_E,
\ee
where the product of operators $A_{i_j}$ acts on system $B$ and 
\be
\ket{\Phi}_{AB}=\frac{1}{\sqrt{D}}\sum_{k=1}^D|k\>_A|k\>_B
\ee
is the maximally entangled state on $AB$. One then finds that the $AB$ reduced density matrix of this state reads
\be \label{Formula6}
\rho_{AB}=(\id_A \ot \Lambda^l_B)(|\Phi\>\<\Phi|).
\ee
Eq. \eqref{eq:tr_rho_E1} then follows from $\tr\left( \rho_l^2 \right) = \tr\left( \rho_E^2 \right) = \tr\left( \rho_{AB}^2 \right)$ and Eq. (\ref{Formula6}).
\end{proof}

The next lemma follows from the previous lemma and the results of \cite{Has07b, Has12}.

\begin{lemma} \label{hasentropy}
Let $A_i := \frac{1}{\sqrt{d}} U_i$, with $U_i$ chosen independently at random according to the Haar measure. Then there exists constants $\gamma > 0, k > 1$ such that the state given by Eq. (\ref{expanderstateappendix}) satisfies
\be
\<  \tr \left( \rho_l^2 \right)  \> \leq \frac{1}{D^2} + \frac{kl}{d^l}.
\ee
for any $l$ such that $l \leq D^{\gamma}$, where $\<\ldots \>$ denotes the average over $U_i$'s. 
\end{lemma}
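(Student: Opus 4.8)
The plan is to reduce the average purity to a Haar moment of products of the random unitaries by means of Lemma~\ref{aubexpanders}, and then evaluate that moment. First I would substitute $A_i = U_i/\sqrt d$ into Eq.~\eqref{eq:tr_rho_E1}. Expanding $\Lambda^l(|i\rangle\langle j|) = d^{-l}\sum_{\vec a} U_{\vec a}|i\rangle\langle j|U_{\vec a}^\dagger$ with $U_{\vec a} := U_{a_l}\cdots U_{a_1}$ and carrying out the sum over $i,j$ collapses the two traces into a single overlap, giving the exact identity
\begin{equation}
\tr(\rho_l^2) = \frac{1}{D^2 d^{2l}}\sum_{\vec a,\vec b\in[d]^l}\bigl|\tr(U_{\vec a}^\dagger U_{\vec b})\bigr|^2 = \frac{1}{D^2}\,\Vert \pi_\Lambda^l\Vert_2^2,
\end{equation}
where the last equality uses that $\pi_\Lambda=\sum_k A_k\otimes\bar A_k$ is the Liouville representation of $\Lambda$ and $\pi_{\Lambda^l}=\pi_\Lambda^l$. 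Since $\Lambda(\rho)=\tfrac1d\sum_k U_k\rho U_k^\dagger$ is unital and trace preserving, $\pi_\Lambda$ fixes the normalized maximally entangled vector $|\Phi\rangle$ from both sides, so the orthogonal complement of $|\Phi\rangle$ is invariant and $\Vert\pi_\Lambda^l\Vert_2^2 = 1 + \Vert(\pi_\Lambda')^l\Vert_2^2$, with $\pi_\Lambda' := Q\pi_\Lambda Q$ and $Q=\id-|\Phi\rangle\langle\Phi|$. Thus it suffices to prove $\mathbb{E}\,\Vert(\pi_\Lambda')^l\Vert_2^2 \leq klD^2/d^l$, equivalently that the off-diagonal part of the moment sum reproduces the $1/D^2$ floor up to the stated slack.

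Next I would take the expectation of the moment sum and split it by whether $\vec a=\vec b$. The diagonal gives $U_{\vec a}^\dagger U_{\vec b}=\id$, hence $|\tr\id|^2=D^2$ for each of the $d^l$ terms, contributing exactly $1/d^l$ after dividing by $D^2 d^{2l}$; this sits comfortably inside the term $kl/d^l$. For an off-diagonal pair I would cancel the longest common suffix of $\vec a$ and $\vec b$, reducing $U_{\vec a}^\dagger U_{\vec b}$ to a nonempty reduced word $g$ of even length $2r$, $1\le r\le l$, whose innermost letters are distinct. Grouping pairs by the common-suffix length $m=l-r$ (there are $(d-1)d^{2l-m-1}$ of them) reduces everything to the Haar moments $\mathbb{E}|\tr g|^2$ of reduced words, each occurring with combinatorial weight $d^{l-r}$. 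Generic reduced words, those with all letters distinct, satisfy $\mathbb{E}|\tr g|^2=1$ — a one-line integration already gives $\mathbb{E}|\tr(U_a^\dagger U_b)|^2=1$ for $a\neq b$ — and feeding the value $1$ into the weighted sum over $m$ telescopes to $d^{2l}-d^l$, i.e. precisely to the single term $1/D^2$ in the bound. So the leading contribution of the generic words reproduces the bond-dimension floor.

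The heart of the argument, and the main obstacle, is to control the words with repeated letters, for which $\mathbb{E}|\tr g|^2$ exceeds $1$ and can grow linearly in $r$ (the extreme case $g=U_a^{-r}U_b^r$ behaving like $\mathbb{E}|\tr U^r|^2\sim r$). These words are combinatorially suppressed, and the task is to show their accumulated contribution is bounded by the slack $(k-1)l/d^l$. I would evaluate $\mathbb{E}|\tr g|^2$ through the Weingarten calculus, whose expansion is organized by genus: the genus-zero pairings produce the counts already discussed, while each higher-genus term carries a relative factor $D^{-2}$. Summing these corrections over all reduced words of length up to $2l$, against the weights $d^{l-r}$, is exactly where the hypothesis $l\le D^\gamma$ enters — it keeps the word length small enough relative to $D$ that the genus expansion stays dominated by its leading order and the total correction remains $O(l)\,D^2 d^l$ in the unnormalized sum. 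This moment estimate is the content of the computation of Refs.~\cite{Has07b, Has12}, which I would invoke (or reproduce) to conclude $\mathbb{E}\,\Vert(\pi_\Lambda')^l\Vert_2^2 \le klD^2/d^l$ and hence the claim. The delicate point throughout is that $\pi_\Lambda'$ is far from normal, so one must work with the honest polynomial moment $\mathbb{E}\,\tr\bigl((\pi_\Lambda'^\dagger)^l(\pi_\Lambda')^l\bigr)$ rather than with eigenvalues or operator norms of $\pi_\Lambda'$, for which no clean expectation bound is available.
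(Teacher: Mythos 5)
Your proposal is correct and takes essentially the same route as the paper: both reduce the averaged purity to a Haar-unitary moment computation via Lemma~\ref{aubexpanders} (Eq.~(\ref{eq:tr_rho_E1})) and then defer the crucial estimate — controlling the contribution of words with repeated letters in the regime $l \leq D^{\gamma}$ — to Hastings' computation in Refs.~\cite{Has07b, Has12}. Your additional unfolding (the identity $\tr(\rho_l^2) = D^{-2}\Vert \pi_{\Lambda}^{l}\Vert_2^2$, the splitting off of the fixed point $\ket{\Phi}$, and the diagonal/generic-word bookkeeping) is a correct and more explicit account of what that citation contains, but the logical skeleton and the external dependency are identical to the paper's proof.
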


\begin{proof} 
For $A_i =  \frac{1}{\sqrt{d}} U_i$ Hastings proved an upper bound for the average of the RHS of Eq. \eqref{eq:tr_rho_E1} \cite{Has07b, Has12}. The bound says there are constants $\gamma > 0, k > 1$ such that  
\be
\sum_{i,j}\tr \left( \Lambda^l(|i\>\<j|)\Lambda^l(|j\>\<i|) \right) \leq 1 + kl\left( \frac{1}{\sqrt{d}}   \right)^{2l} D^2
\ee
for any $l \leq D^{\gamma}$. Using Lemma \ref{aubexpanders} we get $\tr \left(\rho_{l}^2 \right)\leq \frac{1}{D^2} + \frac{kl}{d^l}$.
\end{proof}

We are now ready to prove the main result of this section. The idea is to reduce the problem to one similar to the case of random states, where one could argue there were correlations by finding two neighbouring regions $AB$ which were decoupled. Here there are no neighbouring regions which are decoupled a priori. However using the result of Lemma \ref{hasentropy} that regions of size $O(\log(D)/\log(d))$ have very large entropy, we will be able to show that one can find a projector that has large probability of being measured in $A$, and that decouples it from its neighbouring region $B$. This idea, of showing the existence of correlations by decoupling one subsystem from its neighbouring one by measurements, will be the central idea in the proof of Theorem \ref{maintheorempure}. Here we consider the particular case in which the entropy is almost the maximum value possible, what makes the argument much simpler. 
\begin{prop}
Let $A_i := \frac{1}{\sqrt{d}} U_i$, with $U_i$ chosen independently at random according to the Haar measure. Then the state given by Eq. (\ref{expanderstateappendix}) has $\text{Cor}(A:C) \geq \Omega \left(1/l \right)$ between regions $A$ and $C$ separated by $l$ sites for any $l \leq O(\log(D)/\log(d))$.
\end{prop}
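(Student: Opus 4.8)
The plan is to transplant the random-state argument of the previous subsection to the present setting. The one genuinely new feature is that here no pair of neighbouring regions is decoupled a priori; decoupling must instead be produced by a measurement, and the near-maximality of the local entropy furnished by Lemma~\ref{hasentropy} is exactly what makes this possible. First I would fix the geometry of Fig.~\ref{fig0}, writing the ring as $A\mid B_R\mid C\mid B_L$ with the two separating arcs $B:=B_LB_R$ of $l$ sites each, so that $A$ and $C$ are separated by $l$ sites, and choose $|A|=\Theta(\log D/\log d)$. As a connected region of the ring, $A$ has two bonds and hence $\rank(\rho_A)\le D^2$; applying Lemma~\ref{hasentropy} to $A$ gives $\langle\tr(\rho_A^2)\rangle\le D^{-2}+k|A|/d^{|A|}$, which for this choice of $|A|$ equals $D^{-2}(1+o(1))$, so by Markov's inequality a typical choice of the $U_i$ has $\tr(\rho_A^2)\le D^{-2}(1+o(1))$. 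Together with the rank bound and $\|\rho_A-\tau_A\|_1\le D\|\rho_A-\tau_A\|_2=D\sqrt{\tr(\rho_A^2)-D^{-2}}$, this places $\rho_A$ within small trace distance of the maximally mixed state $\tau_A$ on its $D^2$-dimensional support; thus $A$ is, to good approximation, maximally entangled with its complement $\overline A=B_LCB_R$.

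Next I would run the single-shot decoupling step on $A$. Since $\rho_A$ is essentially maximally mixed while its correlations with the adjacent arcs are controlled by the small entropy $S(B)$, I would exhibit a projector $M$ on the $D^2$-dimensional support of $\rho_A$ — the projector whose existence is promised in the paragraph before the statement — that scrambles the $A$–$B$ correlations, so that with success probability $p:=\tr(M\rho_A)$ the post-selected state leaves $A$ close to maximally mixed and decoupled from $B$. The hypothesis $l\le O(\log D/\log d)$ enters precisely here: the separation must stay below the correlation length so that the passive arcs $B$, through which the information carried by $A$'s two bonds must propagate, cannot by themselves absorb it — equivalently, the transfer operator $\Lambda^l$ across each arc is still far from the depolarizing channel $\Phi$, in contrast with the regime $l\gtrsim 2\log D/\log d$ where $\|(\pi_\Lambda)^l-\pi_\Phi\|_\infty=\eta^l$ is already small.

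Having decoupled $A$ from $B$ while keeping $\rho_A$ maximally mixed, Uhlmann's theorem (Lemma~\ref{purifieddistancepurification}) forces the residual purifying system, which must now reside in $C$, to share a near-maximally entangled state with $A$. Hence the measurement $M$ on $A$ steers the state of $C$: the post-measurement state $\tilde\rho_C$ obeys $D(\tilde\rho_C,\rho_C)=\Omega(1)$. Feeding $M$ and the optimal operator $N$ on $C$ distinguishing $\tilde\rho_C$ from $\rho_C$ into Lemma~\ref{lem:cor-norm} then gives $\text{Cor}(A:C)\ge\tfrac{p}{2}\,D(\tilde\rho_C,\rho_C)^2=\Omega(p)$, and tracking the $l$-dependence through the slack $k|B|/d^{|B|}$ of Lemma~\ref{hasentropy} and through the success probability $p$ produces the claimed $\text{Cor}(A:C)\ge\Omega(1/l)$.

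The main obstacle is the decoupling/routing step of the second paragraph: showing that a single measurement outcome on $A$ can be simultaneously likely and leave $A$ decoupled from both adjacent arcs with the residual entanglement genuinely carried by $C$ rather than absorbed by $B$. This is where the sub-correlation-length assumption is indispensable and where the quantitative factor $1/l$ has to be extracted; the remaining steps are the clean maximal-entropy specialisation of the measure-then-Uhlmann mechanism already used for random states, and should go through with only routine modifications.
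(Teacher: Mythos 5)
Your plan has the same skeleton as the paper's proof (measure on $A$ to decouple it from the separating arcs, apply Uhlmann's theorem to conclude near-maximal entanglement with $C$, then extract $\text{Cor}(A:C)\geq \Omega(p)$ with the success probability $p$ supplying the $1/l$), but the one step carrying all of the technical weight --- producing the decoupling measurement --- is exactly the step you leave open, and the ingredients you propose to feed into it are not the right ones. You apply Lemma~\ref{hasentropy} only to the region $A$ and then assert that the $A$--$B$ correlations are ``controlled by the small entropy $S(B)$''. Two problems: first, $S(B)$ is not small here (the arcs contain $\Theta(l)$ sites and their entropy can be of order $\min(l\log d, 2\log D)$); second, and more fundamentally, knowing $\rho_A\approx\tau_A$ on a $D^2$-dimensional support says nothing about whether any measurement on $A$ can decouple it from $B$. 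A maximally mixed $\rho_A$ is perfectly consistent with $A$ being (close to) maximally entangled with $B$ --- this is precisely the data-hiding configuration that motivates the whole paper --- and in that case no outcome on $A$ decouples it from $B$ and $\text{Cor}(A:C)$ need not be large. The quantity that actually controls decoupling is the purity of the \emph{joint} block, $\tr\left(\rho_{AB}^2\right)$, which is why the paper applies Lemma~\ref{hasentropy} to the full connected block of $l$ sites and only afterwards splits it into $A$ and $B=B_LB_R$.

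Concretely, the paper closes the step you call ``the main obstacle'' with Lemma~\ref{decouplingforpoors} (Proposition 4 of \cite{HOW07}): a POVM of $N$ rank-$L$ projectors on $A$ achieves average decoupling error
\begin{equation}
2 \sqrt{L \, \dim(B) \, \tr\left(\rho_{AB}^2\right)} + 2\,\frac{L}{\dim(A)},
\end{equation}
and the hypothesis $l\leq O(\log D/\log d)$ enters through plain dimension counting in this bound --- one needs $L\,\dim(B)\,\tr(\rho_{AB}^2)\ll 1$ with $\dim(B)=d^{\Theta(l)}$ and $\tr(\rho_{AB}^2)\lesssim 1/D^2 + kl/d^{l}$ --- not through your heuristic that the transfer operator $\Lambda^{l}$ is ``still far from depolarizing'', which is never used in the proof. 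The factor $1/l$ is also pinned down at this point: the slack term $kl/d^{l}$ from Lemma~\ref{hasentropy} forces one to take $N\approx kl/\varepsilon$ outcomes, and an averaging (Markov) argument over the POVM selects a single projector $Q$ that simultaneously has probability $\Omega(\varepsilon/l)$ and small conditional decoupling error; Uhlmann's theorem (Lemma~\ref{purifieddistancepurification}) and explicit observables $M=\sum_{k\leq \dim(Q)/2}\ket{k}\bra{k}$, $N=V(M\otimes\id)V^{\dagger}$ then finish exactly as you describe. So your proposal is the right shape, but without identifying this decoupling lemma and without the purity of the correct region, the central step does not go through as written.
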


\begin{proof}
Consider a region $\{1, ..., l\}$ with $l = O(\log(D)/\log(d))$ and divide it into subregions $A$ and $B$ of equal sizes, as in Fig. \ref{fig0}. Let $\varepsilon := 0.01$. Applying Lemma \ref{decouplingforpoors} with $N = kl/\varepsilon$, and using Lemma \ref{purifiedvstracenorm}, it follows there is a projector $Q$ acting on subsystem $A$, of dimension bigger than two, such that $\bra{\psi} (Q \otimes \id_{BC}) \ket{\psi} \geq O(l/\varepsilon)$ and 
\begin{equation}
D\left ( \rho_{A'B}, \tau_{A'} \otimes \rho_B \right ) \leq \sqrt{2\delta},
\end{equation}
with $\rho_{A'B}$ the reduced density matrix of the postselected state $\ket{\phi}_{A'BC} := (Q \otimes \id_{BC}) \ket{\psi}_{ABC} / \sqrt{\bra{\psi} (Q \otimes \id_{BC}) \ket{\psi}}$, and 
\be
\delta := 2 \sqrt{ \frac{d^l \varepsilon}{D^2 k l} + \varepsilon } + \frac{2 \varepsilon}{kl}. 
\ee
Choosing $l = \log(D)/\log(d)$  we get $\delta \leq 3\sqrt{2\varepsilon}$.

From Uhlmann's theorem (Lemma \ref{purifieddistancepurification} in Appendix \ref{purifieddistance}) we find there is an isometry $V : C \rightarrow C_1C_2$ that can be applied to $C$ such that
\begin{equation} \label{uhlmandecoupling}
D( (\id_{AB} \otimes V) \ket{\psi}_{ABC}\bra{\psi}(\id_{AB} \otimes V)^{\cal y}, \ket{\Phi}_{AC_1}\bra{\Phi} \otimes \ket{\Psi}_{BC_2}\bra{\Phi}) \leq \sqrt{2 \varepsilon},
\end{equation}
with $\ket{\Phi}_{AC_1} = \text{dim(A)}^{-1/2} \sum_{k=1}^{\text{dim}(A)} \ket{k, k}$ a maximally entangled state between $A$ and $C_1$ and $\ket{\Psi}_{BC_2}$ a purification of $\rho_B$. Thus defining $M =  \sum_{i=1}^{\text{dim}(Q)/2} \ket{k}\bra{k}$ and $N = V \left( M \otimes \id_{C_2} \right) V^{\cal y}$, we find 
\begin{equation}
\text{Cor}(A:C) \geq \tr \left((QM_AQ \otimes N_C)(\rho_{AC} - \rho_A \otimes \rho_C)\right) \geq \frac{\varepsilon}{8l}. 
\end{equation}
Therefore there are strong correlations between the regions $A$ and $C$ separated by $l/2 =  \log(D)/(2\log(d))$ sites.
\end{proof}

\begin{lemma}[Proposition 4 of \cite{HOW07}] \label{decouplingforpoors}
Let $\ket{\psi}_{ABC}$ be a pure state. Then there exists a POVM $\{ P_k \}_{k=1}^{N+1}$ consisting of $N = \left \lfloor \frac{|A|}{L}  \right\rfloor$ projectors of rank $L$ and one of rank $|A| - NL \leq L$ such that
\be
\sum_k p_k \left \Vert \rho^k_{A'B} - \tau_{A'} \otimes \rho_B \right \Vert_1 \leq 2 \sqrt{L |B| \tr \left(\rho_{AB}^2 \right)} + 2 \frac{L}{|A|},
\ee
with $p_k := \bra{\psi} P_{k} \otimes \id_{BC} \ket{\psi}$ and $\rho_{A'B}^k$ the reduced density matrix of the postselected state $\ket{\phi}_{A'BC} := \left( P_k \otimes \id_{BC}   \right) \ket{\psi}_{ABC} / \sqrt{p_k}$.
\end{lemma}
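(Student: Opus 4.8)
The plan is to realise the measurement as a Haar-random rotation of a fixed block measurement and then invoke the probabilistic method. Concretely, fix an orthonormal basis of $\mathcal{H}_A$ and partition it into $N+1$ consecutive blocks, the first $N=\lfloor |A|/L\rfloor$ of size $L$ and a final block of size $|A|-NL\le L$; let $\Pi_1,\dots,\Pi_{N+1}$ be the associated orthogonal projectors, so that $\sum_k\Pi_k=\id_A$. For a unitary $U$ on $\mathcal{H}_A$ I set $P_k:=U\Pi_k U^{\dagger}$, which is a valid projective POVM of exactly the required ranks for every $U$. I would then bound the expectation over $U$ drawn from the Haar measure of $\sum_k p_k\Vert\rho^k_{A'B}-\tau_{A'}\otimes\rho_B\Vert_1$, since a bound on the average immediately yields one unitary $U$ — hence one choice of $\{P_k\}$ — that meets it.

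First I would pass from the trace norm to the Hilbert--Schmidt norm. For each outcome the operator $\rho^k_{A'B}-\tau_{A'}\otimes\rho_B$ is supported on a space of dimension at most $L|B|$, so $\Vert\cdot\Vert_1\le\sqrt{L|B|}\,\Vert\cdot\Vert_2$; combining this with Jensen's inequality applied to the distribution $\{p_k\}$ (concavity of the square root, $\sum_k p_k=1$) and once more to the Haar average reduces the task to controlling $\E_U\sum_k p_k\Vert\rho^k_{A'B}-\tau_{A'}\otimes\rho_B\Vert_2^2$. Expanding the square, the two terms involving $\tau_{A'}\otimes\rho_B$ collapse: writing $\omega^k:=\tr_C[(P_k\otimes\id_{BC})\ket{\psi}\bra{\psi}(P_k\otimes\id_{BC})]$ and using $\tr[(\tau_{A'}\otimes\rho_B)^2]=\tfrac1L\tr[\rho_B^2]$ together with the fact that the measurement on $A$ leaves the $B$-marginal untouched, $\sum_k\omega^k_B=\rho_B$, the cross term and the square term combine to $-\tfrac1L\tr[\rho_B^2]$. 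The problem thus comes down to the expected collision term $\E_U\sum_k p_k\tr[(\rho^k_{A'B})^2]=\E_U\sum_k\tr[(\omega^k_{A'B})^2]/p_k$.

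This last quantity is where the main work lies, and it is the step I expect to be the obstacle. The numerator $\tr[(\omega^k)^2]$ is quadratic in $\omega^k$, hence — because each $\omega^k$ already carries two factors of $P_k=U\Pi_k U^{\dagger}$ — a priori a fourth moment of the Haar measure, and it is divided by the random normalisation $p_k=\tr[P_k\rho_A]$. The clean route is to exploit the concentration of $p_k$ about its mean $\E_U p_k=L/|A|$ — via Levy's lemma, exactly as in the proof of Lemma \ref{lem:pi_close_rho} — so that $1/p_k$ may be replaced by $|A|/L$ up to a controlled error, after which only the second moment $\E_U[U\Pi_k U^{\dagger}\otimes U\Pi_k U^{\dagger}]$ is needed. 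The Schur--Weyl (Weingarten) twirl gives $\E_U[U\Pi_k U^{\dagger}\otimes U\Pi_k U^{\dagger}]=\alpha\,\id_{AA'}+\beta\,\mathbb{F}_{AA'}$, with $\alpha,\beta$ fixed by $\tr\Pi_k=\tr\Pi_k^2=L$ and $\mathbb{F}$ the swap; contracting against $\rho_{AB}\otimes\rho_{AB}$ (through $\rho_{AB}=\tr_C\ket{\psi}\bra{\psi}$) with a swap on the $B$-copies turns the $\mathbb{F}_{AA'}$ piece into $\tr[\rho_{AB}^2]$ while the $\id$ piece reproduces the decoupled value that cancels the $-\tfrac1L\tr[\rho_B^2]$ found above.

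Assembling the pieces, the excess of the collision term over its decoupled value is of order $\tr[\rho_{AB}^2]$, so that after multiplying by the prefactor $\sqrt{L|B|}$ and taking the square root one recovers the leading term $2\sqrt{L|B|\,\tr[\rho_{AB}^2]}$; the finite-dimension corrections stemming from the $\beta\,\mathbb{F}$ contribution, the fluctuations of $p_k$, and the smaller final block account for the additive $2L/|A|$. An alternative that sidesteps fourth moments entirely is the direct combinatorial estimate of the unnormalised quantities carried out in Ref. \cite{HOW07}; either way, the only genuinely delicate point is the control of the random normalisation $1/p_k$, everything else being the standard decoupling computation.
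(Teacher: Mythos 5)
First, a point of comparison: the paper contains no proof of this statement at all --- it is imported verbatim as Proposition 4 of \cite{HOW07}, so the only thing to measure your attempt against is the argument in that reference. Your skeleton coincides with it up to the collision term: Haar-rotated block projectors plus the probabilistic method, the rank bound $\Vert X\Vert_1\leq\sqrt{L|B|}\,\Vert X\Vert_2$ on the support of $P_k\otimes\id_B$, Jensen over $\{p_k\}$, and the collapse of the cross terms to $-\tfrac{1}{L}\tr(\rho_B^2)$ are all correct and standard, as is your Schur--Weyl computation of the leading term.

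The genuine gap is exactly the step you flag as delicate: replacing $1/p_k$ by $|A|/L$ via Levy's lemma. This fails in general. The function $p_k=\tr(P_k\rho_A)$ has mean $L/|A|$, and Levy gives $\Pr\left[\,|p_k-L/|A||\geq\epsilon\,\right]\lesssim e^{-c|A|\epsilon^2}$; multiplicative control of $1/p_k$ requires $\epsilon\sim\delta L/|A|$, hence a failure probability $e^{-c\delta^2L^2/|A|}$, which is of order one whenever $L\lesssim\sqrt{|A|}$. The lemma is asserted for every $L$, and in that regime the bad events destroy the bound: each bad summand can only be bounded by $\tr[(\omega^k)^2]/p_k\leq p_k\leq 1$, and there are $N+1\approx|A|/L$ of them, so their expected contribution is of order $|A|/L$ rather than $O(\sqrt{L|B|\tr(\rho_{AB}^2)})$. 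The paper itself illustrates this limitation: its own Levy-based Lemma \ref{lem:pi_close_rho} must assume $\delta|Q|\geq4\sqrt{|P_A|}$ (Eq.~(\ref{eq:assumption1})) precisely so that the fluctuations sit below the mean, and needs a separate tensoring trick to remove that assumption. The repair is the route you relegate to a closing parenthetical, and it is the one \cite{HOW07} actually takes: never divide by $p_k$. Write $\sum_k p_k\Vert\rho^k_{A'B}-\tau_{A'}\otimes\rho_B\Vert_1=\sum_k\Vert\omega^k-p_k\,\tau_{A'}\otimes\rho_B\Vert_1$ and estimate the unnormalised operators directly; since $P_k$ is idempotent, $\tr[(\omega^k)^2]=\tr[(P_k\otimes\id_B)\rho_{AB}(P_k\otimes\id_B)\rho_{AB}]$, and every term in the expansion is at most quadratic in $P_k$, so only second Haar moments are ever needed --- your ``a priori fourth moment'' worry is moot for the same reason, with or without the normalisation. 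Finally, the additive $2L/|A|$ is not a Levy-fluctuation correction: it comes from bounding the deficient last block trivially, $\E\, p_{N+1}\Vert\rho^{N+1}_{A'B}-\tau_{A'}\otimes\rho_B\Vert_1\leq 2\,\E\,p_{N+1}=2(|A|-NL)/|A|\leq 2L/|A|$.
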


\section{Single-shot State Entanglement Distillation} \label{singleshotmergingsection}

We start this appendix with a formal definition entanglement distillation. In fact we will consider the more demanding task of single-shot state merging, which not only distills EPR pairs but also transfers the state of one of the parties to the other one. Although we do not make use of this further property in the proof, we use it here in order to state the results of Ref. \cite{DBWR10} that we make use in the form they originally appeared.  
\begin{definition} \label{statemergingdef}
Consider a tripartite state $\ket{\psi}_{ABC}$. Let $A_1$ and $C_1$ be registers each of dimension $L$. A measurement defined by the POVM elements $\{ M_k \}_{k=1}^N$, with 
$M_k : A \rightarrow A_1$, and a set of isometries $\{ V_k \}_{k=1}^N$, with $V_k : C \rightarrow C_1 C' C$, define a $(L, N, \varepsilon)$-quantum state merging protocol for $\ket{\psi}_{ABC}$ with error $\varepsilon$, entanglement distillation rate $\log(L)$, and classical communication cost $\log(N)$, if 
\begin{equation}
D\left( \sum_{k=1}^N (\sqrt{M_k} \otimes \id_B \otimes V_k)\ket{\psi}_{ABC}\bra{\psi} (\sqrt{M_k} \otimes \id_B \otimes V_k )^{\cal y}, \ket{\Phi}_{A_1C_1}\bra{\Phi} \otimes \ket{\psi}_{BC'C} \bra{\psi} \right) \leq \varepsilon
\end{equation} 
\end{definition}
We note that in its most general form, the state merging protocol also makes sense when one must use pre-shared entanglement in order to transfer the $A$ part of the state to $C$. The application of the protocol in this work, however, does not concern this regime. Note also that as we already mentioned, for our application we do not need the state merging part of the protocol per se, but only the entanglement distillation part. However, it is very important that we distill entanglement with the specified classical communication cost achieved in the single-shot version \cite{DBWR10} of the state merging protocol of \cite{HOW05, HOW07}, i.e. 

\begin{lemma}[Single-Shot State Merging \cite{DBWR10}] \label{singleshotmerging}
Given a tripartite state $\ket{\psi}_{ABC}$, there is protocol for quantum state merging with 
\begin{equation}
\log(N) \leq H_{\max}^{\varepsilon}(A) - H_{\min}^{\varepsilon}(A|B) - 4 \log \varepsilon + 2 \log 13,
\end{equation}
\begin{equation}
\log(L) \leq H_{\min}^{\varepsilon}(A|B) - 4 \log \varepsilon + 2 \log 13.
\end{equation}
and error $\delta  = 13 \sqrt{\varepsilon}$.
\end{lemma}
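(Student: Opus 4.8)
The plan is to prove this via the decoupling-then-Uhlmann strategy of \cite{DBWR10}. The physical picture is: Alice applies a random projective measurement to $A$; a typical outcome leaves the post-selected state on $A_1B$ close to the product $\tau_{A_1}\otimes\rho_B$ (so $A_1$ is decoupled from the reference $B$), and because the global state is pure, decoupling forces $A_1$ to be maximally entangled with a register $C_1$ sitting inside Charlie's system $C$. Charlie then applies an outcome-dependent isometry $V_k$ to rotate this entanglement into the standard form $\ket{\Phi}_{A_1C_1}$ while reconstructing Alice's original share of the state. Before running any estimates I would first smooth: pass to a subnormalized $\tilde\rho$ with $D(\rho,\tilde\rho)\le\varepsilon$ that attains the optimal smooth entropies $H_{\max}^\varepsilon(A)$ and $H_{\min}^\varepsilon(A|B)$ simultaneously, perform the whole construction on $\tilde\rho$, and absorb the $\varepsilon$ of smoothing into the final error at the end via the triangle inequality and Lemma \ref{purifiedvstracenorm}.

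The heart of the proof is the single-shot decoupling estimate. On the effective support of $\tilde\rho_A$, which has size $2^{H_{\max}^\varepsilon(A)}$, I would place $N$ Haar-random rank-$L$ projectors $\{M_k\}$ tiling the support (as in Lemma \ref{decouplingforpoors}), so that $N\approx 2^{H_{\max}^\varepsilon(A)}/L$. The decoupling theorem bounds the outcome-averaged trace distance $\sum_k p_k\,\Vert\rho_{A_1B}^k-\tau_{A_1}\otimes\rho_B\Vert_1$ by roughly $2^{\frac12(\log L - H_{\min}^\varepsilon(A|B))}$. Choosing $\log L$ at the level of the distillation rate $H_{\min}^\varepsilon(A|B)=-H_{\max}^\varepsilon(A|C)$, but lowered by an $O(\log(1/\varepsilon))$ margin (this is the origin of the additive $-4\log\varepsilon + 2\log13$ correction), drives the averaged decoupling error below $\varepsilon$; the corresponding number of outcomes is then $\log N\approx H_{\max}^\varepsilon(A)-H_{\min}^\varepsilon(A|B)=I_{\max}^\varepsilon(A:B)$, which is exactly the advertised communication cost.

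With the averaged decoupling in hand, I would invoke Uhlmann's theorem for the purified distance (Lemma \ref{purifieddistancepurification}): for each good outcome $k$, the post-measurement pure state on $A_1BC$ is a purification of $\rho_{A_1B}^k$, and the target $\tau_{A_1}\otimes\rho_B$ has the canonical purification $\ket{\Phi}_{A_1C_1}\otimes\ket{\psi}_{BC'C}$, so Uhlmann supplies an isometry $V_k:C\to C_1C'C$ carrying the true purification to within the decoupling error of $\ket{\Phi}_{A_1C_1}\otimes\ket{\psi}_{BC'C}$. Assembling the branches into $\sum_k(\sqrt{M_k}\otimes\id_B\otimes V_k)$ and collecting the three error contributions -- the smoothing $\varepsilon$, the decoupling error, and the Fuchs--van de Graaf conversions between trace norm and purified distance used when moving in and out of Uhlmann -- yields a total purified distance of order $\sqrt\varepsilon$, which with the explicit constants gives $\delta = 13\sqrt\varepsilon$.

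The main obstacle is the decoupling estimate itself and, more precisely, expressing it in terms of the \emph{smooth} conditional min-entropy with explicit constants rather than its non-smooth or collision-entropy surrogate. This requires the second-moment (averaged Hilbert--Schmidt) computation underlying the decoupling theorem, followed by careful bookkeeping of how the smoothing $\rho\to\tilde\rho$ and the several trace-norm/purified-distance conversions accumulate into the final $13\sqrt\varepsilon$. Everything else -- the tiling of the support, the Markov step selecting good outcomes, and the Uhlmann rotation -- is routine once this sharp decoupling bound is established.
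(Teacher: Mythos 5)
You should first be aware that the paper contains no proof of Lemma \ref{singleshotmerging}: it is imported verbatim from Ref.~\cite{DBWR10}, so there is no internal argument to compare yours against. Judged as a reconstruction of the cited reference's proof, your outline does follow the same route that \cite{DBWR10} takes---a random rank-$L$ projective measurement tiling the effective support of $\rho_A$, a second-moment decoupling estimate of the form $2^{\frac12(\log L - H_{\min}^{\varepsilon}(A|B))}$, Uhlmann's theorem to produce the outcome-dependent isometries $V_k$, and error bookkeeping---and your identification of the distillation rate via the duality $H_{\min}^{\varepsilon}(A|B) = -H_{\max}^{\varepsilon}(A|C)$ and of the communication cost as $I_{\max}^{\varepsilon}(A:B)$ is consistent with how the lemma is actually invoked in the proof of Lemma \ref{bigentropiesimplycorrelations}.

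Two points, however, keep the proposal from being a proof. First, your opening smoothing step is wrong as stated: there is in general no single subnormalized $\tilde\rho$ with $D(\rho,\tilde\rho)\leq\varepsilon$ that attains $H_{\max}^{\varepsilon}(A)$ and $H_{\min}^{\varepsilon}(A|B)$ \emph{simultaneously}. The two optimizers are different states solving different optimization problems (minimal rank on $A$ versus maximal conditional min-entropy on $AB$), and they cannot be merged by fiat. The genuine argument must keep them separate---the max-entropy optimizer fixes the subspace on which the $N$ projectors are placed, while the min-entropy optimizer is substituted inside the decoupling bound---and controlling the interaction between the two smoothings is precisely where the explicit constants $13\sqrt{\varepsilon}$ and $-4\log\varepsilon + 2\log 13$ originate. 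Second, you explicitly defer the smooth-entropy decoupling estimate itself to future work, but that estimate \emph{is} the mathematical content of the lemma; the steps you do carry out (tiling of the support, Markov selection of good outcomes, the Uhlmann rotation) are the routine ones, as you yourself note. So what you have is a faithful plan for reproving the cited result rather than a proof: the one nontrivial step is acknowledged but not supplied, and the one technical shortcut you do commit to (simultaneous smoothing) is incorrect, though repairable by the standard device of smoothing the two quantities separately and tracking the cross terms.
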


\end{document}